\renewcommand*\ttdefault{txtt}
 \newtheorem{theorem}{Theorem}[section]
 \newtheorem{lemma}[theorem]{Lemma}
 \theoremstyle{definition}
 \newtheorem{example}[theorem]{Example}
\pgfplotsset{compat=1.10}
\tikzset{ 
protovertex/.style={
  draw,
  fill = blue,
  circle,
  inner sep=0,
  minimum size=.3cm}
}
\tikzset{ 
protodiam/.style={
  draw,
  fill = red,
  diamond,
  inner sep=0,
  minimum size=.3cm}
}
\tikzset{ 
protosq/.style={
  draw,
  fill = green,
  inner sep=0,
  minimum size=.3cm}
}
\title{Network Creation with Homophilic Agents}
\author{
Martin Bullinger$^1$\footnote{Contact Author}\and
Pascal Lenzner$^2$\And
Anna Melnichenko$^2$\\
\affiliations
$^1$ Technical University of Munich\\
$^2$ Hasso Plattner Institute, University of Potsdam\\
\emails
bullinge@in.tum.de,
\{pascal.lenzner, anna.melnichenko\}@hpi.de}
\newcommand{\dist}[2]{\mathrm{d}_{#1}({#2})}
\newcommand{\diam}{\mathrm{diam}} %diameter of a network
\renewcommand{\deg}[2]{\mathrm{deg}_{#1}({#2})} %vertex degree
\newcommand{\nbh}[2]{N_{#1}({#2})} %neighborhood size
\newcommand{\sta}{\mathbf{S}} %star graph
\newcommand{\dsta}{\mathbf{DS}} %double star
\newcommand{\dss}{\mathbf{DSX}} %double star exchanged centers
\newcommand{\com}{\mathbf{K}} %complete graph
\newcommand{\numb}[1]{n_{#1}} %number of agents 
\newcommand{\ag}[1]{V_{#1}} %set of agent type
\newcommand{\SP}{\textbf{s}\xspace} %strategy profile
\DeclareMathOperator{\SC}{SC} %social cost
\newcommand{\Ncost}[2]{a_{#1}({#2})}	%neighborhood cost
\newcommand{\Dcost}[2]{\dist{#1}{#2}}	%distance cost
\newcommand{\Tcost}[2]{c_{#1}({#2})}	%total agent cost
\newcommand{\Tot}{C}	%total cost
\newcommand{\frs}[2]{\mathrm{F}_{#1}({#2})} %friend set
\newcommand{\ens}[2]{\mathrm{E}_{#1}({#2})} %enemy set
\newcommand{\fr}[2]{\mathrm{f}_{#1}({#2})} %friend numb
\newcommand{\en}[2]{\mathrm{e}_{#1}({#2})} %enemy numb
\newcommand{\lsm}{\mathit{LS}} %local segregation measure
\newcommand{\gsm}{\mathit{GS}} %global segregation measure
\newcommand{\icfgame}{ICF-NCG\xspace} %increasing comfort friends games
\newcommand{\deigame}{DEI-NCG\xspace} %decreasing effort of integration games
\newcommand{\thr}{\tau} %threshold
\begin{document}

\maketitle

\begin{abstract}
Network Creation Games are an important framework for understanding the formation of real-world networks. 
These games usually assume a set of indistinguishable agents strategically buying edges at a uniform price leading to a network among them.
However, in real life, agents are heterogeneous and their relationships often display a bias towards similar agents, say of the same ethnic group. 
This homophilic behavior on the agent level can then lead to the emergent global phenomenon of social segregation.       
We initiate the study of Network Creation Games with multiple types of homophilic agents and non-uniform edge cost. Specifically, we introduce and compare two models, focusing on the perception of same-type and different-type neighboring agents, respectively. 
Despite their different initial conditions, both our theoretical and experimental analysis show that the resulting stable networks are almost identical in the two models, indicating a robust structure of social networks under homophily. Moreover, we investigate the segregation strength of the formed networks and thereby offer new insights on understanding segregation.
\end{abstract}
	
\section{Introduction}

Networks play an eminent role in today's world. 
They are crucial for our energy supply (power grid networks), our information exchange (the Internet and the World Wide Web), and our social relationships (friendship networks, email exchange, or co-author networks). There exists an abundance of approaches to provide formal frameworks for modeling networks, see, for example, the books by \citet{jackson_book}, \citet{newman_book}, and \citet{barabasi_book}. In many of these models, the nodes of the network correspond to agents that strategically create connections which is particularly suitable for our main focus of modeling social networks.    

One such stream of research considers variants of the \emph{Network Creation Game (NCG)} as proposed by \citet{FLM+03a}. There, selfish agents create edges to form a network among themselves. 
On the one hand, forming edges is costly and hence agents try to create only the most useful edges. On the other hand, the force that causes agents to form edges at all is well-connectivity within the network which is captured by a desire to occupy a central position.

The NCG is a stylized model of social interaction, providing valuable insight to agents' decision processes when interacting with each other. However, it is important to refine the basic model to spotlight specific details of this decision making. In this sense, we study network creation under the additional assumption that agents are separated into various \emph{types} that model ethnic groups or affiliations.

Our goal is to contribute a new perspective on the simple causes that lead to the segregation of a society, similar to Schelling's checkerboard model for residential segregation~\citep{Sche69a, Sche71a}. Therefore, our agents' cost functions have a bias towards the creation of relationships with agents of the same type. Specifically, we study two models based on two seemingly orthogonal treatments of other agents. In the first model, agents incur a fixed cost for every created edge and a variable cost that only depends on the number of edges towards same-type agents. In the second model, edges towards different-type agents are initially more expensive but their cost drops with an inverse linear decay. 
Both models give a different point-of-view on the same underlying principle, namely homophily of agents, i.e., the tendency to form connections with like-minded people. This is often summarized with the proverb ``birds of a feather flock together'', a dominant intrinsic force repeatedly observed in the creation of social networks, see~\citet{MSC01} for a survey on the extensive sociological research on homophily in social networks. 
While our first model expresses homophily explicitly by an increasing comfort among friends, the second model incorporates homophily indirectly by accounting for a decreasing effort of integration once first contact is established. The latter paradigm is closely related to the well-known effect in social sciences called the ``contact hypothesis'' which states that stereotypes and prejudices between ethnic groups can be weakened by intensified contact~\citep{Allport54, Amir69, DGK03}.

We measure the desirability of networks by means of stability. Since we consider social networks, we assume a bilateral model where two agents have to cooperate to  connect. Consequently, we use pairwise stability~\citep{JaWo96a} as solution concept, rather than, for instance, Nash stability which is more appropriate for unilateral models.

Interestingly, we find an almost identical structure of stable networks for both models. This reveals their close relationship, hinting at a robust structure of networks created under homophily incentives.
Naturally, a very small edge cost causes extremely high connectivity, treating agents as being indistinguishable. For moderately small edge cost, we provide characterizations of stable networks which are all highly segregated. We interpret this as identifying a sweet spot of high sensitivity towards agent types. For larger edge cost, stability causes in theory a large spectrum of networks to form with respect to segregation strength. 
We accompany this theoretical limitation with an average-case analysis by detailed simulations of a simple distributed dynamics, where agents perform improvements towards stable networks. It would be plausible if a generally high edge cost causes less distinction of agent types. While this is sometimes confirmed, we also identify contrasting tendencies towards extreme segregation. An important driver for the different behavior is the initial segregation level, indicating that segregation can be avoided by a high initial effort without constant further interaction.

\section{Related Work}

In the original NCG~\citep{FLM+03a} the cost of every edge is $\alpha >0$, where $\alpha$ is a parameter of the game that allows to adjust the tradeoff between the agents' cost for creating edges and their cost resulting from their centrality in the network, e.g., the sum of distances to all other nodes. Stable networks always exist, in particular for $\alpha < 1$ only cliques are stable whereas for $1\leq \alpha < n$ stars, other trees, and also non-tree networks can be stable~\citep{MMM15}. For $\alpha \geq n$ it is conjectured that all stable networks are trees and a recent line of works has proven this to hold for $\alpha > 3n-3$~\citep{AM17,BL20,DV21}. 
Bilateral NCGs with uniform edge price have been introduced by~\cite{CoPa05a}. 

Also NCG variants with non-uniform edge cost have been studied: a version where edges of differing quality can be bought~\citep{CMH14}, and NCGs where the edge cost depends on the node degrees~\citep{CLMM17}, on the length of the edges in a geometric setting~\citep{BFLM19}, or on the hop-distance of the endpoints~\citep{BFLLM21}. Especially the latter is also motivated by social networks and also bilateral edge formation with pairwise stability as solution concept is considered.  
The NCG variant by \citet{MMO14} that focuses on the creation of communication networks features different types of agents and different but fixed edge costs for each agent type. 

Closest to our work is the model proposed by \citet{MZ17} that is a variant of the connections model~\citep{JaWo96a} with different types of agents. Similar to our model, the cost for maintaining an inter-type connection depends on the homogeneity of the neighborhoods of the involved agents. In contrast to us, the cost for intra-type edges is fixed and the distance cost is defined differently. The authors study the existence and structure of equilibria but do not focus on investigating the segregation strength. The latter has been done by~\citet{HPZ11} using a stochastic process that starts with a randomly drawn network with nodes of different types and then edges are randomly rewired with a built-in bias towards favoring intra-type edges. As main result, the authors show that the network strongly segregates over time, even if the built-in bias is very low.  

Residential segregation has recently received a lot of attention by a stream of research developing a game-theoretic framework based on Schelling's checkerboard model~\citep{CLM18,AEGISV21,E+19,BBLM20,KKV21,BSV21,KKV21a,BBLM22}. 
There, agents of several types strategically select positions on a given \emph{fixed} network and try to optimize the number of same-type agents in their neighborhood. Also hedonic diversity games~\citep{BEI19,BE20,Darmann21} are similar.

\section{Preliminaries and Model}

We consider a set $\ag{} = \{1,\dots, \numb{}\}$ of $n$ agents partitioned into $k\ge 2$ disjoint \emph{types}. The set of types is denoted by $\mathcal T$, and for every type $T\in \mathcal T$, let $\ag{T}$ be the set of agents of type~$T$, i.e., $\ag{} = \bigcup_{T\in \mathcal T}\ag{T}$ and $\ag{T}\cap \ag{T'} = \emptyset$ for $T, T'\in \mathcal T$, with $T\neq T'$. For an agent $u\in \ag{}$, we denote by $\mathcal T(u)$ her type, i.e., $u\in \ag{\mathcal T(u)}$. Given a type $T\in \mathcal T$, then $\numb{T} = |\ag{T}|$ denotes the number of agents of type $T$. We identify types with colors and we assume that there are specific types $B$ and $R$ of \emph{blue} and \emph{red} agents, respectively, which are associated with an agent type having the smallest and largest number of agents, respectively. Thus, for every type $T\in \mathcal T$, we have $\numb{B}\le \numb{T}\le \numb{R}$. In particular, with exactly two agent types we have precisely a blue minority and a red majority type. 

In a network creation game, agents will buy edges to eventually form a network, which is an undirected graph $G = (\ag{}, E)$. Therefore, it is useful to introduce some common concepts and notation from graph theory. Consider an undirected graph $G = (\ag{}, E)$ together with vertices $u,v\in \ag{}$. We denote the (potential) edge between $u$ and $v$ by $uv$ (whether it is present or not). For two agents $u, v\in \ag{}$, the edge $uv$ is called \emph{monochromatic} if $u$ and $v$ are of the same type, and \emph{bichromatic}, otherwise. If $uv\in E$, we use the notation $G -uv = (\ag{}, E\setminus \{uv\})$, otherwise we use $G + uv = (\ag{}, E\cup \{uv\})$. Hence, $G - uv$ and $G + uv$ are the graphs obtained from $G$ by deleting or adding the edge $uv$, respectively.
Further, let $\nbh{G}{u} = \{v\in V\colon uv \in E\}$ denote the \emph{neighborhood} of $u$ in $G$, let $\deg{G}{u} = |\nbh{G}{u}|$ be the \emph{degree} of $u$ in $G$, i.e., the size of its neighborhood, and let $\dist{G}{u,v}$ be the \emph{distance} from $u$ to $v$ in $G$, i.e., the length of a shortest path from $u$ to $v$ in $G$. The \emph{diameter} of $G$ is defined as $\diam(G) = \max_{u,v\in \ag{}}\dist{G}{u,v}$, i.e., the maximum length of any shortest path in $G$. Finally, a useful measure for the centrality of a vertex in a network is its distance to a set of vertices. Given a subset $\ag{}'\subseteq \ag{}$ of vertices, let $\dist{G}{u,\ag{}'}=\sum_{v\in \ag{}'}\dist{G}{u,v}$ denote the sum of distances from $u$ to all vertices in $\ag{}'$. Also, given a subset of agents $C\subseteq \ag{}$, we denote by $G[C]$ the subgraph of $G$ \emph{induced by} $C$, i.e., $G[C] = (C,F)$, where $F = \{uv\in E\colon u,v, \in C\}$.

Before formally defining our network creation model, we introduce some relevant special types of graphs. The graph $\com_n = (\ag{},E)$ is called \emph{complete} if $E = \{uv \colon u,v,\in \ag{}\}$, i.e., all possible edges are present. Further, $\sta_n = (\ag{},E)$ is called \emph{star} if some $u\in \ag{}$ exists such that $E = \{uv\colon v\in \ag{}\setminus \{u\}\}$. We also define two networks for the special case of $2$~types. Given two agents $u\in \ag{B}$ and $v\in \ag{R}$, the network $\dsta_n = (\ag{},E)$ is called \emph{double star} if $E = uv \cup \{uw\colon w\in \ag{B}\} \cup \{vw\colon w\in \ag{R}\}$ and $\dss_n = (\ag{},E)$ is called \emph{double star with switched centers} if $E = uv \cup \{uw\colon w\in \ag{R}\} \cup \{vw\colon w\in \ag{B}\}$.
An undirected graph $G$ is called \emph{complete}, \emph{star}, \emph{double star}, or \emph{double star with exchanged centers} if it is isomorphic\footnote{Here, isomorphisms must preserve agent types, i.e., map vertices associated to blue and red agents to such vertices, respectively.} to $\com_n$, $\sta_n$, $\dsta_n$, or $\dss_n$, respectively.

\paragraph*{Network Creation Games with Homophilic Agents}

We study network creation within a cost-oriented bilateral model \`a la \citet{CoPa05a}, where the agent cost is separated into a neighborhood cost encompassing the cost of sponsoring edges and a distance cost encompassing the cost of the agents' centrality. In both of our models, a created network $G$ has a \emph{distance cost} for agent $u$ of $\Dcost{G}{u} := \dist{G}{u,\ag{}}$, i.e., the sum of agent $u$'s distances to all other agents. The neighborhood cost is different in our two models and will be specified in the definition of our network creation games.

To model the cost dependency on the types of neighbors, we define the set of same-type agents in the neighborhood of agent $u$ as
$\frs{G}{u} = \ag{T}\cap \nbh{G}{u}$, if $u \in \ag{T}$ for some type $T\in\mathcal T$.
Also, the set of other-type neighboring agents is $\ens{G}{u} = \nbh{G}{u}\setminus \frs{G}{u}$, and let the cardinalities of these sets be $\fr{G}{u} = |\frs{G}{u}|$ and $\en{G}{u} = |\ens{G}{u}|$, respectively. 

Now we define our network creation games.
A \emph{network creation game with increasing comfort among friends} (\icfgame) with cost parameter $\alpha > 0$ is a network creation game where the neighborhood cost is given by 
$$a_G^{\mathit{ICF}}(u) = \deg{G}{u}\cdot\alpha\left(1+\frac{1}{\fr{G}{u}+1}\right),$$ i.e., there is a fixed cost of $\alpha$ for every edge and an additional cost that decreases with an increasing number of friends.

A \emph{network creation game with decreasing effort of integration} (\deigame) with cost parameter $\alpha > 0$ is a network creation game where the neighborhood cost ist given by $$a_G^{\mathit{DEI}}(u) = \alpha\Bigg(\deg{G}{u} + \sum_{k=1}^{\en{G}{u}}\frac 1k \Bigg).$$ Hence, there is a fixed edge cost of $\alpha$ for every edge to an agent in the neighborhood together with a harmonically decreasing additional cost for edges towards other-type agents. Note that the sum is empty for $\en{G}{u} = 0$, and therefore, the game is identical to the single-type bilateral network creation game by \citet{CoPa05a} if $k = 1$.

For the neighborhood cost, we omit the superscript indicating the type of network creation game, whenever this is clear from the context. Also, for both of our models, we define the total cost as $\Tcost{G}{u} = \Ncost{G}{u} + \Dcost{G}{u}$. 

The cost functions mimic the two effects that we want to model, namely a general homophilic behavior via the \icfgame and diminishing prejudices with intensified contact via the \deigame.
	In both models, edge costs are in the range $[\alpha, 2\alpha]$. In the \icfgame, we assume that the cost of edges is $2\alpha$ for each edge if an agent has no friends, and the edge cost is approaching $\alpha$ when the number of neighboring friends is growing. In the \deigame, the cost for edges to friends is always $\alpha$ and the variable cost only affects other-type agents, where we approach $\alpha$ with a harmonic decay. Our cost functions represent one way to capture these ``monotonicities'', having a similar decay structure and cost range to ensure their comparability.

\paragraph*{Measures for Desirable Networks}
We analyze networks by the incentives of agents to maintain the network in terms of stability and by the diversity of their neighborhood with respect to other agent types. Following \citet{JaWo96a}, a network $G = (\ag{},E)$ is called \emph{pairwise stable} if the following two conditions are satisfied:

\begin{compactenum}
	\item[(i)] For all agents $u\in \ag{}$ and neighbors $v\in \nbh{G}{u}$, it holds that $\Tcost{G}{u} \le \Tcost{G-uv}{u}$, i.e., no agent can benefit from unilaterally severing an edge, and
	\item[(ii)] for all agents $u\in \ag{}$ and non-neighbors $v\notin \nbh{G}{u}$, it holds that $\Tcost{G}{u} \le \Tcost{G+uv}{u}$ or $\Tcost{G}{v} \le \Tcost{G+uv}{v}$, i.e., no pair of agents can bilaterally create an edge such that the individual cost for both agents decreases.
\end{compactenum}

\noindent Connectivity is an important aspect in network analysis. With multiple agent types, the internal connectivity per type deserves special consideration. Formally, a network $G= (\ag{},E)$ is called \emph{fully intra-connected} if, for every pair of same-type agents $u,v\in \ag{}$, it holds that $uv\in E$. Further, $G$ is \emph{fully connected} if $G$ is complete.

For the evaluation of diversity, we consider two segregation measures. Given a network $G = (\ag{},E)$, its \emph{local segregation}, denoted by $\lsm(G)$, is defined as the average fraction of agents of the same type, i.e., $\lsm(G) = \frac 1{|\ag{}|} \sum_{u\in \ag{}} \frac{\fr{G}{u}}{\deg{G}{u}}$. The \emph{global segregation}, called $\gsm(G)$, is the proportion of monochromatic edges, i.e., $\gsm(G) = \frac{\sum_{u\in \ag{}}\fr{G}{u}}{2|E|}$. Note that $\frac{1}{2}\sum_{u\in \ag{}}\fr{G}{u}$ is the number of monochromatic edges.\footnote{$\lsm$ and $\gsm$ are (related to) standard measures in social sciences to capture the agents' \emph{exposure}~\citep{massey1988dimensions}. $\lsm$ is used by~\cite{paolillo2018different} and $\gsm$ is used in the simulation framework Netlogo~\citep{netlogo} and by~\cite{zhang2011tipping}.}

Finally, the minimum willingness to integration by an agent can be evaluated by checking if she entertains any bichromatic edge. Therefore, we call an agent \emph{curious} if she is part of a bichromatic edge. Similarly, a type of agents is called \emph{curious} if it solely consists of curious agents. Note that this concept is related to the degree of integration, which is identical to the number of curious agents and has been studied in game-theoretic segregation models~\citep{AEGISV21}.

\section{Increasing Comfort among Friends}\label{sect:icf}
In this section we perform our theoretical analysis of the \icfgame. Unless explicitly stated otherwise, all statements hold for an arbitrary number of agent types. All missing proofs can be found in the technical appendix.

We start by gathering some statements concerning structural properties and simple pairwise stable networks. Their proof follows by a careful analysis of the cost difference after the creation and deletion of edges.

\begin{restatable}{proposition}{ICFproperties}\label{prop:ICF-properties}
	For the \icfgame the following holds:
	\begin{compactenum}
		\item If $\alpha < \frac 67$, then every pairwise stable network is fully intra-connected.\label{prop:fullintraconn}
		\item If $\alpha < \frac43$, then $\diam(G)\le 2$ for every pairwise stable network $G$. In particular, $G$ contains a curious type.% of agents.
		\label{prop:curioustype}
		\item Let $\alpha < 1$, $G$ a pairwise stable network, and $C\subseteq \ag{}$  such that every agent in $C$ is curious and $C\subseteq \ag{T}$ for some type $T\in \mathcal T$.
		Then, $G[C]$ is a clique. In particular, every curious type of agents is fully intra-connected.
		\label{prop:curiouscliques}
		\item If $\alpha\leq\frac{\numb{B}}{\numb{B}+1}$, then the complete network $\com_n$ is pairwise stable. Moreover for $\alpha < \min\{\frac{6}{7}, \frac{\numb{B}}{\numb{B}+1}\}$, $\com_n$ is the unique pairwise stable network.
		\label{prop:ICF-cliques}
		\item If $\alpha\geq 1$, then the star $\sta_n$ is pairwise stable.
		\label{prop:ICF-stars}
	\end{compactenum}
	
\end{restatable}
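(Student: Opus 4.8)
The plan is to verify the two pairwise-stability conditions directly on $\sta_n$. Let $c$ be the center, so the $n-1$ leaves each have degree $1$ and $c$ has degree $n-1$. First I would record the distance costs: $\Dcost{\sta_n}{c} = n-1$, while for any leaf $v$ we have $\Dcost{\sta_n}{v} = 1 + 2(n-2) = 2n-3$, since $v$ reaches $c$ in one hop and every other leaf in two. Types play no role in the structure, so the argument will cover an arbitrary assignment of colors to center and leaves.

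For condition~(i) I would note that $\sta_n$ is a tree, hence every edge is a bridge. Deleting any edge $cv$ disconnects the leaf $v$ and thus makes the distance cost of both $c$ and $v$ infinite, whereas the neighborhood cost stays finite. Therefore no unilateral deletion can decrease an agent's total cost, and condition~(i) holds.

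The substance is in condition~(ii). The only non-edges of $\sta_n$ join two leaves $v,w$, so I would bound the effect of adding $vw$ on $v$ (the computation for $w$ is symmetric). On the distance side, the new edge shortens the distance from $v$ to $w$ from $2$ to $1$ and changes no other distance of $v$, so the distance cost of $v$ drops by exactly $1$. On the neighborhood side, set $f = \fr{\sta_n}{v}$ and $f' = \fr{\sta_n+vw}{v}$, where $f' \in \{f, f+1\}$ according to whether $w$ shares $v$'s type. Since $v$ moves from degree $1$ to degree $2$, the change is $\Ncost{\sta_n+vw}{v} - \Ncost{\sta_n}{v} = 2\alpha\bigl(1 + \tfrac{1}{f'+1}\bigr) - \alpha\bigl(1 + \tfrac{1}{f+1}\bigr) = \alpha\bigl(1 + \tfrac{2}{f'+1} - \tfrac{1}{f+1}\bigr)$. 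From $f' \le f+1$ we get $f'+1 \le f+2 \le 2(f+1)$, hence $\tfrac{2}{f'+1} \ge \tfrac{1}{f+1}$, so the neighborhood cost grows by at least $\alpha$. Combining both sides, $\Tcost{\sta_n+vw}{v} - \Tcost{\sta_n}{v} \ge \alpha - 1 \ge 0$, so $v$ does not strictly benefit; by symmetry neither does $w$, and condition~(ii) follows.

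I do not expect a genuinely hard step here; the only point needing care is the neighborhood-cost bound, where the clean inequality $\tfrac{2}{f'+1} \ge \tfrac{1}{f+1}$ sidesteps a case distinction over the types of $c$, $v$, and $w$. This inequality also pinpoints why $\alpha \ge 1$ is exactly the right threshold: the smallest possible neighborhood-cost increase is precisely $\alpha$ (attained when $v$ and $w$ share a type different from $c$'s, so $f=0$ and $f'=1$), and this must at least offset the distance saving of $1$.
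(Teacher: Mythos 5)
Your argument addresses only item~5 of the proposition (pairwise stability of $\sta_n$ for $\alpha\ge 1$). Items~1--4 --- full intra-connectivity for $\alpha<\frac 67$, the diameter bound and curious type for $\alpha<\frac 43$, the clique structure of curious same-type agents for $\alpha<1$, and the stability/uniqueness of $\com_n$ --- are not touched at all, so as a proof of the stated proposition this is a genuine gap: four of the five claims remain unproven, and they require separate estimates on the cost change of creating monochromatic edges (in particular the bound $\frac{x+3}{x+2}-\frac{1}{(x+2)(x+1)}\le\frac 76$ over integers, which underlies items~1, 2 and~4).

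For item~5 itself, your proof is correct and takes a mildly different route from the paper. The paper runs a four-way case split over the type pattern of $c$, $v$, $w$, computing the neighborhood-cost increase ($\frac 76\alpha$, $\frac 32\alpha$, $\alpha$, or $2\alpha$) in each case. You replace this with the single identity $\Ncost{\sta_n+vw}{v}-\Ncost{\sta_n}{v}=\alpha\bigl(1+\tfrac{2}{f'+1}-\tfrac{1}{f+1}\bigr)$ and the observation $f'+1\le f+2\le 2(f+1)$, which gives the uniform lower bound of $\alpha$ and correctly identifies the tight case ($f=0$, $f'=1$). This is a cleaner packaging of the same computation; it buys brevity and makes the sharpness of the threshold $\alpha\ge 1$ transparent, while the paper's case analysis makes the individual constants visible (the $\frac 76$ and $\frac 32$ reappear in the proofs of items~1 and~2). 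Your handling of condition~(i) via the bridge/infinite-distance argument and your restriction to leaf--leaf non-edges are both fine.
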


The uniqueness in \Cref{prop:ICF-properties}(\ref{prop:ICF-cliques}) excludes the parameter range $\frac 67 \le \alpha \le \frac{\numb{B}}{\numb{B}+1}$, which can only happen for sufficiently many blue agents. In fact, there the uniqueness ceases to hold, as we show in \Cref{ex:PSN-exist-unique-small} in the appendix. 

For the existence of stable networks, we still have to consider the intermediate parameter range $\frac{\numb{B}}{\numb{B}+1}<\alpha<1$. We provide the construction for two agent types. The general case is covered in the appendix.

\begin{restatable}{proposition}{ExPSNintermed}\label{prop:exist-PSN-intermed}
	In the \icfgame, there exists a pairwise stable network for every $\frac{\numb{B}}{\numb{B}+1} \le \alpha < 1$.
\end{restatable}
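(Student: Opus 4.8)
The plan is to prove existence constructively, by exhibiting an explicit pairwise stable network. Since \Cref{prop:ICF-properties}(\ref{prop:ICF-cliques}) gives stability of the complete network up to $\alpha=\frac{\numb{B}}{\numb{B}+1}$ and \Cref{prop:ICF-properties}(\ref{prop:ICF-stars}) gives the star from $\alpha=1$ onward, I would search for a family interpolating between these extremes. The clean template I would start from is: let the blue agents form a clique, let the red agents form a clique, and add a matching connecting every blue agent to a distinct red agent (possible since $\numb{B}\le\numb{R}$), leaving $\numb{R}-\numb{B}$ ``extra'' red agents inside the red clique. One first checks that this network has diameter $2$ (a blue reaches every unmatched red through its matched partner and the red clique, and reaches a foreign matched red through the blue clique), which is necessary by \Cref{prop:ICF-properties}(\ref{prop:curioustype}).

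The verification splits into the three conditions of pairwise stability, and the point is that the two bounds $\frac{\numb{B}}{\numb{B}+1}\le\alpha<1$ are exactly what make each check work. \emph{Deletion of a monochromatic edge:} a matched agent is curious, so deleting an intra-type edge lowers its neighborhood cost by only $\alpha$ while raising its distance cost by $1$, which is unprofitable precisely because $\alpha<1$. \emph{Deletion of a bichromatic edge:} by construction a matched pair $b,r$ has \emph{no common neighbor} (the only blue neighbor of $r$ is $b$ and vice versa), so deleting $br$ raises the distance by at least $2$, dominating the neighborhood saving of $\alpha\left(1+\frac{1}{\numb{B}+1}\right)<2$. \emph{Addition:} the only non-edges are bichromatic pairs $br'$, and adding such an edge raises the blue endpoint's neighborhood cost by $\alpha\frac{\numb{B}+1}{\numb{B}}$ while saving exactly one unit of distance, so the minority (blue) endpoint does not benefit as soon as $\alpha\ge\frac{\numb{B}}{\numb{B}+1}$; since a bilateral addition requires \emph{both} endpoints to benefit, this blocks it. Thus the lower bound of the proposition is used exactly here, on the minority side.

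The obstacle, and the reason this is stated separately from \Cref{prop:ICF-properties}, is the regime where $\alpha$ is close to $1$ and $\numb{R}>\numb{B}$. There the template breaks: an extra red agent has no bichromatic edge, and for $\alpha>\frac{\numb{R}(\numb{R}-1)}{\numb{R}(\numb{R}-1)+1}$ it strictly profits from deleting a red edge, since the red clique supplies a common neighbor and the deletion costs only one unit of distance. Because $\frac{\numb{R}(\numb{R}-1)}{\numb{R}(\numb{R}-1)+1}\ge\frac67$, this only occurs once $\alpha\ge\frac67$, where \Cref{prop:ICF-properties}(\ref{prop:fullintraconn}) no longer forces the red type to be a clique, leaving room to repair. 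I would replace the red clique by a sparser diameter-$2$ graph on the red agents that is itself free of profitable deletions and additions, while keeping the blue clique and arranging the blue--red connections so that diameter $2$ is preserved and every bichromatic edge stays common-neighbor-free. This yields a case distinction on the magnitude of $\alpha$ (denser red substructures for small $\alpha$, sparser ones approaching a tree as $\alpha\to1$), again mirroring the interpolation from clique to star.

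The hard part is precisely this last regime, because the three requirements pull against each other: unprofitable intra-type deletions force the same-type substructures to be \emph{sparse}, unprofitable bichromatic deletions force the inter-type edges to be \emph{bridge-like}, and unprofitable additions force every agent to retain \emph{enough} same-type friends, all while the diameter must remain $2$. Reconciling these constraints for arbitrary $\numb{B}\le\numb{R}$ is what makes the argument delicate; once a suitable network is fixed, the remaining effort is the routine but lengthy bookkeeping of the deletion and addition conditions over all relevant pairs.
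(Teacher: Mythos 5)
Your template (two intra-type cliques joined by a matching covering the blue side) and its verification coincide with the paper's construction for the lower part of the range, and your diagnosis of where it breaks --- an unmatched red agent profitably severs a red--red edge once $\alpha>\frac{\numb{R}(\numb{R}-1)}{\numb{R}(\numb{R}-1)+1}$ --- is correct. The genuine gap is that you never produce the network for the remaining regime, which is the only nontrivial part of the proposition: ``replace the red clique by a sparser diameter-$2$ graph that is itself free of profitable deletions and additions'' restates the problem rather than solving it, and your guiding heuristic (sparsify the red side, approaching a tree as $\alpha\to 1$) points away from what actually works. The paper's fix is minimal and not sparse at all: keep the $\numb{B}$ matched red agents as a clique, keep every unmatched red adjacent to \emph{all} matched reds, and delete only the edges among the unmatched reds once $\alpha\ge\tau=\frac{\numb{B}(\numb{B}+1)}{\numb{B}(\numb{B}+1)+1}$. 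Stability of the surviving edge from an unmatched red $r_j$ to a matched red $r_i$ does not come from sparsity or bridge-likeness of the red subgraph but from the fact that severing $r_jr_i$ increases $r_j$'s distance to \emph{two} vertices ($r_i$ and its blue partner $b_i$), so a saving of less than $2\alpha<2$ does not pay off. Moreover, the deleted edges among unmatched reds create red--red non-edges whose addition you do not address (your proposal asserts the only non-edges are bichromatic); blocking those additions is exactly the computation $\alpha\left(1+\frac{1}{\numb{B}(\numb{B}+1)}\right)\ge 1$ that forces the threshold $\tau$ and the case distinction in the paper.

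A second omission: the proposition is stated for an arbitrary number of agent types, and the paper's general construction requires a cascade of thresholds $\tau_2\le\dots\le\tau_{k-1}<\tau$ that determine, for each intermediate type, whether its agents maintain one or many bichromatic edges to larger types. Your sketch is confined to $k=2$ and gives no indication of how the ``common-neighbor-free'' requirement on bichromatic edges is reconciled with agents of intermediate types holding several such edges for small $\alpha$. In sum, the proposal correctly establishes existence for $\frac{\numb{B}}{\numb{B}+1}\le\alpha\le\frac{\numb{R}(\numb{R}-1)}{\numb{R}(\numb{R}-1)+1}$ with two types, but the upper end of the range and the multi-type case --- where the actual content of the proposition lies --- remain unproved.
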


\begin{proof}[Construction for two agent types]
	Consider an instance of the \icfgame and let $\frac{\numb{B}}{\numb{B}+1} \le \alpha < 1$. We will define a stable network for $\alpha$ dependent on the threshold $\tau = \frac {\numb{B}(\numb{B} + 1)}{\numb{B}(\numb{B}+1)+1}$. Note that $\frac{\numb{B}}{\numb{B}+1} < \tau <1$, as $\numb{B}(\numb{B} + 1) > \numb{B}$.
	
	We assume $\ag{B} = \{b_1,\dots, b_{\numb{B}}\}$ and $\ag{R} = \{r_1,\dots, r_{\numb{R}}\}$ and define the edge set of the graph $G = (\ag{},E)$ as follows:
	
	\begin{compactitem}
		\item $\{x_i,x_j\}\in E$, for $x\in \{b,r\}, i,j\in \{1,\dots, \numb{B}\}$,
		\item $\{r_i,b_i\}\in E$, for $i\in \{1,\dots, \numb{B}\}$,
		\item $\{r_i, r_j\}\in E$, for $i\in \{1,\dots, \numb{B}\}$ and $j\in \{\numb{B}+1,\dots, \numb{R}\}$,
		\item if $\alpha < \tau$, then $\{r_i, r_j\}\in E$, for $i,j\in \{\numb{B}+1,\dots, \numb{R}\}$, and no further edges are in $E$;\\ otherwise, no further edges are in $E$.
	\end{compactitem} 
	The two cases for the network $G$ are illustrated in \Cref{fig:PSN-exist-intermed}. They can be shown to be pairwise stable for their respective parameter range. 
\end{proof}
	\begin{figure}
		\centering
		\begin{tikzpicture}[scale = .7]
		\node[protovertex] (b2) at (0.5,0) {};
		\node[protovertex] (b1) at (108:1) {};
		\node[protovertex] (b3) at (252:1) {};

		\node (c) at (3,0){};

		\node[protodiam] (r4) at ($(c) + (36:1)$) {};
		\node[protodiam] (r1) at ($(c) + (108:1)$) {};
		\node[protodiam] (r2) at ($(c) + (180:1)$) {};
		\node[protodiam] (r3) at ($(c) + (252:1)$) {};
		\node[protodiam] (r5) at ($(c) + (324:1)$) {};

		\foreach \i in {1,2,3}
		{\draw (b\i) edge (r\i);
			\draw (r4) edge (r\i);
			\draw (r5) edge (r\i);
		}
		\foreach \x/\y in {1/2,2/3,3/1}
		{\draw  (b\x) edge (b\y);
			\draw  (r\x) edge (r\y);}
		\draw (r4) edge (r5);
		\end{tikzpicture}
		\qquad
		\begin{tikzpicture}[scale = .7]
		\node[protovertex] (b2) at (0.5,0) {};
		\node[protovertex] (b1) at (108:1) {};
		\node[protovertex] (b3) at (252:1) {};

		\node (c) at (3,0){};

		\node[protodiam] (r4) at ($(c) + (36:1)$) {};
		\node[protodiam] (r1) at ($(c) + (108:1)$) {};
		\node[protodiam] (r2) at ($(c) + (180:1)$) {};
		\node[protodiam] (r3) at ($(c) + (252:1)$) {};
		\node[protodiam] (r5) at ($(c) + (324:1)$) {};

		\foreach \i in {1,2,3}
		{\draw (b\i) edge (r\i);
			\draw (r4) edge (r\i);
			\draw (r5) edge (r\i);
		}
		\foreach \x/\y in {1/2,2/3,3/1}
		{\draw  (b\x) edge (b\y);
			\draw  (r\x) edge (r\y);}
		\end{tikzpicture}
		\caption{Pairwise stable networks for $\frac{\numb{B}}{\numb{B}+1} \le \alpha < \tau$ (left) and $\tau \le \alpha < 1$ (right).}
		\label{fig:PSN-exist-intermed}
	\end{figure}
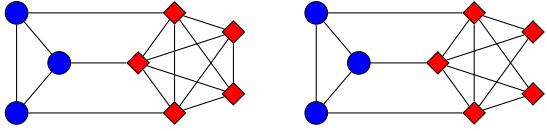

Interestingly, the stable networks constructed in the previous proof give an almost full characterization of stable networks for the considered range of edge costs when $k = 2$.

\begin{restatable}{theorem}{UniquePSNintermed}\label{thm:unique-PSN-intermed}
	Consider the \icfgame with parameter $\alpha$ and $k = 2$ agent types. Let $\frac{\numb{R}}{\numb{R}+1} < \alpha < 1$ and assume that $G$ is pairwise stable.  
	Then, the blue agents are fully intra-connected, the bichromatic edges form a matching of size $\numb{B}$, and curious red agents are connected to all other red agents.
\end{restatable}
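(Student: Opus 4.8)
The plan is to combine two consequences of the hypothesis $\alpha>\frac{\numb{R}}{\numb{R}+1}$ with \Cref{prop:ICF-properties}. Since $\alpha<\frac43$, \Cref{prop:ICF-properties}(\ref{prop:curioustype}) gives $\diam(G)\le 2$, so $G$ is connected and $\Dcost{G}{u}=2(\numb{}-1)-\deg{G}{u}$ for every agent $u$; hence adding an edge at $u$ lowers its distance cost by exactly $1$ (the new neighbour moves from distance $2$ to $1$, and in a diameter-$2$ graph no other distance can drop), while deleting an edge raises it by exactly $1$ unless some vertex is pushed to distance $\ge 3$. A short expansion of $\Ncost{G}{u}$ shows that inserting or removing a \emph{bichromatic} edge at an agent with $f$ same-type neighbours changes the neighbourhood cost by $\alpha\frac{f+2}{f+1}$. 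As every agent has at most $\numb{R}-1$ same-type neighbours and $\alpha>\frac{\numb{R}}{\numb{R}+1}$, this quantity always exceeds $1$. This yields two pillars: \textup{(P1)} no agent can profit from \emph{adding} a bichromatic edge, and \textup{(P2)} \emph{deleting} a bichromatic edge saves strictly more than $1$ in neighbourhood cost, so in a stable network every bichromatic edge is \emph{essential at each endpoint} -- its removal must push some vertex to distance $\ge 3$ from that endpoint.

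First I would show the blue type is fully curious. If not, then by \Cref{prop:ICF-properties}(\ref{prop:curioustype}) the red type is curious, so by \Cref{prop:ICF-properties}(\ref{prop:curiouscliques}) the reds form a clique and each red is incident to a bichromatic edge, giving at least $\numb{R}$ bichromatic edges. None is at the non-curious blue agent, so they crowd onto at most $\numb{B}-1<\numb{R}$ blues and some blue $b$ gets two red neighbours $r_1,r_2$ (the case $\numb{B}=1$ is immediate). Being curious reds, $r_1,r_2$ are adjacent, so $r_2$ is a common neighbour of $b$ and $r_1$; by \textup{(P2)} deleting $br_1$ would then need a vertex $x\notin\nbh{G}{b}$ whose unique common neighbour with $b$ is $r_1$. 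But a red such $x$ is curious, hence adjacent to $r_2\in\nbh{G}{b}$ (a forbidden second common neighbour), and a blue such $x$ is curious, hence adjacent to $b$ by \Cref{prop:ICF-properties}(\ref{prop:curiouscliques}) (contradicting $x\notin\nbh{G}{b}$). Thus $b$ would profitably drop $br_1$, a contradiction. So all blue agents are curious, and \Cref{prop:ICF-properties}(\ref{prop:curiouscliques}) makes them a clique, giving full intra-connectivity.

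With blue fully curious I would prove no agent carries two bichromatic edges, which together with curiosity forces these edges to be a perfect matching on the blue side, of size $\numb{B}$. For a red $r$ with blue neighbours $b_1,b_2$ (adjacent curious blues) I let $r$ delete $rb_1$: essentiality needs $x\notin\nbh{G}{r}$ with unique common neighbour $b_1$, but a red $x$ is curious hence adjacent to $r$, and a blue $x$ is (all blues being curious) adjacent to $b_2\in\nbh{G}{r}$ by \Cref{prop:ICF-properties}(\ref{prop:curiouscliques}); either way a contradiction. For a blue $b$ with red neighbours $r_1,r_2$ (adjacent curious reds) I instead let the \emph{red} endpoint $r_1$ delete $r_1b$: a red private connector is curious hence adjacent to $r_1$, and a blue private connector, being curious, owns a red neighbour $\rho\neq r_1$ which by \Cref{prop:ICF-properties}(\ref{prop:curiouscliques}) is adjacent to $r_1$, a forbidden second common neighbour. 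This last point is the main obstacle: letting $b$ itself delete the edge fails, since a \emph{non-curious} red can legitimately serve as $b$'s private connector, and the naive covering estimate only closes the argument for $\alpha$ bounded away from $1$. The clean resolution is to exploit the deletion incentive of the \emph{red} endpoint instead, where the clique of curious reds leaves no room for any private connector, so the argument goes through for the whole range $\frac{\numb{R}}{\numb{R}+1}<\alpha<1$.

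Finally, the connectivity of curious reds follows from the matching. Two curious reds are adjacent by \Cref{prop:ICF-properties}(\ref{prop:curiouscliques}); and a non-curious red $x$ must reach each blue $b'$ within distance $2$, necessarily through $b'$'s unique matched red neighbour, so $x$ is adjacent to every matched -- that is, every curious -- red. Hence each curious red is adjacent to all other reds, completing the three claims.
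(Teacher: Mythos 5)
Your proof is correct and covers the whole range $\frac{\numb{R}}{\numb{R}+1}<\alpha<1$; it rests on the same machinery as the paper's proof --- \Cref{prop:ICF-properties}(\ref{prop:curioustype}) and (\ref{prop:curiouscliques}), the fact that severing a bichromatic edge saves strictly more than $1$ in neighbourhood cost in this parameter range, and a careful audit of which distances can grow in a diameter-$2$ network --- but it decomposes the argument differently. The paper first proves that the bichromatic edges form a matching (whenever an agent of the curious type has two bichromatic edges, or two such agents share a bichromatic neighbour, the endpoint of the \emph{other} type severs one) and only afterwards identifies the curious type: the matching covers it, so it must be a minority type, hence blue. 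You invert the order: you first pin down blue as the curious type by a pigeonhole argument (if red were curious, at least $\numb{R}$ bichromatic edges would crowd onto at most $\numb{B}-1$ curious blue agents, yielding a blue agent with two red neighbours who can profitably shed one), and only then derive the matching. Your ``private connector'' criterion --- a deletion is tolerable for an endpoint only if some vertex has the far endpoint as its unique common neighbour with it --- makes the distance accounting fully explicit and reusable across all the deletion arguments, where the paper argues the same point more directly from the clique structure of curious agents. You also correctly flag the one delicate choice: a blue agent with two red neighbours must be attacked from the red side, because a non-curious red agent could serve as a private connector for the blue endpoint; the paper makes the same choice implicitly by always letting the agent of the non-curious type sever, without saying why. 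The final step, that curious red agents must reach non-curious red agents directly because a blue agent's only route to a non-curious red runs through its matched partner, is identical in both proofs.
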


\begin{proof}[Proof sketch]
	Let $\frac{\numb{R}}{\numb{R}+1} < \alpha < 1$ and assume that $G$ is pairwise stable network in the \icfgame with cost parameter $\alpha$. By \Cref{prop:ICF-properties}(\ref{prop:curioustype}), the diameter of $G$ is bounded by $2$ and there exists a curious type of agents. By \Cref{prop:ICF-properties}(\ref{prop:curiouscliques}), the curious type of agents forms a clique $C$ and the curious agents of the other type form a clique as well.

	Now, it can be shown that the bichromatic edges form a matching by proving that any agent incident to two such edges can sever one of them. Therefore, only a minority type can be a curious type and we can conclude that the blue agents are fully intra-connected and that the matching of bichromatic edges is of size $\numb{B}$. It remains to show that all curious red agents maintain edges with non-curious red agents. Assume that $y$ is a curious red agent forming a bichromatic edge to the blue agent $x$ and that there is no edge to a non-curious red agent $z$, i.e., $yz$ is not present in $G$. But then, $\dist{G}{x,z}\ge 3$, contradicting \Cref{prop:ICF-properties}(\ref{prop:curioustype}).
\end{proof}

\begin{example}\label{ex:PSN-exist-intermed-small}
	The characterization encountered in \Cref{thm:unique-PSN-intermed} does not cover the whole range of \Cref{prop:exist-PSN-intermed}. In fact, it does not hold for $\frac{\numb{B}}{\numb{B} + 1}\le \alpha \le \frac{\numb{R}}{\numb{R} + 1}$. Hence, further pairwise stable networks exist. Assume that $\numb{R}\ge 2$ and let $r^*\in \ag{R}$. Consider the network $G=(\ag{},E)$, where $E = \{\{v,w\}\colon v,w\in \ag{R}\}\cup \{\{v,w\}\colon v,w\in \ag{B}\}\cup \{\{v,r^*\}\colon v\in \ag{B}\}$, i.e., the network is fully intra-connected and there is a special agent $r^*$ to which all blue agents are connected. The structure of this network is depicted in \Cref{fig:PSN-exist-intermed-small}.	
	It is straightforward to check that the network is pairwise stable. \hfill $\lhd$
\end{example}
	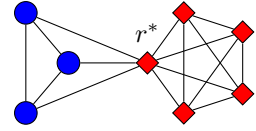
\begin{figure}
		\centering
		\begin{tikzpicture}[scale = .7]
		\node[protovertex] (b2) at (.5,0) {};
		\node[protovertex] (b1) at (108:1) {};
		\node[protovertex] (b3) at (252:1) {};
		
		\node (c) at (3,0){};
		
		\node[protodiam] (r4) at ($(c) + (36:1)$) {};
		\node[protodiam] (r1) at ($(c) + (108:1)$) {};
		\node[protodiam, label = $r^*$] (r2) at ($(c) + (180:1)$) {};
		\node[protodiam] (r3) at ($(c) + (252:1)$) {};
		\node[protodiam] (r5) at ($(c) + (324:1)$) {};
		
		\foreach \i in {1,2,3}
		{\draw (b\i) edge (r2);
			\draw (r4) edge (r\i);
			\draw (r5) edge (r\i);
		}
		\foreach \x/\y in {1/2,2/3,3/1}
		{\draw  (b\x) edge (b\y);
			\draw  (r\x) edge (r\y);}
		\draw (r4) edge (r5);
		\end{tikzpicture}
		
		\caption{Pairwise stable network for $\frac{\numb{B}}{\numb{B}+1} \le \alpha \le \frac{\numb{R}}{\numb{R} + 1}$ with $\numb{B} =3$ and $\numb{R} = 5$ blue and red agents, respectively.}
		\label{fig:PSN-exist-intermed-small}
	\end{figure}

Until now, we set our focus on the existence of pairwise stable networks. In the remainder of the section, we want to consider the segregation of pairwise stable networks. 
First, \Cref{thm:unique-PSN-intermed} yields very high segregation for $\frac{\numb{R}}{\numb{R}+1} < \alpha < 1$.

\begin{restatable}{corollary}{SegregationICFintermed}\label{cor:SegrICFintermed}
	Consider the \icfgame with parameter $\alpha$ and $k = 2$ agent types. Let $\frac{\numb{R}}{\numb{R}+1} < \alpha < 1$ and assume that $G$ is pairwise stable. Then, $\gsm(G) \ge 1 - \frac 1n$ and $\lsm(G) \ge 1 - \frac 2n$.
\end{restatable}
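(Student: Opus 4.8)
The plan is to read off the structure of $G$ directly from \Cref{thm:unique-PSN-intermed} and then reduce everything to a counting argument. That theorem tells us that, for $k=2$ and $\frac{\numb{R}}{\numb{R}+1}<\alpha<1$, a pairwise stable $G$ consists of a blue clique, a matching of exactly $\numb{B}$ bichromatic edges saturating the blue agents, and every curious red agent (there are precisely $\numb{B}$ of them, namely the red endpoints of the matching) being adjacent to all remaining red agents. The decisive consequence is that $G$ has exactly $\numb{B}$ bichromatic edges, and every other edge is monochromatic. This single fact, together with degree information, drives both bounds.

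For the global segregation I would start from the identity $\gsm(G)=1-\frac{\numb{B}}{|E|}$, which holds because $\gsm(G)$ is the fraction of monochromatic edges and the $\numb{B}$ matching edges are the only bichromatic ones. It then remains only to lower-bound $|E|$. Using the degree data guaranteed by the structure, each blue agent has degree $\numb{B}$ (the $\numb{B}-1$ other blue agents plus its matched partner), each curious red agent has degree $\numb{R}$ (one blue neighbor plus all $\numb{R}-1$ other red agents), and each remaining red agent is adjacent to at least the $\numb{B}$ curious red agents. A handshake count over $\sum_{u}\deg{G}{u}$ then yields a lower bound on $|E|$, from which the claimed bound on $\gsm(G)$ follows.

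For the local segregation I would evaluate $\frac{\fr{G}{u}}{\deg{G}{u}}$ separately for the three classes of agents. A blue agent contributes ratio $1-\frac{1}{\numb{B}}$, a curious red agent contributes $1-\frac{1}{\numb{R}}$, and a non-curious red agent has only red neighbors and so contributes exactly $1$. Summing these with multiplicities $\numb{B}$, $\numb{B}$, and $\numb{R}-\numb{B}$ respectively and dividing by $n$ gives a closed form for $\lsm(G)$; the final inequality then follows by bounding the error term using $\numb{B}\le\numb{R}$, which is exactly the minority/majority relation built into the definitions of $B$ and $R$.

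The main obstacle is that the structure theorem does not pin down the edges among the non-curious red agents, so these must be handled by inequalities rather than exact counts. For the local measure this is harmless: adding red–red edges incident to a non-curious red agent leaves its ratio equal to $1$ and does not affect any other agent's ratio, so $\lsm(G)$ is in fact independent of this freedom. For the global measure the subtlety is that $\gsm(G)=1-\frac{\numb{B}}{|E|}$ is minimized by the sparsest admissible red subgraph; hence the careful step is to verify that the lower bound on $|E|$ obtained from only the \emph{guaranteed} edges (blue clique, matching, and the complete join from curious to all red agents) is already strong enough to deliver the stated estimate, so that no assumption on the undetermined red–red edges is needed.
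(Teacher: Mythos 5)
Your treatment of the local segregation measure is correct and coincides with the paper's proof: the same three-way case distinction giving ratios $\frac{\numb{B}-1}{\numb{B}}$ for blue agents, $\frac{\numb{R}-1}{\numb{R}}$ for curious red agents, and $1$ for the rest, the same summation to $\frac1n\left(n-1-\frac{\numb{B}}{\numb{R}}\right)$, and the same use of $\numb{B}\le\numb{R}$ to conclude $\lsm(G)\ge 1-\frac2n$. You are also right that the freedom in the red--red edges among non-curious red agents is irrelevant for $\lsm$.

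The global bound is where your argument has a genuine gap, and it sits exactly at the step you flag and then wave through. The edges guaranteed by \Cref{thm:unique-PSN-intermed} are the blue clique, the matching, and all edges from the $\numb{B}$ curious red agents to the other red agents; your own degree data (blue degree $\numb{B}$, curious red degree $\numb{R}$, non-curious red degree at least $\numb{B}$) gives via the handshake lemma only $2|E|\ge \numb{B}\cdot\numb{B}+\numb{B}\cdot\numb{R}+(\numb{R}-\numb{B})\cdot\numb{B}=2\numb{B}\numb{R}$, i.e., $|E|\ge\numb{B}\numb{R}$. From this you get $\gsm(G)\ge 1-\frac{1}{\numb{R}}\ge 1-\frac2n$, but never $1-\frac1n$: that would require $|E|\ge\numb{B}n$, and no further edges are forced by the characterization. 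Indeed, the right-hand network of \Cref{fig:PSN-exist-intermed} (with $\numb{B}=3$, $\numb{R}=5$, pairwise stable for $\tau\le\alpha<1$, a subrange of the corollary's range) has exactly $\numb{B}\numb{R}=15$ edges, of which $\numb{B}=3$ are bichromatic, so $\gsm=\frac{12}{15}=1-\frac1{\numb{R}}<1-\frac1n=\frac78$. Hence ``the claimed bound on $\gsm(G)$ follows'' is not true; only the weaker bound $1-\frac2n$ follows. For comparison, the paper's own proof asserts $|E|\ge\numb{B}+2\binom{\numb{B}}{2}+\numb{B}(n-\numb{B})=\numb{B}n$, a count that double-counts the matching and the curious-red clique; your more careful accounting actually exposes that the constant in the stated global bound cannot be extracted from the characterization, so the honest conclusion of your route is $\gsm(G)\ge 1-\frac2n$ rather than the claimed $1-\frac1n$.
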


Hence, we know that segregation is low for sufficiently low parameter $\alpha$, where cliques are (uniquely) pairwise stable. Then, there is a transition at $\alpha = \frac{\numb{R}}{\numb{R}+1}$, where segregation is provably high regardless of further parameters like the distribution of agents into types. Once, the cost parameter increases to $\alpha \ge 1$, the picture becomes less clear. Stars can have very high and very low segregation.

\begin{restatable}{proposition}{IFCstarSeg}
	Consider the \icfgame with parameter $\alpha \ge 1$. Then, for every $n\ge 2$, there exist pairwise stable networks $G$ and $G'$ on $n$ nodes such that $\gsm(G) = \lsm(G) = 1$ and
	$\gsm(G') = \lsm(G') = \frac 1{n-1}$.
\end{restatable}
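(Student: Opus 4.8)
The plan is to exhibit both extremes using stars, which \Cref{prop:ICF-properties}(\ref{prop:ICF-stars}) guarantees to be pairwise stable for every $\alpha\ge 1$, and to vary only the coloring of the leaves. Recall that $\sta_n$ has one center, $n-1$ leaves, and exactly $n-1$ edges, that the center has degree $n-1$ while every leaf has degree $1$, and that an edge is monochromatic precisely when its leaf shares the type of the center. Before fixing colors, I would observe that stability of the star is coloring-independent: severing any edge isolates a leaf and hence makes the distance cost infinite, so condition~(i) never binds, while adding a leaf--leaf edge saves exactly one unit of distance but raises the neighborhood cost of each endpoint by at least $\alpha\ge 1$ (the degree goes up by one and the factor $1+\tfrac{1}{\fr{G}{u}+1}$ can only shrink), so neither endpoint profits and condition~(ii) holds. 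This is exactly the argument behind \Cref{prop:ICF-properties}(\ref{prop:ICF-stars}). Thus I am free to color the leaves as I please, and the task reduces to choosing two colorings and evaluating $\gsm$ and $\lsm$.

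For $G$ I would take the \emph{monochromatic} star, placing all $n$ agents in one type. Then all $n-1$ edges are monochromatic, giving $\gsm(G)=1$, and every agent sees only same-type neighbors, so $\frac{\fr{G}{u}}{\deg{G}{u}}=1$ for all $u$ and therefore $\lsm(G)=1$. This single-color choice is essentially forced: $\gsm(G)=1$ means every edge is monochromatic, and a connected graph with only monochromatic edges must use a single type; a graph that instead splits into color classes is disconnected, and for two types it then has two components, so adding an edge between them strictly benefits both endpoints and it is not stable. The monochromatic star is connected and stable, so it is the clean witness for the maximal value.

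For $G'$ I would keep the star but assign the center and exactly one leaf to one type and the remaining $n-2$ leaves to a second type. Now exactly one edge is monochromatic, so $\gsm(G')=\frac{1}{n-1}$. For the local measure the center contributes $\frac{1}{n-1}$ (one same-type leaf among its $n-1$ neighbors), the unique same-type leaf contributes $1$, and each of the other $n-2$ leaves contributes $0$; averaging over the $n$ agents yields $\lsm(G')=\frac{1}{n}\bigl(\frac{1}{n-1}+1\bigr)=\frac{1}{n}\cdot\frac{n}{n-1}=\frac{1}{n-1}$, which coincides with $\gsm(G')$.

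I expect the arithmetic to be routine; the two points deserving attention are the coloring-independence of the star's stability (verified as in \Cref{prop:ICF-properties}(\ref{prop:ICF-stars}), by checking that the neighborhood-cost increase of at least $\alpha$ outweighs the single unit of distance saved) and the observation that full segregation cannot be realized by a genuinely multi-type stable network, which is why the maximally segregated witness $G$ must be monochromatic. The cancellation making $\lsm(G')=\gsm(G')=\frac{1}{n-1}$ is the only mildly delicate computation.
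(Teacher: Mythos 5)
Your proposal is correct and matches the paper's own proof: both witnesses are stars whose stability follows from \Cref{prop:ICF-properties}(\ref{prop:ICF-stars}), with $G$ the single-type star and $G'$ the star having exactly two agents (the center and one leaf) of one type and $n-2$ of the other, and your computation of $\lsm(G')=\gsm(G')=\frac{1}{n-1}$ is the same routine check. The extra material you include (re-deriving the star's stability and noting that a fully segregated stable witness must be monochromatic) is sound but not needed beyond what the cited proposition already provides.
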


The networks in the previous proposition have the drawback that we need to fix the exact numbers of agents of each type to obtain the desired segregation. By contrast, for $\alpha\ge \frac 43$, the double star is always highly segregated.

\begin{restatable}{proposition}{DSstability}\label{thm:DS-stability}
	 Consider the \icfgame with $\alpha \ge \frac 43$. Then, the double star $\dsta_n$ is a pairwise stable network with $\gsm(\dsta_n) = 1 - \frac 1{\numb{}-1}$ and $\lsm(\dsta_n) \ge 1 - \frac 2 {\numb{}}$.
\end{restatable}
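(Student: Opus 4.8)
The plan is to split the claim into the two segregation identities, which follow by direct counting, and the pairwise-stability assertion, which requires checking the two defining conditions against the \icfgame cost function.

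For the segregation values I would first observe that $\dsta_n$ is a tree on $\numb{}$ vertices, hence has $\numb{}-1$ edges, of which only the center-to-center edge is bichromatic. Thus there are $\numb{}-2$ monochromatic edges and $\gsm(\dsta_n) = \frac{\numb{}-2}{\numb{}-1} = 1 - \frac{1}{\numb{}-1}$. For $\lsm$, each of the $\numb{}-2$ leaves has its unique neighbor in its own color class and thus contributes a ratio of $1$ to the sum, while the blue and red centers contribute $\frac{\numb{B}-1}{\numb{B}}$ and $\frac{\numb{R}-1}{\numb{R}}$, respectively. Summing and dividing by $\numb{}$ gives $\lsm(\dsta_n) = 1 - \frac{1}{\numb{}}\bigl(\frac{1}{\numb{B}} + \frac{1}{\numb{R}}\bigr)$, and since $\numb{B},\numb{R}\ge 1$ the parenthesized term is at most $2$, yielding $\lsm(\dsta_n)\ge 1 - \frac{2}{\numb{}}$.

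Pairwise-stability condition (i) is immediate: $\dsta_n$ is a tree, so every edge is a bridge, and severing any edge disconnects the network and raises the deleting agent's distance cost to infinity. Hence no deletion is beneficial, for any $\alpha$. The substance is condition (ii), where I would enumerate the non-edges by the types of their endpoints: (a) two same-color leaves, (b) two different-color leaves, (c) a blue leaf and the red center, and (d) a red leaf and the blue center. For each I compute the change in total cost of adding the edge for a given endpoint, splitting it into the increase in neighborhood cost---which, from the \icfgame formula, depends on that endpoint's current degree and friend count and on whether the new edge is monochromatic---and the decrease in distance cost obtained by carefully tracking which pairwise distances actually shorten. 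For (a), a leaf's neighborhood cost rises by $\frac{7\alpha}{6}$ while only one distance improves (from $2$ to $1$), so adding is non-beneficial once $\alpha\ge\frac67$, which holds. For (b), the neighborhood cost rises by $\frac{3\alpha}{2}$ while the single improving distance drops from $3$ to $1$; non-beneficiality is therefore equivalent to $\frac{3\alpha}{2}\ge 2$, i.e.\ $\alpha\ge\frac43$---exactly the binding constraint that fixes the threshold in the statement.

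The main obstacle is case (c) (and its mirror (d)): here the leaf's distance cost can drop by as much as $\numb{R}$, since connecting to the red center yields a shortcut to that center and to all $\numb{R}-1$ red leaves, so the leaf genuinely benefits for large $\numb{R}$ and the argument cannot rest on it. Instead I would argue via the center endpoint, for which adding the edge shortens only the single distance to that one leaf (from $2$ to $1$) while the neighborhood cost rises by $\alpha\bigl(1+\frac{1}{\numb{R}}\bigr)$ for the red center (resp.\ $\alpha\bigl(1+\frac{1}{\numb{B}}\bigr)$ for the blue center); this increase exceeds $1$ whenever $\alpha\ge\frac{\numb{R}}{\numb{R}+1}$ (resp.\ $\alpha\ge\frac{\numb{B}}{\numb{B}+1}$), both implied by $\alpha\ge\frac43>1$. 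Since at least one endpoint---the center---never strictly benefits, condition (ii) holds for these non-edges, which together with cases (a) and (b) completes the verification.
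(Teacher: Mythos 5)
Your proposal is correct and follows essentially the same route as the paper's proof: the segregation values are obtained by the same direct count over leaves and centers, edge deletions are excluded because every edge of the tree is a bridge, and the non-edges are handled by the same case analysis with the same cost computations (in particular, the $\frac{7\alpha}{6}$ and $\frac{3\alpha}{2}$ increases for leaf--leaf edges and the rejection of cross-color center--leaf edges by the center, whose neighborhood cost rises by $\alpha\bigl(1+\frac{1}{\numb{B}}\bigr)$ or $\alpha\bigl(1+\frac{1}{\numb{R}}\bigr)$ against a distance gain of only $1$). Your explicit remark that the leaf endpoint in case (c)/(d) may genuinely benefit, so the argument must go through the center, is exactly the (implicit) logic of the paper's proof.
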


\section{Decreasing Effort of Integration}
In this section, we consider the \deigame. We start by collecting some results determining simple stable networks for sufficiently small and large values of~$\alpha$, respectively. Note that we implicitly assume the restriction to two agent types when considering the networks $\dsta_n$ and $\dss_n$. All other statements hold for an arbitrary number of agent types.

\begin{restatable}{proposition}{DEIoverview}\label{prop:PSN-overview}
	For the \deigame the following holds:
	
\begin{compactenum}
	\item If $\alpha < \frac 12$, then $\com_n$ is the unique pairwise stable network.
	
	\item If $\alpha < 1$, then every pairwise stable network is fully intra-connected.\label{prop:PSN-over-intra}
	
	\item If $\alpha < 1$, then every pairwise stable network $G$ satisfies $\diam(G) \le 2$.\label{prop:PSN-over-diam}
	
	\item The network $\com_n$ is pairwise stable if $\alpha \le \frac{n - \numb{R}}{n - \numb{R} + 1}$.\label{prop:PSN-over-kn}
	
	\item If $\alpha \ge 1$, then $\sta_n$ and $\dsta_n$ are pairwise stable networks.
	
	\item If $\alpha \ge \frac 43$, then $\dss_n$ is a pairwise stable network.
\end{compactenum}
\end{restatable}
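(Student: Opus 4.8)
The plan is to reduce all six parts to a single computation: for one agent, the marginal total cost of adding or deleting one edge, comparing the neighbourhood-cost change against the distance-cost change. Two marginal facts drive everything. First, for agent $u$ a monochromatic edge contributes exactly $\alpha$ to $\Ncost{G}{u}$ at the margin, whereas the $(e+1)$-th bichromatic edge (when $u$ currently has $e = \en{G}{u}$ of them) contributes $\alpha\bigl(1 + \frac{1}{e+1}\bigr)$, a quantity in $(\alpha,2\alpha]$ maximised at $2\alpha$ when $e=0$. Second, adding a non-edge $uv$ at current distance $\dist{G}{u,v}=d$ lowers $\Dcost{G}{u}$ by at least $d-1$ (the $u$--$v$ term alone), possibly more if shortcuts to third vertices appear, while deleting an edge raises some distance by at least $1$, or to $\infty$ if it disconnects the graph.

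The small-$\alpha$ structural claims (parts~2, 3, and the uniqueness in part~1) follow from pure addition arguments. For part~2, a missing monochromatic edge has marginal neighbourhood cost $\alpha$ for each endpoint and saves at least $1$ in distance, so for $\alpha<1$ both endpoints strictly improve, contradicting condition~(ii); hence every same-type pair is adjacent. For part~3, if some pair had $\dist{G}{u,v}\ge 3$, adding the edge saves at least $2$ in distance against marginal neighbourhood cost at most $2\alpha<2$, again improving both endpoints. For the uniqueness in part~1, any missing edge saves at least $1$ in distance against marginal cost at most $2\alpha<1$, forcing the network to be $\com_n$; existence of $\com_n$ here follows from part~4, since $\frac{\numb{}-\numb{R}}{\numb{}-\numb{R}+1}\ge\frac12>\alpha$.

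Part~4 and the large-$\alpha$ existence claims (parts~5 and~6) are checked against conditions~(i) and~(ii) directly. In $\com_n$ there are no additions to test, and deleting any edge raises exactly one distance by $1$; the tightest deletion is at an agent with fewest bichromatic edges, namely a red (majority) agent with $\numb{}-\numb{R}$ of them, giving marginal deletion gain $\alpha\bigl(1+\tfrac{1}{\numb{}-\numb{R}}\bigr)-1$, nonpositive exactly when $\alpha\le\frac{\numb{}-\numb{R}}{\numb{}-\numb{R}+1}$. For $\sta_n$, $\dsta_n$, and $\dss_n$ every edge deletion isolates a vertex (infinite penalty), so condition~(i) is immediate, and it remains to classify non-edges. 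Additions between two same-type leaves save only $1$ in distance against neighbourhood cost $\alpha\ge 1$, hence are never jointly improving; the bichromatic additions save at most $2$ in distance, and I would block each by exhibiting an endpoint whose marginal bichromatic cost --- $2\alpha$ for a first bichromatic edge or $\tfrac{3}{2}\alpha$ for a second --- exceeds its distance saving.

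The main obstacle is the distance bookkeeping in parts~5 and~6: some additions genuinely produce large savings and cannot be dismissed by a crude bound. For instance, in $\dsta_n$ joining a blue leaf to the red centre shortens that leaf's distance to every red vertex, a saving of order $\numb{R}$, so the network survives only because the centre --- which already sponsors the bridge edge and thus faces marginal cost $\tfrac{3}{2}\alpha>1$ against a distance saving of just $1$ --- refuses. I would therefore tabulate, for each of the three networks, the full distance profile and bichromatic degree of every vertex class, then check each non-edge type exactly once, tracking which endpoint blocks. The threshold $\tfrac{4}{3}$ in part~6 arises precisely here: in $\dss_n$ every vertex already carries a bichromatic edge, so a blue--red leaf edge saves $2$ in distance while costing each endpoint a \emph{second} bichromatic edge worth $\tfrac{3}{2}\alpha$, and $\tfrac{3}{2}\alpha-2\ge 0$ exactly when $\alpha\ge\tfrac{4}{3}$.
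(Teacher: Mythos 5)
Your proposal is correct and follows essentially the same route as the paper's proof: a case-by-case marginal analysis comparing the neighbourhood-cost change of a single edge addition or deletion (using that a monochromatic edge costs $\alpha$ at the margin and the $(e{+}1)$-th bichromatic edge costs $\alpha(1+\frac{1}{e+1})$) against the distance-cost change, with the tight cases being the red agent's bichromatic deletion in $\com_n$ for part~4 and the leaf--leaf bichromatic addition for parts~5 and~6. You even handle one point more explicitly than the paper, namely that the existence half of the uniqueness claim in part~1 is inherited from part~4 via $\frac{n-\numb{R}}{n-\numb{R}+1}\ge\frac12$.
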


\Cref{prop:PSN-overview}(\ref{prop:PSN-over-intra}) and \Cref{prop:PSN-overview}(\ref{prop:PSN-over-diam}) imply that, for $\alpha < 1$, every pairwise stable network consists of two monochromatic cliques and one type of agents is curious. Still, there are highly segregated pairwise stable networks.
Also, note that the highly integrated clique investigated in \Cref{prop:PSN-overview}(\ref{prop:PSN-over-kn}) is not the unique stable network, as the next examples shows for the case $k=2$. 
\begin{example}\label{ex:DEI-stability}
	Assume $k = 2$ and $\frac 12 \le \alpha \le \frac{\numb{R}}{\numb{R} + 1}$. Recall that $\numb{R}$ is the size of the majority type of agents. In particular, this covers the case $\alpha \le \frac{\numb{B}}{\numb{B} + 1} = \frac{n - \numb{R}}{n - \numb{R} + 1}$. Assume that $\numb{B}\ge 2$ and let $b^*$ be some fixed blue agent, i.e., an agent from the minority type. Consider the network $G = (\ag{},E)$ with $E = \{vw\colon v,w\in R\}\cup \{vw\colon v,w\in B\}\cup \{vb^*\colon v\in R\}$, i.e., the network is fully intra-connected and there is a special blue agent $b^*$ to which all red agents are connected. There are no further bichromatic edges. See \Cref{fig:PSN-example-matchhub}.
	
	\begin{figure}[h]
		\centering
		\begin{tikzpicture}[scale = .7]
		\node[protovertex, label = $b^*$] (b2) at (.5,0) {};
		\node[protovertex] (b1) at (108:1) {};
		\node[protovertex] (b3) at (252:1) {};
		
		\node (c) at (3,0){};
		
		\node[protodiam] (r4) at ($(c) + (36:1)$) {};
		\node[protodiam] (r1) at ($(c) + (108:1)$) {};
		\node[protodiam] (r2) at ($(c) + (180:1)$) {};
		\node[protodiam] (r3) at ($(c) + (252:1)$) {};
		\node[protodiam] (r5) at ($(c) + (324:1)$) {};
		
		\foreach \i in {1,2,3}
		{\draw (r4) edge (r\i);
		\draw (r5) edge (r\i);
		}
		\foreach \i in {1,2,3,4,5}
		{\draw (r\i) edge (b2);
		}
		\foreach \x/\y in {1/2,2/3,3/1}
		{\draw  (b\x) edge (b\y);
			\draw  (r\x) edge (r\y);}
		\draw (r4) edge (r5);
		\end{tikzpicture}
		
		\caption{Pairwise stable network for $\frac 12 \le \alpha \le \frac{\numb{R}}{\numb{R} + 1}$. The example contains $\numb{B} =3$ and $\numb{R} = 5$ blue and red agents, respectively.}
		\label{fig:PSN-example-matchhub}
	\end{figure}
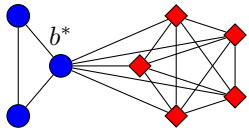
	
	This network is pairwise stable. Indeed, no agent can sever a monochromatic edge. Red agents cannot sever the bichromatic edge, because this decreases the distance to every blue agent by $1$. The blue agent $b^*$ cannot sever a bichromatic edge, because this increases her cost by $1- \alpha \frac{\numb{R} +1}{\numb{R}}\ge 0$. Also, further bichromatic edges cannot be added since their cost is more than $1$ for a blue agent while decreasing the distance cost only by $1$. \hfill $\lhd$
\end{example}

In the previous example, it was still possible to simultaneously have full intra-connectivity while there are agents entertaining several bichromatic edges. This is not possible anymore if we further increase $\alpha$.

\begin{restatable}{lemma}{FewBichrom}\label{lem:few-bichrom}
	Let $k = 2$ in the \deigame. Consider a fully intra-connected and pairwise stable network $G$.
	
	\begin{compactenum}
		\item If $\alpha > \frac{\numb{B}}{\numb{B} + 1}$, then every red agent in $G$ entertains at most one bichromatic edge.
		\item If $\alpha > \frac{\numb{R}}{\numb{R} + 1}$, then every agent in $G$ entertains at most one bichromatic edge.
	\end{compactenum}
\end{restatable}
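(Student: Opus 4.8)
The plan is to prove both parts through the edge-deletion condition~(i) of pairwise stability: I assume that some agent holds two or more bichromatic edges and show that, once $\alpha$ exceeds the stated threshold, this agent strictly profits from severing one of them, contradicting stability. So fix a fully intra-connected, pairwise stable network $G$ and suppose an agent $u$ of some type has $e := \en{G}{u} \ge 2$ bichromatic edges, going to opposite-type agents $w_1,\dots,w_e$. I would delete the edge $uw_e$ while keeping $uw_1$, and then estimate the change $\Tcost{G-uw_e}{u} - \Tcost{G}{u}$.

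First I compute the neighborhood cost change. Deleting $uw_e$ lowers $\deg{G}{u}$ by one and decreases $\en{G}{u}$ from $e$ to $e-1$, so the harmonic tail $\sum_{k=1}^{\en{G}{u}} \frac1k$ loses exactly its last summand $\frac1e$; hence $\Ncost{G}{u}$ drops by $\alpha\bigl(1 + \tfrac1e\bigr)$. Next I handle the distance cost, and here the key observation — which is exactly where full intra-connectivity enters — is that $u$ still retains the bichromatic edge $uw_1$, and that the opposite type forms a clique, so $u$ reaches every opposite-type agent at distance at most two via $w_1$. Concretely, I would verify that deleting $uw_e$ leaves $\dist{G}{u,x}$ unchanged for every $x \neq w_e$: same-type agents stay at distance one because their monochromatic clique is untouched, and any other opposite-type agent $x$ keeps a length-$\le 2$ path $u\,w_1\,x$ since $w_1x$ is an edge of the opposite-type clique; meanwhile $\dist{G}{u,w_e}=1$ rises to $\dist{G-uw_e}{u,w_e}=2$ along $u\,w_1\,w_e$. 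Thus $\Dcost{G}{u}$ increases by exactly one.

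Combining the two effects, the deviation changes $u$'s total cost by $1 - \alpha\bigl(1 + \tfrac1e\bigr)$, which is strictly negative precisely when $\alpha > \frac{e}{e+1}$. Since $x \mapsto \frac{x}{x+1}$ is increasing, it remains to bound $e$. A red agent's bichromatic edges all go to blue agents, so $e \le \numb{B}$ and therefore $\frac{e}{e+1} \le \frac{\numb{B}}{\numb{B}+1} < \alpha$ under the hypothesis of part~1, yielding a profitable deletion and contradicting stability; hence every red agent has at most one bichromatic edge. For part~2 the identical computation applied to a blue agent gives $e \le \numb{R}$ and $\frac{e}{e+1} \le \frac{\numb{R}}{\numb{R}+1} < \alpha$, and since $\frac{\numb{R}}{\numb{R}+1} \ge \frac{\numb{B}}{\numb{B}+1}$ the red agents are already covered by part~1, so every agent has at most one bichromatic edge. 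The only delicate point is the distance bookkeeping in the middle step — ensuring that severing one of several bichromatic edges inflates no distance other than the one to $w_e$, which relies precisely on the clique structure of the retained neighbor's type; the remainder is a short arithmetic comparison.
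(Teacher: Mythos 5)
Your proposal is correct and follows essentially the same argument as the paper: sever one of several bichromatic edges, observe via full intra-connectivity that the distance cost rises by exactly $1$ while the neighborhood cost drops by $\alpha\frac{e+1}{e}$ with $e=\en{G}{u}$, and bound $e$ by $\numb{B}$ or $\numb{R}$ depending on the agent's type. Your explicit verification of the distance bookkeeping is a slightly more detailed rendering of the step the paper attributes to full intra-connectivity, but the proof is the same.
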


As a consequence, we can even characterize all pairwise stable networks for $\frac{\numb{R}}{\numb{R} + 1} < \alpha < 1$ and $k = 2$.

\begin{theorem}\label{thm:DEIcharactPSN}
	Let $k = 2$ in the \deigame. Assume that $\frac{\numb{R}}{\numb{R} + 1} < \alpha < 1$ and consider a network $G$. Then, $G$ is pairwise stable if and only if it is fully intra-connected and its bichromatic edges form a matching covering $\ag{B}$.
\end{theorem}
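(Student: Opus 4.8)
For the "only if" direction, suppose $G$ is pairwise stable. Since $\alpha<1$, \Cref{prop:PSN-overview}(\ref{prop:PSN-over-intra}) already yields that $G$ is fully intra-connected, so $G$ consists of a blue clique and a red clique joined only by bichromatic edges. Because $\alpha>\frac{\numb{R}}{\numb{R}+1}$, \Cref{lem:few-bichrom}(2) applies to this fully intra-connected stable network and shows that every agent is incident to at most one bichromatic edge; hence the bichromatic edges form a matching $M$. It remains to prove that $M$ covers $\ag{B}$. Here I would invoke the diameter bound from \Cref{prop:PSN-overview}(\ref{prop:PSN-over-diam}): if some blue agent $b$ were unmatched, then all of its neighbours would be blue, so the only way for $b$ to reach a red agent $r$ within distance $2$ is a path $b\to x\to r$ with $x$ a blue agent matched to $r$; thus every red agent would have to be matched. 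Counting endpoints of the matching then forces $\numb{R}\le\numb{B}$, which together with $\numb{B}\le\numb{R}$ gives $\numb{B}=\numb{R}$ and a perfect matching, contradicting that $b$ is unmatched. Therefore $M$ covers $\ag{B}$.

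For the "if" direction, let $G$ be fully intra-connected with its bichromatic edges forming a matching covering $\ag{B}$, so each blue agent $b_i$ has a private red partner $r_i$ and the remaining red agents sit inside the red clique only. I would first record the exact distance profile: within a clique all distances are $1$; a blue agent reaches its partner at distance $1$ and every other red at distance $2$ (via its partner); a matched red reaches every non-partner blue at distance $2$; and an unmatched red reaches every blue at distance $2$. With these in hand, I verify the two pairwise-stability conditions by a direct case analysis.

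For condition~(i) (no profitable deletion), the point is that deleting a monochromatic edge saves only $\alpha$ in neighbourhood cost while raising a clique-distance from $1$ to $2$ (or disconnecting in a clique of size $\le 2$), a net change of at least $1-\alpha>0$; deleting a bichromatic edge saves exactly $2\alpha$ (the degree term $\alpha$ plus the harmonic term $\alpha$ for dropping from one to zero other-type neighbours), but I would show it raises the distance cost by at least $2$, since the two separated endpoints can only re-reach each other through a length-$3$ detour, for a net change of at least $2(1-\alpha)>0$. For condition~(ii) the only non-edges are bichromatic pairs $b_i r_j$ with $j\neq i$, and it suffices to show the blue endpoint never benefits. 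Adding $b_i r_j$ turns $b_i$ into an agent with two bichromatic neighbours, raising its neighbourhood cost by the marginal amount $\alpha\big(1+\tfrac12\big)=\tfrac32\alpha$, while it decreases $b_i$'s distance cost by at most $1$ (only the distance to $r_j$ improves, from $2$ to $1$). Since any non-degenerate instance has $\numb{R}\ge 2$, whence $\alpha>\frac{\numb{R}}{\numb{R}+1}\ge\frac23$ and thus $\tfrac32\alpha>1$, the blue endpoint strictly loses, so condition~(ii) holds.

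The routine but delicate part, and the main obstacle, is the distance bookkeeping in the "if" direction: one must verify that severing a bichromatic edge raises the distance cost by at least $2$ rather than merely $1$, checking that all distances other than between the two severed endpoints remain unchanged. The small instances ($\numb{B}=1$, or cliques of size $\le 2$) need to be handled separately, since there a deletion disconnects the graph and stability is immediate, and the case $\numb{R}=1$ makes condition~(ii) vacuous. Notably, the threshold $\alpha>\frac{\numb{R}}{\numb{R}+1}$ enters exactly twice --- once through \Cref{lem:few-bichrom}(2) to force the matching structure, and once via the inequality $\tfrac32\alpha>1$ to rule out a second bichromatic edge --- which is precisely what makes the characterization tight on this parameter range.
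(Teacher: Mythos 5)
Your proof is correct and takes essentially the same route as the paper's: full intra-connectivity from \Cref{prop:PSN-overview}(\ref{prop:PSN-over-intra}), the matching structure from \Cref{lem:few-bichrom}, the covering property from the diameter-$2$ bound of \Cref{prop:PSN-overview}(\ref{prop:PSN-over-diam}), and the same deletion/addition case analysis with the key inequality $\frac32\alpha \ge \frac32\cdot\frac{\numb{R}}{\numb{R}+1}\ge 1$ for $\numb{R}\ge 2$ in the converse. The only cosmetic difference is that you spell out the counting argument forcing the matching to cover the minority type, where the paper simply observes that one type must be curious.
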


\begin{proof}
	Clearly, if $k = 2$ and $\numb{R} = 1$, then the unique stable network consists of a neighboring blue and red agent. Hence, the assertion is true. Thus, we may assume that $\numb{R}\ge 2$.
	
	Let $\frac{\numb{R}}{\numb{R} + 1} < \alpha < 1$ and assume first that $G$ is a pairwise stable network. By \Cref{prop:PSN-overview}(\ref{prop:PSN-over-intra}), the network is fully intra-connected. By \Cref{lem:few-bichrom}, the bichromatic edges form a matching. Finally, by \Cref{prop:PSN-overview}(\ref{prop:PSN-over-diam}), one type of the agents must be curious, and therefore the matching covers the minority type of agents.
	
	Conversely, assume that $G$ is a fully intra-connected network such that the bichromatic edges form a matching covering one type of agents. Then, no edge can be severed because monochromatic edges only decrease the neighborhood cost by $\alpha < 1$ while increasing the distance cost by $1$. Also, bichromatic edges decrease the neighborhood cost by $2\alpha < 2$ while increasing the distance cost by $2$. Finally, it is impossible to create another bichromatic edge. This edge would be the second bichromatic edge incident to its endpoint from the minority type of agents. This agent would only decrease her distance cost by $1$ while increasing her neighborhood cost by $\frac 32\alpha \ge \frac 32 \frac{\numb{R}}{\numb{R} + 1} \ge 1$, where we use $\numb{R}\ge 2$ in the last step. 
\end{proof}

The second part of the above proof shows that the networks characterized in the theorem are even stable for $\frac 23 \le \alpha < 1$. Putting together \Cref{prop:PSN-overview}, \Cref{ex:DEI-stability}, and \Cref{thm:DEIcharactPSN}, we have proved the existence of pairwise stable networks for almost every \deigame if $k = 2$ (except a limit case when $\numb{B} = 1$). By generalizing the encountered networks, we can show their existence for an arbitrary number of types. Interestingly, the generalization of the network in \Cref{ex:DEI-stability} is straightforward, maintaining the property that there exists one specific agent entertaining all bichromatic edges. On the other hand, the generalization of the network in \Cref{thm:DEIcharactPSN} is a bit disguised. We define the network by providing an efficient algorithm. This algorithm initially considers a fully intra-connected network and adds edges by having agents create bichromatic edges via specific better responses. In the special case of $k = 2$, this results precisely in the matchings encountered in the previous theorem.

\begin{restatable}{theorem}{DEIexistPSN}\label{thm:DEIexistPSN}
	In the \deigame pairwise stable networks always exist.
\end{restatable}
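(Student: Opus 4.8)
The plan is to dispose of the extreme parameter ranges using \Cref{prop:PSN-overview} and then to build explicit stable networks in the remaining intermediate range by generalizing the two-type constructions of \Cref{ex:DEI-stability} and \Cref{thm:DEIcharactPSN}. Concretely, if $\alpha \ge 1$ the star $\sta_n$ is pairwise stable by \Cref{prop:PSN-overview}(5), and if $\alpha \le \frac{n-\numb{R}}{n-\numb{R}+1}$ the complete network $\com_n$ is pairwise stable by \Cref{prop:PSN-overview}(4). Hence it suffices to produce a pairwise stable network for every $\frac{n-\numb{R}}{n-\numb{R}+1} < \alpha < 1$, and I would split this interval into a lower and an upper part, each handled by one generalized construction.

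For the lower part I would generalize the hub network of \Cref{ex:DEI-stability}: take the fully intra-connected network (each type a monochromatic clique) and pick a single agent $h$ of the minority type $B$, joining $h$ to every other agent, so that $h$ entertains all $n - \numb{B}$ bichromatic edges while no other bichromatic edge is present. This network has diameter $2$ (every pair is connected through $h$), its monochromatic edges are never severed since that trades a neighborhood saving of $\alpha < 1$ against a distance increase of at least $1$, and no bichromatic edge can be added profitably because a first bichromatic edge costs $2\alpha \ge 1$ while saving only one unit of distance. The one parameter-sensitive check is that $h$ does not wish to sever one of its bichromatic edges: removing the last of $h$'s $n-\numb{B}$ bichromatic edges saves $\alpha\left(1 + \frac{1}{n-\numb{B}}\right)$ against a distance increase of $1$, which is non-profitable exactly when $\alpha \le \frac{n-\numb{B}}{n-\numb{B}+1}$. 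Since $\numb{B} \le \numb{R}$ gives $\frac{n-\numb{R}}{n-\numb{R}+1} \le \frac{n-\numb{B}}{n-\numb{B}+1}$, this construction covers the sub-interval $\frac{n-\numb{R}}{n-\numb{R}+1} < \alpha \le \frac{n-\numb{B}}{n-\numb{B}+1}$; note that it remains valid when $\numb{B}=1$, which settles the limit case left open in the two-type discussion.

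For the upper part $\frac{n-\numb{B}}{n-\numb{B}+1} < \alpha < 1$ I would generalize the matching of \Cref{thm:DEIcharactPSN} algorithmically: starting from the fully intra-connected network, repeatedly let a pair of agents of distinct types create a bichromatic edge whenever doing so is a better response for both endpoints, and stop once no such mutually improving edge remains. Since monochromatic edges are never severed (as above) and the process only adds edges, termination is immediate from the finite number of potential edges, and the stopping condition directly yields pairwise-stability condition (ii). For $k = 2$ the improving moves precisely connect each still-isolated minority agent to a fresh majority agent (a majority agent with a bichromatic edge already refuses a second one, since that costs at least $\tfrac32\alpha \ge 1$), recovering the matching covering $\ag{B}$; for general $k$ the same principle should produce a diameter-$2$ network in which every type reaches every other type through a small set of connector edges.

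The main obstacle is verifying pairwise-stability condition (i) at termination, i.e.\ that no agent ever wishes to \emph{sever} a bichromatic edge it added earlier. Because the dynamics only add edges, a late addition can in principle shrink distances enough to render an earlier edge redundant; I would rule this out by choosing the better responses so that each added bichromatic edge stays \emph{essential}, meaning that its removal would raise some distance from $2$ to $3$ and hence strictly hurt an incident agent. Establishing this invariant---together with the bookkeeping that the terminal network has diameter $2$ and that the two sub-intervals jointly cover $\left(\frac{n-\numb{R}}{n-\numb{R}+1},\, 1\right)$---is where the real work lies; the individual cost comparisons are the routine marginal-cost computations already used in \Cref{ex:DEI-stability} and \Cref{thm:DEIcharactPSN}.
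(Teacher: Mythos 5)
Your overall architecture matches the paper's: dispose of $\alpha \ge 1$ and of small $\alpha$ via \Cref{prop:PSN-overview}, then cover the intermediate range with a generalized hub network and a greedily constructed generalization of the matching networks. However, there are two genuine gaps. The first is that anchoring the hub to the minority type $B$ is wrong, and your explicit claim that the construction ``remains valid when $\numb{B}=1$'' is false. For $k=2$ and $\numb{B}=1$ your network is exactly $\com_n$, and any red agent $v$ can sever her unique bichromatic edge to the blue hub $h$: this saves neighborhood cost $\alpha\left(1+\tfrac{1}{1}\right)=2\alpha>1$ while increasing her distance cost by exactly $1$ (only $\dist{G}{v,h}$ changes, from $1$ to $2$, as there are no other blue agents). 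Hence the network is unstable on your entire claimed sub-interval $\tfrac12<\alpha\le\tfrac{n-1}{n}$. The paper avoids this by placing the hub in the \emph{smallest type of size at least two} (the index $j^*$ in the appendix), so that when $\numb{B}=1$ the hub is a majority agent and the lone minority agent's single bichromatic edge is a bridge. The same mis-anchoring breaks your coverage bookkeeping: the algorithmic network is only stable once the marginal cost of a further bichromatic edge, at least $\alpha\left(1+\tfrac 1k\right)$ for an agent already incident to up to $k-1$ of them, reaches $1$, i.e.\ for $\alpha\ge\tfrac k{k+1}$, and this threshold exceeds your hand-off point $\tfrac{n-\numb{B}}{n-\numb{B}+1}$ precisely when $n-\numb{B}=k-1$ (all types singletons), leaving an uncovered sub-interval. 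The paper's choice of $j^*$ guarantees $n-\numb{T_{j^*}}\ge k$ whenever the second construction is actually needed.

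The second gap is in the algorithmic construction itself. You correctly identify that pairwise-stability condition (i) --- no profitable severing of a previously added bichromatic edge --- is the crux, but you leave the needed invariant (``each added edge stays essential'') unproven, and your trigger (``mutually improving better responses'') does not obviously enforce it. The paper's resolution is to add an edge $uv$ only when $\dist{G}{u,v}\ge 3$; a short induction then shows that no bichromatic edge ever lies in a triangle, so severing one pushes its endpoints back to distance at least $3$, costing at least $2$ in distance against a saving of at most $2\alpha\le 2$. Some argument of this kind (tied to the specific edge-addition rule) is indispensable, and without it the second half of your proof does not go through.
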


Finally, we want to consider the segregation of pairwise stable networks in the \deigame. Clearly, the segregation does only depend on the networks, not on the type of NCG. Hence, we transfer from the investigation of \icfgame{s} that cliques provide low segregation for small $\alpha$, stars provide high or low segregation for high $\alpha$, but require a specific distribution of agents. Independently of this distribution, double stars provide high segregation and it is clear that $\gsm(\dss_n) = \lsm(\dss_n) = 0$. Finally, for an intermediate range of $\alpha$, high segregation is guaranteed. 

\begin{restatable}{corollary}{DEIuniquePSN}\label{thm:DEIuniquePSN}
	Let $k = 2$ and $\frac{\numb{R}}{\numb{R} + 1} < \alpha < 1$. Then, every pairwise stable network $G$ in the \deigame with parameter~$\alpha$ satisfies
	$\gsm(G) \ge 1 -\frac 2n$ and $\lsm(G) \ge 1 - \frac 2n$.
\end{restatable}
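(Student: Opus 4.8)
The plan is to read off the structure of $G$ from the characterization in \Cref{thm:DEIcharactPSN} and then reduce everything to a finite count of edges and neighborhood ratios. Since $k = 2$ and $\frac{\numb{R}}{\numb{R}+1} < \alpha < 1$, that theorem tells us $G$ is fully intra-connected with its bichromatic edges forming a matching covering $\ag{B}$. Concretely, $G$ is (on the vertex set) the union of a blue clique on $\numb{B}$ vertices, a red clique on $\numb{R}$ vertices, and exactly $\numb{B}$ pairwise disjoint bichromatic edges, each joining a distinct blue agent to a distinct red agent. The degenerate case $\numb{R} = 1$, where $n = 2$ and $G$ is a single bichromatic edge, I would dispatch separately: there both measures equal $0 = 1 - \tfrac 2n$, so the bounds hold with equality. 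In the main case there is no stability reasoning left to do.

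For $\gsm$, I would first count the monochromatic edges $M = \binom{\numb{B}}{2} + \binom{\numb{R}}{2}$ and note there are exactly $\numb{B}$ bichromatic edges, so $|E| = M + \numb{B}$ and $\gsm(G) = \frac{M}{M+\numb{B}}$. The claimed bound $\gsm(G) \ge 1 - \frac 2n$ is equivalent to $n\,\numb{B} \le 2(M + \numb{B})$. Expanding $2(M+\numb{B}) = \numb{B}^2 + \numb{B} + \numb{R}^2 - \numb{R}$ and $n\,\numb{B} = \numb{B}^2 + \numb{B}\numb{R}$, this reduces to $(\numb{B} - \numb{R})(\numb{R} - 1) \le 0$, which holds because blue is the minority type ($\numb{B} \le \numb{R}$) and $\numb{R} \ge 1$.

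For $\lsm$, I would split the agents into three groups according to the ratio $\frac{\fr{G}{u}}{\deg{G}{u}}$. Every blue agent is matched, hence has $\numb{B} - 1$ monochromatic neighbors and one bichromatic neighbor, giving ratio $\frac{\numb{B}-1}{\numb{B}}$; there are $\numb{B}$ of these. Among the red agents, the $\numb{B}$ matched ones have ratio $\frac{\numb{R}-1}{\numb{R}}$ and the remaining $\numb{R}-\numb{B}$ unmatched ones have ratio $1$. Summing, the factor $(\numb{R}-1)$ pulls out and the identity $\numb{B}+\numb{R} = n$ cancels the denominator $\numb{R}$, so the expression collapses to $\lsm(G) = 1 - \frac{1}{\numb{R}}$. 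Finally $1 - \frac{1}{\numb{R}} \ge 1 - \frac 2n$ because red is the majority type, so $\numb{R} \ge \frac n2$.

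The argument carries no conceptual obstacle once \Cref{thm:DEIcharactPSN} is available; the only points that need care are the algebraic reduction to $(\numb{B}-\numb{R})(\numb{R}-1)\le 0$ for $\gsm$ and the telescoping of the $\lsm$ sum, together with the correct use of the minority/majority convention $\numb{B} \le \numb{R}$, which is precisely what makes both inequalities go through.
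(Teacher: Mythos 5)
Your proposal is correct and follows essentially the same route as the paper: invoke \Cref{thm:DEIcharactPSN} to pin down the structure, count edges for $\gsm$, and sum the three neighborhood ratios for $\lsm$. The only differences are cosmetic reorganizations of the algebra (the paper lower-bounds $|E|$ by $\numb{B}\numb{}/2$ rather than reducing to $(\numb{B}-\numb{R})(\numb{R}-1)\le 0$, and writes $\lsm(G)=\frac 1n(n-1-\numb{B}/\numb{R})$ rather than your equivalent closed form $1-\frac 1{\numb{R}}$), so there is nothing substantive to add.
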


\section{Experimental Analysis}\label{sec:experiments}
While our theoretical results indicate a clear structure of stable networks for $\alpha \leq 1$, there is a broad range of possibilities for larger $\alpha$. Therefore, we support the theoretical findings for $\alpha > 1$ by a detailed experimental analysis.
To this end, we simulate a simple dynamic process based on distributed and strategic edge creation and deletion over time, incentivized by optimizing the cost functions of our two models.

The dynamics start with sparse initial networks (spanning tree or grid) and distribute agents of two equally-sized types such that the segregation of the initial network is very low or high. In each step, one agent is activated uniformly at random and can either create or delete an edge, performing a best response with respect to the cost function under consideration. In particular, we study also an \emph{add-only} variant of the model, where agents can only create edges. This dynamics is particularly natural when modeling social networks, as confirmed by the observation that many real-world social networks get denser over time~\citep{leskovec2005graphs}. The dynamics proceed until the consideration of no agent changes the network.   

\Cref{appendix:plots} contains a detailed discussion of our experimental setup and further results. An exemplary consideration of the dynamics based on the cost function of the \deigame can be found in \Cref{plot:DEINCG} and \Cref{plot:DEINCG_AddOnly} for the general and add-only version, respectively.\footnote{As discussed in \Cref{appendix:plots}, for computational efficiency, we consider convergence to $1.01$-approximate pairwise stable states which is  qualitatively similar to pairwise stability. Also, note that ``random'' means that the initial agent distribution is chosen uniformly at random, which implies low initial segregation.}
 Interestingly, the results for the \icfgame are qualitatively the same, regardless of measuring segregation with the local or global segregation measure. The experiments indicate that the segregation strength is proportional to $\alpha$, with low segregation for low $\alpha$, despite the theoretical necessity of high segregation for $\alpha$ close to $1$.\footnote{The provably high segregation for $\alpha <1$ close to $1$ is not contradicting the experimental results. Just before we transition to an edge parameter of at least $\alpha$, we hit the sweet spot where buying monochromatic edges is desirable while buying bichromatic edges is not.} Moreover, except for high $\alpha$, the initial agent distribution influences the segregation, with more observed segregation for segregated initial states. On the other hand, the structure of the initial network seems less important for the qualitative behavior.
 Interestingly, the add-only version displays a similar behavior for low $\alpha$, but the behavior changes drastically for moderately high $\alpha$. Instead of high segregation, we find that initially integrated networks converge to only moderately segregated states, whereas this is not true for initially segregated networks, suggesting an escape route from segregation.

\begin{figure}[h]
\centering
\begin{minipage}{0.23\textwidth}
\resizebox {\textwidth} {!} {
\begin{tikzpicture}
\begin{axis}[
legend columns=-1,
legend entries={segregated grid;,random grid;,segregated tree;,random tree},
legend to name=named,
legend style={nodes={scale=0.65, transform shape}},% legend cell align=center},
xlabel= {$\alpha$},
ylabel= {local segregation},
boxplot/draw direction=y,
baseline,
xtick = {1,2,3, 4, 5, 6, 7, 8},
xticklabels = {5, 10, 15, 20, 25, 30, 35, 40},
ymin=0.5,
ymax=1
]
\addplot[red!50!black, domain=1.1:1.11]{0.55};
%\addlegendentry{fixed grid};
\addplot[blue!50!black, domain=1.1:1.11]{0.55};
%\addlegendentry{random grid};
\addplot[yellow!70!black, domain=1.1:1.11]{0.55};
%\addlegendentry{fixed tree};
\addplot[lime!70!black, domain=1.1:1.11]{0.55};
%\addlegendentry{random tree};

%rand tree
\addplot+ [color = lime!70!black,solid,boxplot prepared = {box extend=0.3, draw position = 1, lower whisker = 0.587877, lower quartile = 0.595792, median = 0.599611, upper quartile = 0.600736, upper whisker = 0.600876},]coordinates{}; 
\addplot+ [color = lime!70!black,solid,boxplot prepared = {box extend=0.3, draw position = 2, lower whisker = 0.614817, lower quartile = 0.639993, median = 0.648220, upper quartile = 0.655185, upper whisker = 0.659965},]coordinates{}; 
\addplot+ [color = lime!70!black,solid,boxplot prepared = {box extend=0.3, draw position = 3, lower whisker = 0.672204, lower quartile = 0.680865, median = 0.686433, upper quartile = 0.691970, upper whisker = 0.704313},]coordinates{}; 
\addplot+ [color = lime!70!black,solid,boxplot prepared = {box extend=0.3, draw position = 4, lower whisker = 0.693762, lower quartile = 0.710872, median = 0.720265, upper quartile = 0.725355, upper whisker = 0.743811},]coordinates{}; 
\addplot+ [color = lime!70!black,solid,boxplot prepared = {box extend=0.3, draw position = 5, lower whisker = 0.725430, lower quartile = 0.738375, median = 0.745606, upper quartile = 0.752434, upper whisker = 0.769720},]coordinates{}; 
\addplot+ [color = lime!70!black,solid,boxplot prepared = {box extend=0.3, draw position = 6, lower whisker = 0.746930, lower quartile = 0.759186, median = 0.773552, upper quartile = 0.779921, upper whisker = 0.783034},]coordinates{}; 

%segr tree
\addplot+ [color = yellow!70!black,solid,boxplot prepared = {box extend=0.3, draw position = 1, lower whisker = 0.625312, lower quartile = 0.645360, median = 0.649989, upper quartile = 0.654888, upper whisker = 0.664035},]coordinates{}; 
\addplot+ [color = yellow!70!black,solid,boxplot prepared = {box extend=0.3, draw position = 2, lower whisker = 0.715601, lower quartile = 0.723320, median = 0.726892, upper quartile = 0.729812, upper whisker = 0.739498},]coordinates{}; 
\addplot+ [color = yellow!70!black,solid,boxplot prepared = {box extend=0.3, draw position = 3, lower whisker = 0.764764, lower quartile = 0.776444, median = 0.782217, upper quartile = 0.785451, upper whisker = 0.796037},]coordinates{}; 
\addplot+ [color = yellow!70!black,solid,boxplot prepared = {box extend=0.3, draw position = 4, lower whisker = 0.802477, lower quartile = 0.815618, median = 0.821320, upper quartile = 0.823649, upper whisker = 0.830700},]coordinates{}; 
\addplot+ [color = yellow!70!black,solid,boxplot prepared = {box extend=0.3, draw position = 5, lower whisker = 0.835204, lower quartile = 0.842199, median = 0.845808, upper quartile = 0.849384, upper whisker = 0.856944},]coordinates{}; 
\addplot+ [color = yellow!70!black,solid,boxplot prepared = {box extend=0.3, draw position = 6, lower whisker = 0.857273, lower quartile = 0.865692, median = 0.870443, upper quartile = 0.872591, upper whisker = 0.883314},]coordinates{}; 

%fixed grid
\addplot+ [color = red!70!black,solid,boxplot prepared = {box extend=0.3, draw position = 1, lower whisker = 0.697566, lower quartile = 0.705443, median = 0.709720, upper quartile = 0.711501, upper whisker = 0.719408},]coordinates{}; 
\addplot+ [color = red!70!black,solid,boxplot prepared = {box extend=0.3, draw position = 2, lower whisker = 0.760929, lower quartile = 0.767087, median = 0.769896, upper quartile = 0.772985, upper whisker = 0.779684},]coordinates{}; 
\addplot+ [color = red!70!black,solid,boxplot prepared = {box extend=0.3, draw position = 3, lower whisker = 0.809113, lower quartile = 0.817723, median = 0.820637, upper quartile = 0.822316, upper whisker = 0.832187},]coordinates{}; 
\addplot+ [color = red!70!black,solid,boxplot prepared = {box extend=0.3, draw position = 4, lower whisker = 0.841635, lower quartile = 0.849097, median = 0.851167, upper quartile = 0.853647, upper whisker = 0.859318},]coordinates{}; 
\addplot+ [color = red!70!black,solid,boxplot prepared = {box extend=0.3, draw position = 5, lower whisker = 0.863108, lower quartile = 0.869798, median = 0.871972, upper quartile = 0.874686, upper whisker = 0.878572},]coordinates{}; 
\addplot+ [color = red!70!black,solid,boxplot prepared = {box extend=0.3, draw position = 6, lower whisker = 0.880598, lower quartile = 0.884663, median = 0.886926, upper quartile = 0.888980, upper whisker = 0.895380},]coordinates{}; 

%rand grid
\addplot+ [color = blue!50!black,solid,boxplot prepared = {box extend=0.3, draw position = 1, lower whisker = 0.553215, lower quartile = 0.567747, median = 0.572520, upper quartile = 0.575852, upper whisker = 0.583499},]coordinates{}; 
\addplot+ [color = blue!50!black,solid,boxplot prepared = {box extend=0.3, draw position = 2, lower whisker = 0.605987, lower quartile = 0.612051, median = 0.617258, upper quartile = 0.620476, upper whisker = 0.633279},]coordinates{}; 
\addplot+ [color = blue!50!black,solid,boxplot prepared = {box extend=0.3, draw position = 3, lower whisker = 0.634162, lower quartile = 0.641935, median = 0.648416, upper quartile = 0.651590, upper whisker = 0.659823},]coordinates{}; 
\addplot+ [color = blue!50!black,solid,boxplot prepared = {box extend=0.3, draw position = 4, lower whisker = 0.638457, lower quartile = 0.661246, median = 0.664685, upper quartile = 0.669107, upper whisker = 0.674004},]coordinates{}; 
\addplot+ [color = blue!50!black,solid,boxplot prepared = {box extend=0.3, draw position = 5, lower whisker = 0.660377, lower quartile = 0.670119, median = 0.677976, upper quartile = 0.683920, upper whisker = 0.692616},]coordinates{}; 
\addplot+ [color = blue!50!black,solid,boxplot prepared = {box extend=0.3, draw position = 6, lower whisker = 0.671119, lower quartile = 0.684051, median = 0.690288, upper quartile = 0.695877, upper whisker = 0.711555},]coordinates{}; 

\end{axis}
\end{tikzpicture}
}
\end{minipage}
\begin{minipage}{0.23\textwidth}
\resizebox {\textwidth} {!} {
\begin{tikzpicture}
\begin{axis}[
xlabel= {$\alpha$},
ylabel= {local segregation},
boxplot/draw direction=y,
baseline,
xtick = {1, 2, 3, ..., 11},
xticklabels  = {5, 30, 55, 80, 105, 130, 155, 180, 205, 230, 255},
ymin=0.5,
ymax=1
]

%rand tree
\addplot+ [color = lime!70!black,solid,boxplot prepared = {box extend=0.4, draw position = 1, lower whisker = 0.587877, lower quartile = 0.595792, median = 0.599611, upper quartile = 0.600736, upper whisker = 0.600876},]coordinates{}; 
\addplot+ [color = lime!70!black,solid,boxplot prepared = {box extend=0.4, draw position = 2, lower whisker = 0.746930, lower quartile = 0.759186, median = 0.773552, upper quartile = 0.779921, upper whisker = 0.783034},]coordinates{}; 
\addplot+ [color = lime!70!black,solid,boxplot prepared = {box extend=0.4, draw position = 3, lower whisker = 0.873544, lower quartile = 0.875830, median = 0.878203, upper quartile = 0.882616, upper whisker = 0.884611},]coordinates{}; 
\addplot+ [color = lime!70!black,solid,boxplot prepared = {box extend=0.4, draw position = 4, lower whisker = 0.923530, lower quartile = 0.925024, median = 0.930053, upper quartile = 0.931133, upper whisker = 0.935467},]coordinates{}; 
\addplot+ [color = lime!70!black,solid,boxplot prepared = {box extend=0.4, draw position = 5, lower whisker = 0.957246, lower quartile = 0.959246, median = 0.959575, upper quartile = 0.961969, upper whisker = 0.962539},]coordinates{}; 
\addplot+ [color = lime!70!black,solid,boxplot prepared = {box extend=0.4, draw position = 6, lower whisker = 0.961707, lower quartile = 0.961947, median = 0.964808, upper quartile = 0.967901, upper whisker = 0.970280},]coordinates{}; 
\addplot+ [color = lime!70!black,solid,boxplot prepared = {box extend=0.4, draw position = 7, lower whisker = 0.968187, lower quartile = 0.968823, median = 0.972421, upper quartile = 0.974994, upper whisker = 0.975260},]coordinates{}; 
\addplot+ [color = lime!70!black,solid,boxplot prepared = {box extend=0.4, draw position = 8, lower whisker = 0.969983, lower quartile = 0.973441, median = 0.975758, upper quartile = 0.977213, upper whisker = 0.978670},]coordinates{}; 
\addplot+ [color = lime!70!black,solid,boxplot prepared = {box extend=0.4, draw position = 9, lower whisker = 0.973462, lower quartile = 0.976978, median = 0.978195, upper quartile = 0.978944, upper whisker = 0.978955},]coordinates{}; 
\addplot+ [color = lime!70!black,solid,boxplot prepared = {box extend=0.4, draw position = 10, lower whisker = 0.977007, lower quartile = 0.977624, median = 0.978862, upper quartile = 0.980195, upper whisker = 0.981108},]coordinates{}; 
\addplot+ [color = lime!70!black,solid,boxplot prepared = {box extend=0.4, draw position = 11, lower whisker = 0.978110, lower quartile = 0.980106, median = 0.982154, upper quartile = 0.982968, upper whisker = 0.983385},]coordinates{}; 

%segr tree
\addplot+ [color = yellow!70!black,solid,boxplot prepared = {box extend=0.4, draw position = 1, lower whisker = 0.625312, lower quartile = 0.645360, median = 0.649989, upper quartile = 0.654888, upper whisker = 0.664035},]coordinates{}; 
\addplot+ [color = yellow!70!black,solid,boxplot prepared = {box extend=0.4, draw position = 2, lower whisker = 0.857273, lower quartile = 0.865692, median = 0.870443, upper quartile = 0.872591, upper whisker = 0.883314},]coordinates{}; 
\addplot+ [color = yellow!70!black,solid,boxplot prepared = {box extend=0.4, draw position = 3, lower whisker = 0.923923, lower quartile = 0.926922, median = 0.930204, upper quartile = 0.933215, upper whisker = 0.941737},]coordinates{}; 
\addplot+ [color = yellow!70!black,solid,boxplot prepared = {box extend=0.4, draw position = 4, lower whisker = 0.949148, lower quartile = 0.951784, median = 0.954276, upper quartile = 0.955444, upper whisker = 0.959144},]coordinates{}; 
\addplot+ [color = yellow!70!black,solid,boxplot prepared = {box extend=0.4, draw position = 5, lower whisker = 0.952666, lower quartile = 0.958743, median = 0.962441, upper quartile = 0.964854, upper whisker = 0.971552},]coordinates{}; 
\addplot+ [color = yellow!70!black,solid,boxplot prepared = {box extend=0.4, draw position = 6, lower whisker = 0.959560, lower quartile = 0.963026, median = 0.964291, upper quartile = 0.966124, upper whisker = 0.970617},]coordinates{}; 
\addplot+ [color = yellow!70!black,solid,boxplot prepared = {box extend=0.4, draw position = 7, lower whisker = 0.966569, lower quartile = 0.968616, median = 0.970596, upper quartile = 0.972148, upper whisker = 0.975831},]coordinates{}; 
\addplot+ [color = yellow!70!black,solid,boxplot prepared = {box extend=0.4, draw position = 8, lower whisker = 0.969493, lower quartile = 0.973342, median = 0.974507, upper quartile = 0.975862, upper whisker = 0.979846},]coordinates{}; 
\addplot+ [color = yellow!70!black,solid,boxplot prepared = {box extend=0.4, draw position = 9, lower whisker = 0.973960, lower quartile = 0.975806, median = 0.977811, upper quartile = 0.978961, upper whisker = 0.982225},]coordinates{}; 
\addplot+ [color = yellow!70!black,solid,boxplot prepared = {box extend=0.4, draw position = 10, lower whisker = 0.974930, lower quartile = 0.978464, median = 0.979766, upper quartile = 0.980507, upper whisker = 0.983006},]coordinates{}; 
\addplot+ [color = yellow!70!black,solid,boxplot prepared = {box extend=0.4, draw position = 11, lower whisker = 0.977565, lower quartile = 0.980369, median = 0.982019, upper quartile = 0.983316, upper whisker = 0.985345},]coordinates{}; 

%rand grid
\addplot+ [color = blue!50!black,solid,boxplot prepared = {box extend=0.4, draw position = 1, lower whisker = 0.553215, lower quartile = 0.567747, median = 0.572520, upper quartile = 0.575852, upper whisker = 0.583499},]coordinates{}; 
\addplot+ [color = blue!50!black,solid,boxplot prepared = {box extend=0.4, draw position = 2, lower whisker = 0.671119, lower quartile = 0.684051, median = 0.690288, upper quartile = 0.695877, upper whisker = 0.711555},]coordinates{}; 
\addplot+ [color = blue!50!black,solid,boxplot prepared = {box extend=0.4, draw position = 3, lower whisker = 0.769120, lower quartile = 0.792679, median = 0.799753, upper quartile = 0.805371, upper whisker = 0.824674},]coordinates{}; 
\addplot+ [color = blue!50!black,solid,boxplot prepared = {box extend=0.4, draw position = 4, lower whisker = 0.902491, lower quartile = 0.911603, median = 0.915007, upper quartile = 0.916627, upper whisker = 0.923923},]coordinates{}; 
\addplot+ [color = blue!50!black,solid,boxplot prepared = {box extend=0.4, draw position = 5, lower whisker = 0.948760, lower quartile = 0.953995, median = 0.957272, upper quartile = 0.959069, upper whisker = 0.964313},]coordinates{}; 
\addplot+ [color = blue!50!black,solid,boxplot prepared = {box extend=0.4, draw position = 6, lower whisker = 0.959348, lower quartile = 0.962614, median = 0.965023, upper quartile = 0.966140, upper whisker = 0.971082},]coordinates{}; 
\addplot+ [color = blue!50!black,solid,boxplot prepared = {box extend=0.4, draw position = 7, lower whisker = 0.966164, lower quartile = 0.969773, median = 0.971402, upper quartile = 0.973260, upper whisker = 0.977559},]coordinates{}; 
\addplot+ [color = blue!50!black,solid,boxplot prepared = {box extend=0.4, draw position = 8, lower whisker = 0.968674, lower quartile = 0.974483, median = 0.975671, upper quartile = 0.976556, upper whisker = 0.980299},]coordinates{}; 
\addplot+ [color = blue!50!black,solid,boxplot prepared = {box extend=0.4, draw position = 9, lower whisker = 0.974160, lower quartile = 0.976849, median = 0.978065, upper quartile = 0.978941, upper whisker = 0.981402},]coordinates{}; 
\addplot+ [color = blue!50!black,solid,boxplot prepared = {box extend=0.4, draw position = 10, lower whisker = 0.976702, lower quartile = 0.978616, median = 0.979923, upper quartile = 0.980969, upper whisker = 0.983412},]coordinates{}; 
\addplot+ [color = blue!50!black,solid,boxplot prepared = {box extend=0.4, draw position = 11, lower whisker = 0.977296, lower quartile = 0.980610, median = 0.981960, upper quartile = 0.983154, upper whisker = 0.984985},]coordinates{}; 

%fixed grid
\addplot+ [color = red!70!black,solid,boxplot prepared = {box extend=0.4, draw position = 1, lower whisker = 0.697566, lower quartile = 0.705443, median = 0.709720, upper quartile = 0.711501, upper whisker = 0.719408},]coordinates{}; 
\addplot+ [color = red!70!black,solid,boxplot prepared = {box extend=0.4, draw position = 2, lower whisker = 0.880598, lower quartile = 0.884663, median = 0.886926, upper quartile = 0.888980, upper whisker = 0.895380},]coordinates{}; 
\addplot+ [color = red!70!black,solid,boxplot prepared = {box extend=0.4, draw position = 3, lower whisker = 0.927817, lower quartile = 0.930200, median = 0.931157, upper quartile = 0.933681, upper whisker = 0.937793},]coordinates{}; 
\addplot+ [color = red!70!black,solid,boxplot prepared = {box extend=0.4, draw position = 4, lower whisker = 0.947143, lower quartile = 0.949648, median = 0.950862, upper quartile = 0.952350, upper whisker = 0.955776},]coordinates{}; 
\addplot+ [color = red!70!black,solid,boxplot prepared = {box extend=0.4, draw position = 5, lower whisker = 0.952846, lower quartile = 0.958263, median = 0.962840, upper quartile = 0.963716, upper whisker = 0.966948},]coordinates{}; 
\addplot+ [color = red!70!black,solid,boxplot prepared = {box extend=0.4, draw position = 6, lower whisker = 0.956431, lower quartile = 0.963584, median = 0.964688, upper quartile = 0.966167, upper whisker = 0.970800},]coordinates{}; 
\addplot+ [color = red!70!black,solid,boxplot prepared = {box extend=0.4, draw position = 7, lower whisker = 0.964810, lower quartile = 0.969006, median = 0.971059, upper quartile = 0.972356, upper whisker = 0.975870},]coordinates{}; 
\addplot+ [color = red!70!black,solid,boxplot prepared = {box extend=0.4, draw position = 8, lower whisker = 0.969395, lower quartile = 0.973264, median = 0.974714, upper quartile = 0.975619, upper whisker = 0.978041},]coordinates{}; 
\addplot+ [color = red!70!black,solid,boxplot prepared = {box extend=0.4, draw position = 9, lower whisker = 0.972500, lower quartile = 0.976252, median = 0.977093, upper quartile = 0.978713, upper whisker = 0.981757},]coordinates{}; 
\addplot+ [color = red!70!black,solid,boxplot prepared = {box extend=0.4, draw position = 10, lower whisker = 0.975554, lower quartile = 0.978756, median = 0.979826, upper quartile = 0.980922, upper whisker = 0.985774},]coordinates{}; 
\addplot+ [color = red!70!black,solid,boxplot prepared = {box extend=0.4, draw position = 11, lower whisker = 0.978110, lower quartile = 0.980720, median = 0.981614, upper quartile = 0.982151, upper whisker = 0.984563},]coordinates{}; 

\end{axis}
\end{tikzpicture}
}
\end{minipage}
\ref{named}
\caption{Local segregation of  $1.01$-approximate networks in the \deigame obtained by iterative best response moves for $n=1000$ over 50 runs starting from a random or segregated tree and grid.}
\label{plot:DEINCG}
\end{figure}
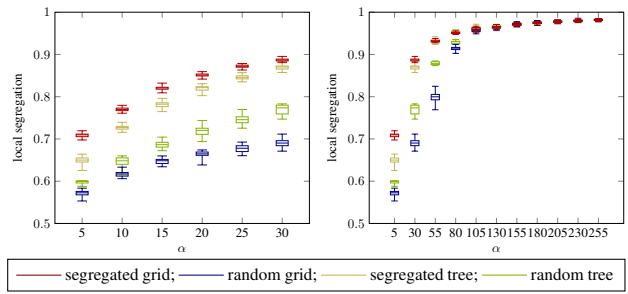

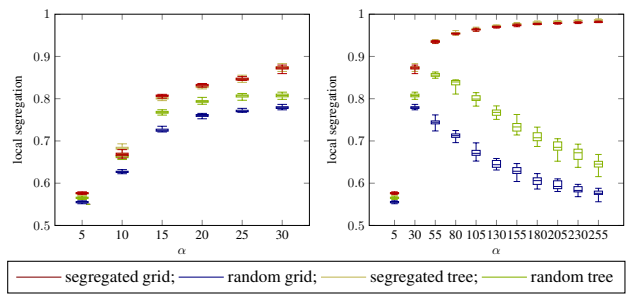
\begin{figure}[!ht]
\centering
\begin{minipage}{0.23\textwidth}
\resizebox {\textwidth} {!} {
\begin{tikzpicture}
\begin{axis}[
		legend columns=-1,
          legend entries={segregated grid;,random grid;,segregated tree;,random tree},
          legend to name=named,
		legend style={nodes={scale=0.65, transform shape}},% legend cell align=center},
		xlabel= {$\alpha$},
		ylabel= {local segregation},
		boxplot/draw direction=y,
		baseline,
		xtick = {1,2,3, 4, 5, 6, 7, 8},
		xticklabels = {5, 10, 15, 20, 25, 30, 35, 40},
		ymin=0.5,
    		ymax=1
	]
\addplot[red!50!black, domain=1.1:1.2]{0.55};
%\addlegendentry{fixed grid};
\addplot[blue!50!black, domain=1.1:1.2]{0.55};
%\addlegendentry{random grid};
\addplot[yellow!70!black, domain=1.1:1.2]{0.55};
%\addlegendentry{fixed tree};
\addplot[lime!70!black, domain=1.1:1.2]{0.55};
%\addlegendentry{random tree};

%rand tree
\addplot+ [color = lime!70!black,solid,boxplot prepared = {box extend=0.3,  upper whisker = 0.569317, upper quartile = 0.566948, median = 0.565704, lower quartile = 0.563991, lower whisker = 0.561343},]coordinates{}; 
\addplot+ [color = lime!70!black,solid,boxplot prepared = {box extend=0.3, draw position = 2, lower whisker = 0.656054, lower quartile = 0.660461, median = 0.664114, upper quartile = 0.665906, upper whisker = 0.670238},]coordinates{}; 
\addplot+ [color = lime!70!black,solid,boxplot prepared = {box extend=0.3, draw position = 3, lower whisker = 0.761166, lower quartile = 0.765647, median = 0.767783, upper quartile = 0.770015, upper whisker = 0.773884},]coordinates{}; 
\addplot+ [color = lime!70!black,solid,boxplot prepared = {box extend=0.3, draw position = 4, lower whisker = 0.786826, lower quartile = 0.791560, median = 0.793824, upper quartile = 0.795343, upper whisker = 0.803133},]coordinates{}; 
\addplot+ [color = lime!70!black,solid,boxplot prepared = {box extend=0.3, draw position = 5, lower whisker = 0.796294, lower quartile = 0.804001, median = 0.806804, upper quartile = 0.808822, upper whisker = 0.812360},]coordinates{}; 
\addplot+ [color = lime!70!black,solid,boxplot prepared = {box extend=0.3, draw position = 6,  upper whisker = 0.815487, upper quartile = 0.810792, median = 0.807471, lower quartile = 0.804435, lower whisker = 0.798459},]coordinates{}; 

%fixed tree
\addplot+ [color = yellow!70!black,solid,boxplot prepared = {box extend=0.3, draw position = 1, lower whisker = 0.571288, lower quartile = 0.574302, median = 0.575457, upper quartile = 0.576574, upper whisker = 0.577170},]coordinates{}; 
\addplot+ [color = yellow!70!black,solid,boxplot prepared = {box extend=0.3, draw position = 2, lower whisker = 0.669629, lower quartile = 0.680519, median = 0.682716, upper quartile = 0.685042, upper whisker = 0.693209},]coordinates{}; 
\addplot+ [color = yellow!70!black,solid,boxplot prepared = {box extend=0.3, draw position = 3, lower whisker = 0.795463, lower quartile = 0.799373, median = 0.800758, upper quartile = 0.802217, upper whisker = 0.804513},]coordinates{}; 
\addplot+ [color = yellow!70!black,solid,boxplot prepared = {box extend=0.3, draw position = 4, lower whisker = 0.823144, lower quartile = 0.826224, median = 0.828067, upper quartile = 0.830107, upper whisker = 0.833727},]coordinates{}; 
\addplot+ [color = yellow!70!black,solid,boxplot prepared = {box extend=0.3, draw position = 5, lower whisker = 0.838669, lower quartile = 0.844115, median = 0.845808, upper quartile = 0.848215, upper whisker = 0.855670},]coordinates{}; 
\addplot+ [color = yellow!70!black,solid,boxplot prepared = {box extend=0.3, draw position = 6, lower whisker = 0.864458, lower quartile = 0.870431, median = 0.873253, upper quartile = 0.875989, upper whisker = 0.882930},]coordinates{};

%fixed grid
\addplot+ [color = red!70!black,solid,boxplot prepared = {box extend=0.3, draw position = 1, lower whisker = 0.573521, lower quartile = 0.575512, median = 0.576715, upper quartile = 0.577702, upper whisker = 0.579620},]coordinates{}; 
\addplot+ [color = red!70!black,solid,boxplot prepared = {box extend=0.3, draw position = 2, lower whisker = 0.659204, lower quartile = 0.665023, median = 0.667811, upper quartile = 0.671121, upper whisker = 0.679496},]coordinates{}; 
\addplot+ [color = red!70!black,solid,boxplot prepared = {box extend=0.3, draw position = 3, lower whisker = 0.801244, lower quartile = 0.805434, median = 0.807576, upper quartile = 0.808680, upper whisker = 0.810356},]coordinates{}; 
\addplot+ [color = red!70!black,solid,boxplot prepared = {box extend=0.3, draw position = 4, lower whisker = 0.827337, lower quartile = 0.830322, median = 0.832444, upper quartile = 0.833859, upper whisker = 0.835993},]coordinates{}; 
\addplot+ [color = red!70!black,solid,boxplot prepared = {box extend=0.3, draw position = 5, lower whisker = 0.843377, lower quartile = 0.844394, median = 0.846362, upper quartile = 0.848605, upper whisker = 0.852728},]coordinates{}; 
\addplot+ [color = red!70!black,solid,boxplot prepared = {box extend=0.3, draw position = 6, lower whisker = 0.859387, lower quartile = 0.870538, median = 0.872640, upper quartile = 0.874923, upper whisker = 0.879567},]coordinates{}; 

%rand grid
\addplot+ [color = blue!50!black,solid, boxplot prepared = {box extend=0.3, draw position = 1, upper whisker = 0.558332, upper quartile = 0.556623, median = 0.555612, lower quartile = 0.553640, lower whisker = 0.551390},]coordinates{}; 
\addplot+ [color = blue!50!black,solid,boxplot prepared = {box extend=0.3, draw position = 2, lower whisker = 0.622430, lower quartile = 0.625420, median = 0.626853, upper quartile = 0.628278, upper whisker = 0.632146},]coordinates{}; 
\addplot+ [color = blue!50!black,solid,boxplot prepared = {box extend=0.3, draw position = 3, lower whisker = 0.721274, lower quartile = 0.723054, median = 0.725273, upper quartile = 0.727901, upper whisker = 0.734792},]coordinates{}; 
\addplot+ [color = blue!50!black,solid,boxplot prepared = {box extend=0.3, draw position = 4, lower whisker = 0.752463, lower quartile = 0.758091, median = 0.760399, upper quartile = 0.762618, upper whisker = 0.765219},]coordinates{}; 
\addplot+ [color = blue!50!black,solid,boxplot prepared = {box extend=0.3, draw position = 5, lower whisker = 0.767886, lower quartile = 0.769114, median = 0.769849, upper quartile = 0.772662, upper whisker = 0.777153},]coordinates{}; 
\addplot+ [color = blue!50!black,solid,boxplot prepared = {box extend=0.3, draw position = 6, upper whisker = 0.786549, upper quartile = 0.781019, median = 0.778677, lower quartile = 0.776975, lower whisker = 0.773675},]coordinates{}; 

\end{axis}
\end{tikzpicture}
}
\end{minipage}
\begin{minipage}{0.23\textwidth}
\resizebox {\textwidth} {!} {
\begin{tikzpicture}
\begin{axis}[
		xlabel= {$\alpha$},
		ylabel= {local segregation},
		boxplot/draw direction=y,
		baseline,
		xtick = {1, 2, 3, ..., 11},
		xticklabels  = {5, 30, 55, 80, 105, 130, 155, 180, 205, 230, 255},
		ymin=0.5,
    		ymax=1
	]

%tree
\addplot+ [color = lime!70!black,solid,boxplot prepared = {box extend=0.4,  upper whisker = 0.569317, upper quartile = 0.566948, median = 0.565704, lower quartile = 0.563991, lower whisker = 0.561343},]coordinates{};

\addplot+ [color = lime!70!black,solid,boxplot prepared = {box extend=0.4,  upper whisker = 0.815487, upper quartile = 0.810792, median = 0.807471, lower quartile = 0.804435, lower whisker = 0.798459},]coordinates{}; 

\addplot+ [color = lime!70!black,solid,boxplot prepared = {box extend=0.4,  upper whisker =0.863297, upper quartile = 0.859855, median = 0.855878, lower quartile = 0.852971, lower whisker = 0.848393},]coordinates{}; 

\addplot+ [color = lime!70!black,solid,boxplot prepared = {box extend=0.4,  upper whisker =0.844602, upper quartile = 0.843130, median =0.839602, lower quartile = 0.832798, lower whisker = 0.810987},]coordinates{};

\addplot+ [color = lime!70!black,solid,boxplot prepared = {box extend=0.4,  upper whisker =0.814326, upper quartile = 0.807044, median = 0.799180, lower quartile = 0.795927, lower whisker = 0.782402},]coordinates{};

\addplot+ [color = lime!70!black,solid,boxplot prepared = {box extend=0.4,  upper whisker =0.782966, upper quartile = 0.773236, median = 0.767630, lower quartile = 0.760762, lower whisker = 0.751039},]coordinates{};

\addplot+ [color = lime!70!black,solid,boxplot prepared = {box extend=0.4,  upper whisker =0.762613, upper quartile = 0.741276, median = 0.733168, lower quartile = 0.723363, lower whisker = 0.714112},]coordinates{};

\addplot+ [color = lime!70!black,solid,boxplot prepared = {box extend=0.4,  upper whisker =0.732723, upper quartile = 0.719416, median = 0.707596, lower quartile = 0.700607, lower whisker = 0.686914},]coordinates{};

\addplot+ [color = lime!70!black,solid,boxplot prepared = {box extend=0.4, upper whisker =0.705276, upper quartile = 0.697771, median = 0.684921, lower quartile = 0.678178, lower whisker = 0.652036},]coordinates{};

\addplot+ [color = lime!70!black,solid,boxplot prepared = {box extend=0.4,  upper whisker =0.692432, upper quartile = 0.680535, median = 0.672003, lower quartile = 0.656411, lower whisker = 0.637509},]coordinates{};

\addplot+ [color = lime!70!black,solid,boxplot prepared = {box extend=0.4,  upper whisker =0.668041, upper quartile = 0.651777, median = 0.645223, lower quartile = 0.638358, lower whisker = 0.615563},]coordinates{};

%segr tree
\addplot+ [color = yellow!70!black,solid,boxplot prepared = {box extend=0.4, draw position = 1, lower whisker = 0.571288, lower quartile = 0.574302, median = 0.575457, upper quartile = 0.576574, upper whisker = 0.577170},]coordinates{}; 
\addplot+ [color = yellow!70!black,solid,boxplot prepared = {box extend=0.4, draw position = 2, lower whisker = 0.864458, lower quartile = 0.870431, median = 0.873253, upper quartile = 0.875989, upper whisker = 0.882930},]coordinates{}; 
\addplot+ [color = yellow!70!black,solid,boxplot prepared = {box extend=0.4, draw position = 3, lower whisker = 0.930440, lower quartile = 0.933701, median = 0.935299, upper quartile = 0.938208, upper whisker = 0.939951},]coordinates{}; 
\addplot+ [color = yellow!70!black,solid,boxplot prepared = {box extend=0.4, draw position = 4, lower whisker = 0.950209, lower quartile = 0.953446, median = 0.955660, upper quartile = 0.956451, upper whisker = 0.961252},]coordinates{}; 
\addplot+ [color = yellow!70!black,solid,boxplot prepared = {box extend=0.4, draw position = 5, lower whisker = 0.959553, lower quartile = 0.963507, median = 0.964870, upper quartile = 0.966410, upper whisker = 0.969313},]coordinates{}; 
\addplot+ [color = yellow!70!black,solid,boxplot prepared = {box extend=0.4, draw position = 6, lower whisker = 0.970464, lower quartile = 0.971915, median = 0.972507, upper quartile = 0.973634, upper whisker = 0.974675},]coordinates{}; 
\addplot+ [color = yellow!70!black,solid,boxplot prepared = {box extend=0.4, draw position = 7, lower whisker = 0.973479, lower quartile = 0.975701, median = 0.976698, upper quartile = 0.977827, upper whisker = 0.980390},]coordinates{}; 
\addplot+ [color = yellow!70!black,solid,boxplot prepared = {box extend=0.4, draw position = 8, lower whisker = 0.974726, lower quartile = 0.979194, median = 0.980581, upper quartile = 0.981786, upper whisker = 0.983989},]coordinates{}; 
\addplot+ [color = yellow!70!black,solid,boxplot prepared = {box extend=0.4, draw position = 9, lower whisker = 0.980286, lower quartile = 0.981267, median = 0.982301, upper quartile = 0.983633, upper whisker = 0.985189},]coordinates{}; 
\addplot+ [color = yellow!70!black,solid,boxplot prepared = {box extend=0.4, draw position = 10, lower whisker = 0.982389, lower quartile = 0.984324, median = 0.984832, upper quartile = 0.985335, upper whisker = 0.986870},]coordinates{}; 
\addplot+ [color = yellow!70!black,solid,boxplot prepared = {box extend=0.4, draw position = 11, lower whisker = 0.983601, lower quartile = 0.985197, median = 0.985953, upper quartile = 0.986788, upper whisker = 0.988917},]coordinates{}; 

%rand grid
\addplot+ [color = blue!50!black,solid, boxplot prepared = {box extend=0.4, draw position = 1, upper whisker = 0.558332, upper quartile = 0.556623, median = 0.555612, lower quartile = 0.553640, lower whisker = 0.551390},]coordinates{}; 
\addplot+ [color = blue!50!black,solid,boxplot prepared = {box extend=0.4, draw position = 2, upper whisker = 0.786549, upper quartile = 0.781019, median = 0.778677, lower quartile = 0.776975, lower whisker = 0.773675},]coordinates{}; 
\addplot+ [color = blue!50!black,solid,boxplot prepared = {box extend=0.4, draw position = 3, upper whisker = 0.761652, upper quartile = 0.747456, median = 0.744271, lower quartile = 0.740453, lower whisker = 0.723683},]coordinates{}; 
\addplot+ [color = blue!50!black,solid,boxplot prepared = {box extend=0.4, draw position = 4, upper whisker = 0.724669, upper quartile = 0.716603, median = 0.712547, lower quartile = 0.708569, lower whisker = 0.696026},]coordinates{}; 
\addplot+ [color = blue!50!black,solid,boxplot prepared = {box extend=0.4, draw position = 5, upper whisker = 0.695507, upper quartile = 0.677721, median = 0.670677, lower quartile = 0.666027, lower whisker = 0.652375},]coordinates{}; 
\addplot+ [color = blue!50!black,solid,boxplot prepared = {box extend=0.4, draw position = 6, upper whisker = 0.658490, upper quartile = 0.653778, median = 0.643803, lower quartile = 0.637674, lower whisker = 0.631237},]coordinates{}; 
\addplot+ [color = blue!50!black,solid,boxplot prepared = {box extend=0.4, draw position = 7, upper whisker = 0.646484, upper quartile = 0.636930, median = 0.628301, lower quartile = 0.623119, lower whisker = 0.603993},]coordinates{}; 
\addplot+ [color = blue!50!black,solid,boxplot prepared = {box extend=0.4, draw position = 8, upper whisker = 0.622749, upper quartile = 0.612773, median = 0.606101, lower quartile = 0.597162, lower whisker = 0.586179},]coordinates{}; 
\addplot+ [color = blue!50!black,solid,boxplot prepared = {box extend=0.4, draw position = 9, upper whisker = 0.610267, upper quartile = 0.605699, median = 0.592222, lower quartile = 0.587106, lower whisker = 0.580332},]coordinates{}; 
\addplot+ [color = blue!50!black,solid,boxplot prepared = {box extend=0.4, draw position = 10, upper whisker = 0.597261, upper quartile = 0.591855, median = 0.582898, lower quartile = 0.579444, lower whisker = 0.568106},]coordinates{}; 
\addplot+ [color = blue!50!black,solid,boxplot prepared = {box extend=0.4, draw position = 11, upper whisker = 0.587987, upper quartile = 0.581270, median = 0.576112, lower quartile = 0.573518, lower whisker = 0.556067},]coordinates{};

%fixed grid
\addplot+ [color = red!70!black,solid,boxplot prepared = {box extend=0.4, draw position = 1, lower whisker = 0.573521, lower quartile = 0.575512, median = 0.576715, upper quartile = 0.577702, upper whisker = 0.579620},]coordinates{}; 
\addplot+ [color = red!70!black,solid,boxplot prepared = {box extend=0.4, draw position = 2, lower whisker = 0.859387, lower quartile = 0.870538, median = 0.872640, upper quartile = 0.874923, upper whisker = 0.879567},]coordinates{}; 
\addplot+ [color = red!70!black,solid,boxplot prepared = {box extend=0.4, draw position = 3, lower whisker = 0.931284, lower quartile = 0.933401, median = 0.935217, upper quartile = 0.936746, upper whisker = 0.938053},]coordinates{}; 
\addplot+ [color = red!70!black,solid,boxplot prepared = {box extend=0.4, draw position = 4, lower whisker = 0.951478, lower quartile = 0.953341, median = 0.954085, upper quartile = 0.954621, upper whisker = 0.955221},]coordinates{}; 
\addplot+ [color = red!70!black,solid,boxplot prepared = {box extend=0.4, draw position = 5, lower whisker = 0.959751, lower quartile = 0.962712, median = 0.963661, upper quartile = 0.963842, upper whisker = 0.966639},]coordinates{}; 
\addplot+ [color = red!70!black,solid,boxplot prepared = {box extend=0.4, draw position = 6, lower whisker = 0.967118, lower quartile = 0.969189, median = 0.969529, upper quartile = 0.970626, upper whisker = 0.971680},]coordinates{}; 
\addplot+ [color = red!70!black,solid,boxplot prepared = {box extend=0.4, draw position = 7, lower whisker = 0.970371, lower quartile = 0.972813, median = 0.973415, upper quartile = 0.974614, upper whisker = 0.976680},]coordinates{}; 
\addplot+ [color = red!70!black,solid,boxplot prepared = {box extend=0.4, draw position = 8, lower whisker = 0.975417, lower quartile = 0.975686, median = 0.976479, upper quartile = 0.977643, upper whisker = 0.978513},]coordinates{}; 
\addplot+ [color = red!70!black,solid,boxplot prepared = {box extend=0.4, draw position = 9, lower whisker = 0.976438, lower quartile = 0.977982, median = 0.978677, upper quartile = 0.979150, upper whisker = 0.980169},]coordinates{}; 
\addplot+ [color = red!70!black,solid,boxplot prepared = {box extend=0.4, draw position = 10, lower whisker = 0.977814, lower quartile = 0.979731, median = 0.980077, upper quartile = 0.981225, upper whisker = 0.981783},]coordinates{}; 
\addplot+ [color = red!70!black,solid,boxplot prepared = {box extend=0.4, draw position = 11, lower whisker = 0.980028, lower quartile = 0.981022, median = 0.981347, upper quartile = 0.981774, upper whisker = 0.983726},]coordinates{}; 

\end{axis}
\end{tikzpicture}
}
\end{minipage}
\\

\ref{named}
\caption{Local segregation of pairwise stable networks in the \deigame obtained by iterative best add-only moves for $n=1000$ over 50 runs starting from a random or segregated tree and grid.}
\label{plot:DEINCG_AddOnly}
\end{figure}

\section{Conclusion}

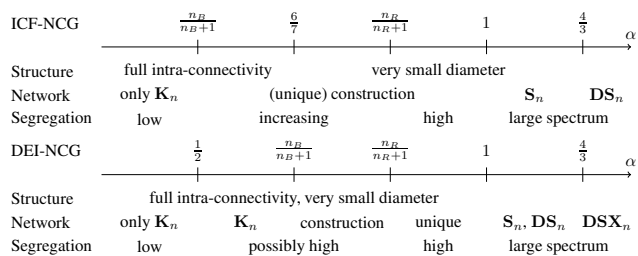
\begin{figure}
	\centering
\resizebox{\columnwidth}{!}{
	\begin{tikzpicture}
		\draw[->] (0,0) -- (11,0);
		%label positions
		\node (c0) at (0,.5) {};
		\node (c1) at (2,.5) {};
		\node (c2) at (4,.5) {};
		\node (c3) at (6,.5) {};
		\node (c4) at (8,.5) {};
		\node (c5) at (10,.5) {};
		\node (c6) at (11,.5) {};
		\foreach \i in {0,1,2,3,4,5,6}{
	    \coordinate (s\i) at ($(c\i) + (0,-1)$);
	    \coordinate (n\i) at ($(c\i) + (0,-1.5)$);
	    \coordinate (se\i) at ($(c\i) + (0,-2)$);}

		\foreach \i in {1,2,3,4,5}
		\draw ($(c\i) + (0,-.35)$) -- ($(c\i) + (0,-.65)$) ;

		\node at (c1) {$\frac {\numb{B}}{\numb{B}+1}$};
		\node at (c2) {$\frac 67$};
		\node at (c3) {$\frac {\numb{R}}{\numb{R}+1}$};
		\node at (c4) {$1$};
		\node at (c5) {$\frac 43$};
		
		\node at ($(c6) + (0,-.25)$) {$\alpha$};
		
		\node[anchor = west] at (-2,.5) {\icfgame};
		\node[anchor = west] at (-2,-.5) {Structure};
		\node[anchor = west] at (-2,-1) {Network};
		\node[anchor = west] at (-2,-1.5) {Segregation};
		
		%structure
		\node at (barycentric cs:s0=0.5,s2=0.5) {full intra-connectivity};
		% \draw ($(s2) + (0,.25)$) -- ($(s2) + (0,-.25)$) ;
		\node at (barycentric cs:s2=0.5,s5=0.5) {very small diameter};
		% \draw ($(s5) + (0,.25)$) -- ($(s5) + (0,-.25)$) ;
		
		%networks
		\node at (barycentric cs:n0=0.5,n1=0.5) {only $\com_n$};
		\node at (barycentric cs:n1=0.5,n4=0.5) {(unique) construction};
		\node at (barycentric cs:n4=0.5,n5=0.5) {$\sta_n$};
		\node at (barycentric cs:n6=0.5,n5=0.5) {$\dsta_n$};
		
		%segregation
		\node at (barycentric cs:se0=0.5,se1=0.5) {low};
		\node at (barycentric cs:se3=0.5,se1=0.5) {increasing};
		\node at (barycentric cs:se3=0.5,se4=0.5) {high};
		\node at (barycentric cs:se4=0.5,se6=0.5) {large spectrum};
	\end{tikzpicture}
} %first resizebox
\resizebox{\columnwidth}{!}{
	\begin{tikzpicture}
		\draw[->] (0,0) -- (11,0);
		%label positions
		\node (c0) at (0,.5) {};
		\node (c1) at (2,.5) {};
		\node (c2) at (4,.5) {};
		\node (c3) at (6,.5) {};
		\node (c4) at (8,.5) {};
		\node (c5) at (10,.5) {};
		\node (c6) at (11,.5) {};
		\foreach \i in {0,1,2,3,4,5,6}{
	    \coordinate (s\i) at ($(c\i) + (0,-1)$);
	    \coordinate (n\i) at ($(c\i) + (0,-1.5)$);
	    \coordinate (se\i) at ($(c\i) + (0,-2)$);}

		\foreach \i in {1,2,3,4,5}
		\draw ($(c\i) + (0,-.35)$) -- ($(c\i) + (0,-.65)$) ;

		\node at (c1) {$\frac 12$};
		\node at (c2) {$\frac {\numb{B}}{\numb{B}+1}$};
		\node at (c3) {$\frac {\numb{R}}{\numb{R}+1}$};
		\node at (c4) {$1$};
		\node at (c5) {$\frac 43$};
		
		\node at ($(c6) + (0,-.25)$) {$\alpha$};
		
		\node[anchor = west] at (-2,.5) {\deigame};
		\node[anchor = west] at (-2,-.5) {Structure};
		\node[anchor = west] at (-2,-1) {Network};
		\node[anchor = west] at (-2,-1.5) {Segregation};
		
		%structure
		\node at (barycentric cs:s0=0.5,s4=0.5) {full intra-connectivity, very small diameter};
		
		%networks
		\node at (barycentric cs:n0=0.5,n1=0.5) {only $\com_n$};
		\node at (barycentric cs:n1=0.5,n2=0.5) {$\com_n$};
		\node at (barycentric cs:n2=0.5,n3=0.5) {construction};
		\node at (barycentric cs:n3=0.5,n4=0.5) {unique};
		\node at (barycentric cs:n4=0.5,n5=0.5) {$\sta_n$, $\dsta_n$};
		\node at (barycentric cs:n6=0.5,n5=0.5) {$\dss_n$};
		
		%segregation
		\node at (barycentric cs:se0=0.5,se1=0.5) {low};
		\node at (barycentric cs:se3=0.5,se1=0.5) {possibly high};
		\node at (barycentric cs:se3=0.5,se4=0.5) {high};
		\node at (barycentric cs:se4=0.5,se6=0.5) {large spectrum};
	\end{tikzpicture}
} %second resizebox
	\caption{Overview of our theoretical results. We display structural properties of pairwise stable networks, explicit pairwise stable networks and findings about the segregation of pairwise stable networks. The two models behave surprisingly similar.}\label{fig:results}
\end{figure}

We have investigated two variants of network creation games that consider heterogeneous edge creation of agents acting according to homophily. Our main goal was to analyze segregation within reasonable networks measured by pairwise stability. Our results are summarized in \Cref{fig:results}. Even though our two game models feature two seemingly orthogonal perspectives based on a direct and an indirect consideration of homophily, their qualitative behavior is surprisingly similar. 

Clearly, stable networks are highly integrated for a very small edge cost, when agents can afford to buy all available edges. Once our cost parameter reaches the sweet spot where agents need to balance neighborhood and distance cost, there is provably high segregation, following from characterizations of stable networks. For slightly larger edge cost, our theoretical results cannot give a clear tendency of the segregation strength. In principle, both low and high segregation can be achieved by stable networks. 
Therefore, we performed an average-case analysis by running extensive simulation experiments. These experiments provide general tendencies about segregation contrasting the large theoretical spectrum for $\alpha\ge 1$.
Most importantly, we observe low segregation under integrated initial conditions if edges cannot be deleted. This yields an escape route from segregation by a high initial investment establishing permanent integration.

\section*{Acknowledgements}
Funded by the Deutsche Forschungsgemeinschaft (German
Research Foundation): BR 2312/12-1 and LE 3445/3-1.

\bibliographystyle{named}
\bibliography{beyondschelling}

\appendix

\section{Missing Proofs}

In this appendix, we provide missing proofs.

\subsection{Increasing Comfort among Friends}
For the analysis of pairwise stability, we frequently have to compute an agent's cost change after creating or severing one edge. To clarify the calculations, we gather the respective formulae in a technical lemma. 

\begin{restatable}{lemma}{OneStep}\label{lem:one_step_benefit}
	Consider a network $G=(\ag{},E)$ and an agent $u\in \ag{}$ in the \icfgame. Consider an agent $v\in \ag{\mathcal T(u)}$ of the same type and an agent $w\in \ag{}\setminus \ag{\mathcal T(u)}$ of a different type. Then, the following statements hold:
	\begin{enumerate}
		\item $\Ncost{G+uv}{u} - \Ncost{G}{u} = \alpha\left(1+\frac{\fr{G}{u}-\deg{G}{u}+1}{(\fr{G}{u}+1)(\fr{G}{u}+2)}\right)$ if $uv\notin E$ (creation of a monochromatic edge),
		\item $\Ncost{G-uv}{u} - \Ncost{G}{u} = -\alpha\left(1+\frac{\fr{G}{u}-\deg{G}{u}+1}{(\fr{G}{u}+1)\fr{G}{u}}\right)$ if $uv\in E$ (deletion of a monochromatic edge),
		\item $\Ncost{G+uw}{u} - \Ncost{G}{u} = \alpha\left(1+\frac{1}{\fr{G}{u}+1}\right)$ if $uw\notin E$ (creation of a bichromatic edge), and
		\item $\Ncost{G-uw}{u} - \Ncost{G}{u} = -\alpha\left(1+\frac{1}{\fr{G}{u}+1}\right)$ if $uw\in E$ (deletion of a bichromatic edge).
	\end{enumerate}
\end{restatable}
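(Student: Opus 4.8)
The plan is to reduce all four identities to a single routine computation by introducing the shorthand $d := \deg{G}{u}$ and $f := \fr{G}{u}$ and rewriting the neighborhood cost in the closed form
\[
\Ncost{G}{u} = \alpha\, d\left(1 + \frac{1}{f+1}\right) = \alpha\, d\,\frac{f+2}{f+1}.
\]
The entire lemma then follows by tracking exactly how the two parameters $d$ and $f$ change under each of the four local operations and substituting into this formula. The key structural observation that organizes the proof is that a \emph{bichromatic} operation changes only the degree $d$ by $\pm 1$ while leaving the friend count $f$ untouched, whereas a \emph{monochromatic} operation changes both $d$ and $f$ by $\pm 1$ simultaneously. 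I would treat these two regimes separately.

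First I would dispatch the bichromatic cases (3) and (4), which are essentially immediate. For creation of $uw$ with $w$ of a different type, we have $\fr{G+uw}{u} = f$ and $\deg{G+uw}{u} = d+1$, so the factor $\alpha\bigl(1 + \tfrac{1}{f+1}\bigr)$ is common to both $\Ncost{G+uw}{u}$ and $\Ncost{G}{u}$, and the difference telescopes to $\Ncost{G+uw}{u} - \Ncost{G}{u} = \alpha\bigl((d+1) - d\bigr)\bigl(1 + \tfrac{1}{f+1}\bigr) = \alpha\bigl(1 + \tfrac{1}{f+1}\bigr)$, as claimed; the deletion case (4) is identical with $d-1$ in place of $d+1$, yielding the negated expression.

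The monochromatic cases (1) and (2) are where the actual work lies, so I expect the algebra there to be the main (though still elementary) obstacle. For creation of $uv$ with $v$ of the same type, both parameters increase, giving $\Ncost{G+uv}{u} = \alpha(d+1)\frac{f+3}{f+2}$, and I would compute $\Ncost{G+uv}{u} - \Ncost{G}{u}$ over the common denominator $(f+1)(f+2)$. The crux is to show that the resulting numerator $(d+1)(f+3)(f+1) - d(f+2)^2$ simplifies exactly to $(f+1)(f+2) + (f - d + 1)$, after which dividing by $(f+1)(f+2)$ produces the stated form $\alpha\bigl(1 + \frac{f - d + 1}{(f+1)(f+2)}\bigr)$. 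For the deletion case (2), both parameters decrease, so $\Ncost{G-uv}{u} = \alpha(d-1)\frac{f+1}{f}$, and the analogous manipulation over the denominator $(f+1)f$ must yield the numerator $-\bigl(f(f+1) + (f - d + 1)\bigr)$, giving the negated expression.

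I would close by noting that these last two expansions are the only place requiring care: one must resist sign errors in collecting the $f^2$, $f$, and $d$ terms, but no nonlinear estimate or case distinction on $\alpha$ is needed, since the lemma is a pure identity valid for every configuration. This is why I would present the bichromatic identities first (as a warm-up that fixes the common factoring idea) and then carry out the two monochromatic expansions explicitly.
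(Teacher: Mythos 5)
Your proposal is correct and follows essentially the same route as the paper: a direct substitution into the closed form $\Ncost{G}{u}=\alpha\,\deg{G}{u}\bigl(1+\frac{1}{\fr{G}{u}+1}\bigr)$, tracking how $\deg{G}{u}$ and $\fr{G}{u}$ change under each operation (the paper writes out the same four subtractions without your $d,f$ shorthand, and your expanded numerators $f^2+4f+3-d$ and $-(f^2+2f+1-d)$ check out against the stated identities). The only cosmetic difference is your organizing remark that bichromatic moves fix $f$ while monochromatic moves shift both parameters, which the paper leaves implicit.
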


\begin{proof}
	We perform the calculations for each case accordingly. Let $G'$ be the network after the respective edge creation or deletion.
	\begin{enumerate}
		\item Creation of a monochromatic edge: $\Ncost{G'}{u}-\Ncost{G}{u} = (\deg{G}{u}+1)\cdot\alpha\left(1+\frac{1}{\fr{G}{u}+2}\right) - \deg{G}{u}\cdot\alpha\left(1+\frac{1}{\fr{G}{u}+1}\right) = \alpha\left(1+\frac{\fr{G}{u}-\deg{G}{u}+1}{(\fr{G}{u}+1)(\fr{G}{u}+2)}\right)$.
		\item Deletion of a monochromatic edge: $\Ncost{G'}{u}-\Ncost{G}{u} = (\deg{G}{u}-1)\cdot\alpha\left(1+\frac{1}{\fr{G}{u}}\right) - \deg{G}{u}\cdot\alpha\left(1+\frac{1}{\fr{G}{u}+1}\right) = -\alpha\left(1+\frac{\fr{G}{u}-\deg{G}{u}+1}{(\fr{G}{u}+1)\fr{G}{u}}\right)$.
		\item Creation of a bichromatic edge: $\Ncost{G'}{u}-\Ncost{G}{u} = (\deg{G}{u}+1)\cdot\alpha\left(1+\frac{1}{\fr{G}{u}+1}\right) - \deg{G}{u}\cdot\alpha\left(1+\frac{1}{\fr{G}{u}+1}\right) = \alpha\left(1+\frac{1}{\fr{G}{u}+1}\right)$.
		\item Deletion of a bichromatic edge: $\Ncost{G'}{u}-\Ncost{G}{u} = (\deg{G}{u}-1)\cdot\alpha\left(1+\frac{1}{\fr{G}{u}+1}\right) - \deg{G}{u}\cdot\alpha\left(1+\frac{1}{\fr{G}{u}+1}\right) = -\alpha\left(1+\frac{1}{\fr{G}{u}+1}\right)$. \qedhere
	\end{enumerate}
	
\end{proof}

Next, we provide proofs for the collected statements about \icfgame{s} concerning structural properties of pairwise stable networks and simple pairwise stable networks.

\ICFproperties*
\begin{proof}
	We prove the statements one after another.
	
	\begin{enumerate}
		\item 
		Let $\alpha < \frac 67$. Assume that a network $G = (\ag{},E)$ is given that is not fully intra-connected. Let $u, v\in \ag{}$ be agents of the same type with $uv\notin E$. Define $G' = G + uv$. We will show that $\Tcost{G'}{u} - \Tcost{G}{u} < 0$ (the computation for $v$ is identical). We can assume that $\deg{G}{u}\ge 1$, because otherwise agent $u$'s cost would be infinite and adding $uv$ would be beneficial. We compute the difference in the neighborhood cost, using \Cref{lem:one_step_benefit} in the first equality.
		
		\begin{align*}
		&\Ncost{G'}{u} - \Ncost{G}{u} = \alpha\left(1+\frac{\fr{G}{u}-\deg{G}{u}+1}{(\fr{G}{u}+1)(\fr{G}{u}+2)}\right)\\
		&= \alpha \left(\frac {\fr{G}{u}+3}{\fr{G}{u}+2} - \deg{G}{u}\frac 1 {(\fr{G}{u}+2)(\fr{G}{u}+1)}\right) \\
		&\le \alpha \left(\frac {\fr{G}{u}+3}{\fr{G}{u}+2} - \frac 1 {(\fr{G}{u}+2)(\fr{G}{u}+1)}\right)\text.
		\end{align*}

		Now, consider the function $f\colon \mathbb R_{\ge0}\to \mathbb R, f(x) = \frac {x+3}{x+2} - \frac 1 {(x+2)(x+1)}$. This function attains its maximum for $x = \sqrt 2$ and is monotonically increasing for $0\le x \le \sqrt 2$ and monotonically decreasing for $x\ge \sqrt 2$. Moreover, $f(1) = f(2) = \frac 76$. Hence, the maximum attained by integer values is $\frac 76$. We conclude that $\Ncost{G'}{u} - \Ncost{G}{u} \le \frac 76 \alpha < 1$. Since $\Dcost{G'}{u} - \Dcost{G}{u} \le -1$, we obtain $\Tcost{G'}{u} - \Tcost{G}{u} < 0$. Hence, creation of the edge $uv$ is beneficial for $u$.
		\item
		Let $\alpha < \frac43$ and consider a pairwise stable network $G$. In particular, $G$ is connected. Assume that there are agents $v$ and $w$ of distance at least $3$. We will show that $G' = G + vw$ is better for both of these agents, contradicting the pairwise stability of $G$.
		
		The same computations as in the proof of the first property show that the neighborhood cost increases by at most $\frac 76 \alpha$ if $vw$ is monochromatic. On the other hand, if $vw$ is bichromatic, then the neighborhood cost increases by at most $\frac 32 \alpha$. Since the distance cost decreases by at least $2$, we conclude that $\Tcost{G'}{x} - \Tcost{G}{x} < 0$ for $\alpha < \frac43$ and $x\in \{v,w\}$.
		
		The curiosity of one agent type follows from the fact that two agents from different types, which are both not curious, must have distance at least $3$.
		\item 
		Let $\alpha < 1$ and assume that $v$ is a curious agent of a network $G = (\ag{},E)$. Consider an agent $w$ of the same type such that $vw\notin E$.
		Then,
		\begin{align*}
		&\Ncost{G'}{u} - \Ncost{G}{u}\\ & = \alpha \left(\frac {\fr{G}{u}+3}{\fr{G}{u}+2} - \frac {\deg{G}{u}} {(\fr{G}{u}+2)(\fr{G}{u}+1)}\right)\\
		& \le \alpha \left(\frac {\fr{G}{u}+3}{\fr{G}{u}+2} - \frac {\fr{G}{u}+1} {(\fr{G}{u}+2)(\fr{G}{u}+1)}\right) = \alpha < 1\text.
		\end{align*}
		
		The first equality is derived by the same computations as in the proof of the first property. Consequently, $\Tcost{G'}{u} - \Tcost{G}{u} < 0$. Hence, if $v$ and $w$ are both curious agents of the same type, then the edge $vw$ must be present in any pairwise stable network. 
		\item
		We start to show that $\com_n$ is pairwise stable for $\alpha\leq\frac{\numb{B}}{\numb{B}+1}$.
		
		To this end, we show that no edge can be deleted by one of its endpoints.
		Consider a pair of agents $u, v\in \ag{}$. If they are of the same type, then severing the edge $uv$ by $u$ decreases her cost by 
		\begin{align*}
		&\Tcost{G-uv}{u} - \Tcost{G}{u} = - \alpha\left(1+\frac{\fr{G}{u}-\deg{G}{u}+1}{(\fr{G}{u}+1)\fr{G}{u}}\right)\\
		&= -\alpha\left(1+\frac{\fr{G}{u}+2-n}{\fr{G}{u}(\fr{G}{u}+1)}\right)+1\\
		&\geq  -\frac{{\numb{B}}}{{\numb{B}}+1}\cdot\frac{n^2-n+1}{(n-1)n}+1\\
		&\geq -\frac{n}{n+1}\cdot\frac{n^2-n+1}{(n-1)n}+1\geq 0.
		\end{align*}
		Hence, no agent can improve her strategy by severing an edge to an agent of the same color.
		
		If $u$ and $v$ have different colors, the cost decrease is
		\begin{align*}
		&\Tcost{G-uv}{u} - \Tcost{G}{u} = -\alpha\left(1+\frac{1}{\fr{G}{u}+1}\right)+1\\&\geq -\alpha\left(1+\frac{1}{\numb{B}}\right)+1\geq - \frac{\numb{B}}{\numb{B}+1}\cdot \frac{\numb{B}+1}{\numb{B}}+1=0.
		\end{align*}
		Therefore, there is no improving move for any agent in the network, which implies that $\com_n$ is pairwise stable.
		
		For the uniqueness, consider any pairwise stable network $G = (\ag{},E)$ and assume that $\alpha < \min\{\frac{6}{7}, \frac{\numb{B}}{\numb{B}+1}\}$. Note that $G$ is fully intra-connected according to \Cref{prop:ICF-properties}(\ref{prop:fullintraconn}). Assume for contradiction that there are two agents $u, v\in \ag{}$ with $uv\notin E$ which have a different type.
		
		Then, creating the edge $uv$ increases the neighborhood cost for each involved agent by at most $\alpha\left(1+\frac{1}{\fr{G}{u}+1}\right)\leq \alpha\left(1+\frac{1}{\numb{B}}\right) < 1$, while it decreases the distance to at least one node, a contradiction. Hence, $uv\in E$, which implies that $G$ is a clique.
		
		\item
		Consider a star graph $\sta_n$ with central node $c$. 
		To show that $\sta_n$ is pairwise stable, we need to prove that no two leaves can jointly create an edge. 
		Consider two leafs $u$ and $v$. 
		There can be a few possible situations. The first two cases cover the case that $c$ and one of $u$ and $v$ are of the same color, say $u\in \ag{\mathcal T(c)}$. If $v\in \ag{\mathcal T(c)}$, then creating $uv$ causes an increase in neighborhood cost of $\Ncost{\sta_n+uv}{u} - \Ncost{\sta_n}{u} = \alpha\left(1+\frac{1}{6}\right)=\frac{7}{6}\alpha$, while the distance cost is only decreased by $1$. Hence, for $\alpha\geq 1$, creating the edge $uv$ is not beneficial for $u$. If $v$ has a different color, then $\Ncost{\sta_n+uv}{u} - \Ncost{\sta_n}{u} = \frac{3}{2}\alpha$, and $u$ would again prevent the creation of $uv$.
		
		It remains that $u$ and $v$ both have a different color from $c$. If $v\in \ag{\mathcal T(u)}$, then creating the edge $uv$ increases the neighborhood cost by $\alpha$ and decreases the distance cost by 1 for both $u$ and $v$. Thus, since $\alpha\geq 1$, this is not beneficial.
		
		If all three nodes $u, v,$ and $c$ have different colors, then the creation of the edge $uv$ increases the neighborhood cost of $u$ by $2\alpha\ge 2$ and decreases her distance cost by only $1$.
		
		Therefore, no pair of nodes can create an edge to improve their cost. Clearly, also no edge can be unilaterally deleted. The assertion follows. \qedhere
	\end{enumerate}
\end{proof}

The proof of existence of pairwise stable networks for multiple types and an intermediate range of $\alpha$ has a similar structure as the special case of two types. In particular, the structure of the subnetwork induced by the agents in $\ag{B}\cup \ag{T}$ for any type $T\in \mathcal T$ with $T\neq B$ is essentially the same. However, dependent on $\alpha$, agents from larger communities might have an incentive to maintain further bichromatic edges. For completeness and to get acquainted with the relevant networks, we also provide the complete proof for two agent types, for which we only provided the construction in the body of the paper.

\ExPSNintermed*

\begin{proof}[Complete proof for two agent types]
	Consider an instance of the \icfgame and let $\frac{\numb{B}}{\numb{B}+1} \le \alpha < 1$. We will define a stable network for $\alpha$ dependent on the threshold $\tau = \frac {\numb{B}(\numb{B} + 1)}{\numb{B}(\numb{B}+1)+1}$. Note that $\frac{\numb{B}}{\numb{B}+1} < \tau <1$, as $\numb{B}(\numb{B} + 1) > \numb{B}$.
	
	We assume $\ag{B} = \{b_1,\dots, b_{\numb{B}}\}$ and $\ag{R} = \{r_1,\dots, r_{\numb{R}}\}$ and define the edge set of the graph $G = (\ag{},E)$ as follows:
	
	\begin{itemize}
		\item $\{x_i,x_j\}\in E$, for $x\in \{b,r\}, i,j\in \{1,\dots, \numb{B}\}$,
		\item $\{r_i,b_i\}\in E$, for $i\in \{1,\dots, \numb{B}\}$,
		\item $\{r_i, r_j\}\in E$, for $i\in \{1,\dots, \numb{B}\}$ and $j\in \{\numb{B}+1,\dots, \numb{R}\}$,
		\item if $\alpha < \tau$, then $\{r_i, r_j\}\in E$, for $i,j\in \{\numb{B}+1,\dots, \numb{R}\}$, and no further edges are in $E$,
		\item otherwise, no further edges are in $E$.
	\end{itemize} 
	The network $G$ is illustrated in \Cref{fig:PSN-exist-intermed2}.
	\begin{figure}[h]
		\centering
		\begin{tikzpicture}[scale = .83]
		\node[protovertex] (b2) at (0.5,0) {};
		\node[protovertex] (b1) at (108:1) {};
		\node[protovertex] (b3) at (252:1) {};
		
		\node (c) at (3,0){};
		
		\node[protodiam] (r4) at ($(c) + (36:1)$) {};
		\node[protodiam] (r1) at ($(c) + (108:1)$) {};
		\node[protodiam] (r2) at ($(c) + (180:1)$) {};
		\node[protodiam] (r3) at ($(c) + (252:1)$) {};
		\node[protodiam] (r5) at ($(c) + (324:1)$) {};
		
		\foreach \i in {1,2,3}
		{\draw (b\i) edge (r\i);
			\draw (r4) edge (r\i);
			\draw (r5) edge (r\i);
		}
		\foreach \x/\y in {1/2,2/3,3/1}
		{\draw  (b\x) edge (b\y);
			\draw  (r\x) edge (r\y);}
		\draw (r4) edge (r5);
		\end{tikzpicture}
		\qquad
		\begin{tikzpicture}[scale = .83]
		\node[protovertex] (b2) at (0.5,0) {};
		\node[protovertex] (b1) at (108:1) {};
		\node[protovertex] (b3) at (252:1) {};
		
		\node (c) at (3,0){};
		
		\node[protodiam] (r4) at ($(c) + (36:1)$) {};
		\node[protodiam] (r1) at ($(c) + (108:1)$) {};
		\node[protodiam] (r2) at ($(c) + (180:1)$) {};
		\node[protodiam] (r3) at ($(c) + (252:1)$) {};
		\node[protodiam] (r5) at ($(c) + (324:1)$) {};
		
		\foreach \i in {1,2,3}
		{\draw (b\i) edge (r\i);
			\draw (r4) edge (r\i);
			\draw (r5) edge (r\i);
		}
		\foreach \x/\y in {1/2,2/3,3/1}
		{\draw  (b\x) edge (b\y);
			\draw  (r\x) edge (r\y);}
		\end{tikzpicture}
		\caption{Pairwise stable networks for $\frac{\numb{B}}{\numb{B}+1} \le \alpha < \tau$ (left) and $\tau \le \alpha < 1$ (right).}
		\label{fig:PSN-exist-intermed2}
	\end{figure}
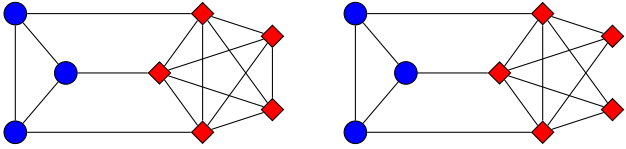
	
	\noindent We claim that $G$ is pairwise stable. First, we show that no agent can sever an edge. Let $i\in \{1,\dots, \numb{B}\}$ and $j,k\in \{\numb{B}+1,\dots, \numb{R}\}$.
	
	\begin{itemize}
		\item If agent $b_i$ severs an edge to an agent of her type, the distance cost is increased by $1$ while the neighborhood cost is decreased by $\alpha < 1$ (which can be computed using \Cref{lem:one_step_benefit}). If a connection to a red agent is severed, then the distance to this neighbor increases by $2$ while the neighborhood cost is decreased by $\alpha \left(1 + \frac 1 {\numb{B}}\right)< 2\alpha <2$.
		\item The same considerations show that agents $r_i$ cannot sever edges to red and blue agents, respectively.
		\item The red agent $r_j$ cannot sever the edge towards agent $r_i$, because this improves the neighborhood cost by less than $2$ while it increases the distance to both $r_i$ and $b_i$ by $1$ each.
		\item Finally, consider the case that $\alpha < \tau$. Then, $r_j$ cannot sever $r_jr_k$ for $k\neq j$. Indeed, this would increase the distance cost by $1$ while saving a neighborhood cost of $\alpha \left(1+ \frac 1{\numb{R}(\numb{R}-1)}\right)\le \alpha \left(1+ \frac 1{(\numb{B}+1)\numb{B}}\right)$. Here, we use that such an edge can only exist if $\numb{R}\ge \numb{B}+1$. Hence, the total increase in cost is at least $1-\alpha \left(1+ \frac 1{(\numb{B}+1)\numb{B}}\right) = 1- \alpha \frac {\numb{B}(\numb{B}+1)+1}{(\numb{B}+1)\numb{B}} > 1- \tau \frac {\numb{B}(\numb{B}+1)+1}{(\numb{B}+1)\numb{B}} = 0$.
	\end{itemize}
	
	Next, we show that it is also not possible to add edges.
	
	\begin{itemize}
		\item Let $i\in \{1,\dots, \numb{B}\}$ and $j\in \{1,\dots, \numb{R}\}$  with $i\neq j$. Then, agent $b_i$ does not benefit from creating the edge $b_ir_j$. Indeed, this decreases her distance cost by exactly $1$ while it increases her neighborhood cost by $\alpha \frac {\numb{B}+1}{\numb{B}}\ge 1$, using the lower bound on $\alpha$.
		\item It remains the case of missing edges between red agents for large edge cost. Assume therefore $\alpha\ge \tau$ and let $i,j \in \{\numb{B}+1,\dots, \numb{R}\}$. Adding the edge $r_ir_j$ decreases the distance cost for $r_i$ by $1$ while increasing her neighborhood cost by $\alpha \left(1 + \frac 1 {\numb{B}(\numb{B}+1)}\right)\ge \tau \frac {\numb{B}(\numb{B}+1)+1}{(\numb{B}+1)\numb{B}} = 1$. Hence, creating this edge is not beneficial for $r_i$.
	\end{itemize}
	Thus, we have found stable networks for $\frac{\numb{B}}{\numb{B}+1} \le \alpha < 1$.
\end{proof}

Now, we give the proof for an arbitrary number of agent types.

\begin{proof}
	Consider an instance of \icfgame and let $\frac{\numb{B}}{\numb{B}+1} \le \alpha < 1$. Assume that we have ordered the types in increasing size, i.e., $\mathcal T = \{T_1,\dots, T_k\}$ where $T_1 = B$, $T_k = R$ and $\numb{T_1}\le \dots \le \numb{T_k}$. Suppose that $\ag{T_j} = \{t^1_j,\dots, t^{\numb{T_j}}_j\}$. We will define a stable network for $\alpha$ dependent on several thresholds for $\alpha$. In particular, there is a threshold $\tau = \frac {\numb{T_{k-1}}(\numb{T_{k-1}} + 1)}{\numb{T_{k-1}}(\numb{T_{k-1}}+1)+1}$, which plays a similar role as in the case of $2$ types. However, we have to consider further threshold values. Let therefore $2\le j \le k - 1$, and define $\tau_j = \frac{\numb{T_j}}{\numb{T_j} + 1}$.
	Note that $\frac{\numb{B}}{\numb{B}+1} \le \tau_2 \le \tau_3\le \dots \le  \tau_{k-1} < \tau < 1$ as $\numb{T_{k-1}}(\numb{T_{k-1}} + 1) > \numb{T_{k-1}}$.

	\begin{figure*}
		\centering
		\begin{tikzpicture}[scale = 0.8]
		\node[protovertex] (b1) at (108:1) {};
		\node[protovertex] (b2) at (252:1) {};
		\draw  (b1) edge (b2);
		
		\node (c) at (2,0){};

		\node[protodiam] (r4) at ($(c) + (96:1)$) {};
		\node[protodiam] (r1) at ($(c) + (168:1)$) {};
		\node[protodiam] (r2) at ($(c) + (240:1)$) {};
		\node[protodiam] (r3) at ($(c) + (312:1)$) {};
		\node[protodiam] (r5) at ($(c) + (14:1)$) {};
		
		\node (k) at (2,-4){};
		
		\node[protosq] (s2) at ($(k) + (90:1)$) {};
		\node[protosq] (s1) at ($(k) + (180:1.2)$) {};
		\node[protosq] (s3) at ($(k) + (0:1.2)$) {};

		\foreach \i in {1,2}
		{\draw (b\i) edge (r\i);
		}
		\draw (b1) edge (s2);
		\draw (b2) edge (s1);
		\foreach \i in {1,2,3}
		{
			\draw (r4) edge (r\i);
			\draw (r5) edge (r\i);
		}
		\foreach \x/\y in {1/2,1/3,2/3}
		{\draw  (s\x) edge (s\y);
			\draw  (r\x) edge (r\y);}
		\foreach \i in {1,2,3}{
		\foreach \j in {1,2,3,4,5}
		\draw (s\i) edge (r\j);}
		\draw (r4) edge (r5);
		\end{tikzpicture}
		\qquad
		\begin{tikzpicture}[scale = 0.8]
		\node[protovertex] (b1) at (108:1) {};
		\node[protovertex] (b2) at (252:1) {};
		\draw  (b1) edge (b2);
		
		\node (c) at (2,0){};

		\node[protodiam] (r4) at ($(c) + (96:1)$) {};
		\node[protodiam] (r1) at ($(c) + (168:1)$) {};
		\node[protodiam] (r2) at ($(c) + (240:1)$) {};
		\node[protodiam] (r3) at ($(c) + (312:1)$) {};
		\node[protodiam] (r5) at ($(c) + (14:1)$) {};
		
		\node (k) at (2,-4){};
		
		\node[protosq] (s2) at ($(k) + (90:1)$) {};
		\node[protosq] (s1) at ($(k) + (180:1.2)$) {};
		\node[protosq] (s3) at ($(k) + (0:1.2)$) {};

		\foreach \i in {1,2}
		{\draw (b\i) edge (r\i);
		}
		\draw (b1) edge (s2);
		\draw (b2) edge (s1);
		\foreach \i in {1,2,3}
		{\draw (s\i) edge (r\i);
			\draw (r4) edge (r\i);
			\draw (r5) edge (r\i);
		}
		\foreach \x/\y in {1/2,1/3,2/3}
		{\draw  (s\x) edge (s\y);
			\draw  (r\x) edge (r\y);}
		\draw (r4) edge (r5);
		\end{tikzpicture}
		\qquad
		\begin{tikzpicture}[scale = 0.8]
		\node[protovertex] (b1) at (108:1) {};
		\node[protovertex] (b2) at (252:1) {};
		\draw  (b1) edge (b2);
		
		\node (c) at (2,0){};

		\node[protodiam] (r4) at ($(c) + (96:1)$) {};
		\node[protodiam] (r1) at ($(c) + (168:1)$) {};
		\node[protodiam] (r2) at ($(c) + (240:1)$) {};
		\node[protodiam] (r3) at ($(c) + (312:1)$) {};
		\node[protodiam] (r5) at ($(c) + (14:1)$) {};
		
		\node (k) at (2,-4){};
		
		\node[protosq] (s2) at ($(k) + (90:1)$) {};
		\node[protosq] (s1) at ($(k) + (180:1.2)$) {};
		\node[protosq] (s3) at ($(k) + (0:1.2)$) {};

		\foreach \i in {1,2}
		{\draw (b\i) edge (r\i);
		}
		\draw (b1) edge (s2);
		\draw (b2) edge (s1);
		\foreach \i in {1,2,3}
		{\draw (s\i) edge (r\i);
			\draw (r4) edge (r\i);
			\draw (r5) edge (r\i);
		}
		\foreach \x/\y in {1/2,1/3,2/3}
		{\draw  (s\x) edge (s\y);
			\draw  (r\x) edge (r\y);}
		\end{tikzpicture}
		\caption{Illustration of the proof of \Cref{prop:exist-PSN-intermed}. We consider an \icfgame with $3$ types containing $3$, $4$, and $6$ agents, respectively. Hence, we consider the parameter range $\frac{\numb{B}}{\numb{B}+1} = \frac 23 \le \alpha < 1$. The pairwise stable networks are dependent on the thresholds $\tau_2 = \frac 34$ and $\tau = \frac {12}{13}$. We then find the pairwise stable networks for $\frac{\numb{B}}{\numb{B}+1} \le \alpha < \tau_2$ (left), $\tau_2\le \alpha < \tau$ (middle), and $\tau \le \alpha < 1$ (right).}
		\label{fig:PSN-exist-intermed-k}
	\end{figure*}
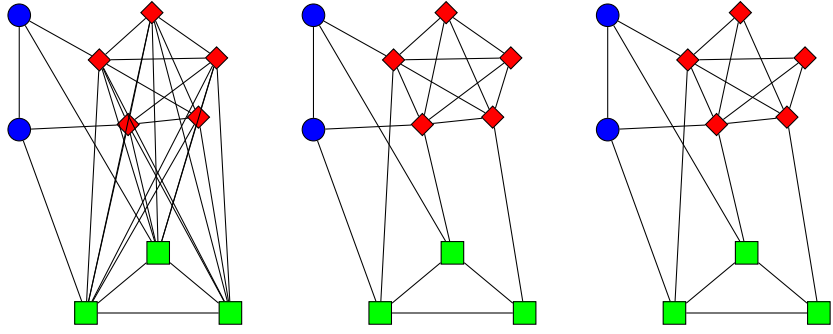

	We define the network $G = (\ag{},E)$ with edges given as follows:
	
	\begin{itemize}
		\item $\{t_j^i,t_j^l\}\in E$ for $1\le j\le k$, $1 \le i < l \le \min\{\numb{T_j},\numb{T_{k-1}}\}$,
		\item $\{t_j^i,t_l^i\}\in E$ for $1\le j < l \le k$, $1\le i\le \numb{T_j}$,
		\item $\{t_k^i, t_k^l\}\in E$ for $1\le i \le \numb{T_{k-1}}$ and $\numb{T_{k-1}}+1 \le l \le \numb{T_k}$,
		\item for each $2\le j \le k-1$, if $\alpha < \tau_j$, then $\{t_j^i,t_l^m\}\in E$ for $j < l \le k$, $1\le i\le \numb{T_j}$, and $1\le m\le \numb{T_l}$,
		\item if $\alpha < \tau$, then  $\{t_k^i, t_k^l\}\in E$ for $\numb{T_{k-1}}+1 \le i < l \le \numb{T_k}$, and
		\item no further edges are in $E$.
	\end{itemize} 
	
	The two cases for the network $G$ are illustrated in \Cref{fig:PSN-exist-intermed-k}.

	We claim that $G$ is pairwise stable. First, no agent can sever an edge. Let $1\le j \le k$, $1 \le i \le \numb{T_{k-1}}$, and $\numb{T_{k-1}}+1 \le l, m \le \numb{T_{k}}$.

	\begin{itemize}
		\item If agent $t_j^i$ severs an edge to an agent of her type, the distance cost is increased by $1$ while the neighborhood cost is decreased by $\alpha\left(1+\frac{\fr{G}{u}-\deg{G}{u}+1}{(\fr{G}{u}+1)\fr{G}{u}}\right) \le \alpha < 1$ (which can be computed with the aid of \Cref{lem:one_step_benefit}). 
		\item In the next two bullet points, we show that no agent can sever a bichromatic edge between an agent in $\ag{T_j}$ an agent of type $T_p$ for $j+1\le p \le k$. First, $t_1^i$ cannot sever a bichromatic edge, because then the distance to the adjacent neighbor increases by $2$ while the neighborhood cost is decreased by $\alpha \left(1 + \frac 1 {\numb{T_1}}\right)< 2\alpha <2$. For the same reason, the unique neighbor of $t_1^i$ in $\ag{T_p}$ for $2\le p \le k$ cannot sever the edge to $t_1^i$.
		\item Next consider the case that $2\le j\le k-1$. If $\alpha < \tau_j$, then severing an edge to a neighbor in $\ag{T_p}$ for $2\le p \le k$, because this increases the distance cost by $1$ while saving only a neighborhood cost of $\alpha \left(1 + \frac 1{\numb{T_j}}\right) < \tau_j \left(1 + \frac 1{\numb{T_j}}\right) = 1$. The neighbors in $\ag{T_p}$ have (weakly) more friends and would save even less neighborhood cost. In the case $\alpha \ge \tau_j$, there is again a unique neighbor of type $T_p$ and the case is analogous to the case for agents of type $T_1$. Thus, we have considered all bichromatic edges.
		\item The red agent $t_k^l$ cannot sever the edge towards agent $t_k^i$, because this improves the neighborhood cost by less than $2$ while it increases the distance to both $t_k^i$ and $t_1^i$ by $1$ each.
		\item Finally, consider the case that $\alpha < \tau$. Then, $t_k^l$ cannot sever $t_k^lt_k^m$ for $l\neq m$. Indeed, this would increase the distance cost by $1$ while saving a neighborhood cost of $\alpha \left(1+ \frac 1{\numb{T_k}(\numb{T_k}-1)}\right)\le \alpha \left(1+ \frac 1{(\numb{T_{k-1}}+1)\numb{T_{k-1}}}\right)$. Here, we use that such an edge can only exist if $\numb{T_k}\ge \numb{T_{k-1}}+1$. Hence, the total increase in cost is at least $1-\alpha \left(1+ \frac 1{(\numb{T_{k-1}}+1)\numb{T_{k-1}}}\right) = 1- \alpha \frac {\numb{T_{k-1}}(\numb{T_{k-1}}+1)+1}{(\numb{T_{k-1}}+1)\numb{T_{k-1}}} > 1- \tau \frac {\numb{T_{k-1}}(\numb{T_{k-1}}+1)+1}{(\numb{T_{k-1}}+1)\numb{T_{k-1}}} = 0$.
	\end{itemize}

	Next, we show that it is also not possible to add edges.

	\begin{itemize}
		\item As a first step, we show that agents cannot create bichromatic edges. Let therefore $1\le j < p \le k$ and let $1 \le i \le \numb{T_j}$ and $1\le l \le \numb{T_p}$ with $i\neq j$. Note that $t_j^it_p^l$ is present if $\alpha < \tau_j$ and $j\ge 2$. Hence, we assume that $\alpha\ge \tau_j$ if $j\ge 2$. Then, $t_j^i$ does not benefit from creating the edge $t_j^it_p^l$. Indeed, this decreases her distance cost by exactly $1$ while it increases her neighborhood cost by $\alpha \frac {\numb{T_j}+1}{\numb{T_{j}}}\ge 1$. There, we use that $\alpha\ge \frac{\numb{B}}{\numb{B}+1}$ if $j = 1$ and $\alpha\ge \tau_j$ if $j \ge 2$.
		\item It remains the case of missing edges between red agents for large edge cost. Assume therefore $\alpha\ge \tau$ and let $\numb{T_{k-1}}+1 \le i,l \in \numb{T_k}$. Adding the edge $t_k^it_k^l$ decreases the distance cost for $t_k^i$ by $1$ while increasing her neighborhood cost by $\alpha \left(1 + \frac 1 {\numb{T_{k-1}}(\numb{T_{k-1}}+1)}\right)\ge \tau \frac {\numb{T_{k-1}}(\numb{T_{k-1}}+1)+1}{(\numb{T_{k-1}}+1)\numb{T_{k-1}}} = 1$. Hence, creating this edge is not beneficial for $t_k^i$.
	\end{itemize}
	Together, we have found stable networks for $\alpha$ in the desired range.
\end{proof}
	
Now, we provide the complete proof of the uniqueness statement about pairwise stable networks for an intermediate parameter range.

\UniquePSNintermed*

\begin{proof}
	Let $\frac{\numb{R}}{\numb{R}+1} < \alpha < 1$ and assume that $G$ is pairwise stable network in the \icfgame with cost parameter $\alpha$. By \Cref{prop:ICF-properties}(\ref{prop:curioustype}), the diameter of $G$ is bounded by $2$ and there exists a curious type of agents. By \Cref{prop:ICF-properties}(\ref{prop:curiouscliques}), the curious type of agents forms a clique $C$ and the curious agents of the other type form a clique as well.
	
	Assume towards a contradiction that the bichromatic edges form no matching. Assume that there is an agent $x\in C$ that maintains bichromatic edges with two different agents $y$ and~$z$. We will show that agent $y$ has an incentive to sever the edge $xy$. Consider therefore the network $G' = G - xy$. First, the distance cost of $y$ decreases by at most $1$. Indeed, since all agents of the type of $x$ are still curious in $G'$ and since $y$ forms edges to all curious agents of her type, the distance to all these agents is $2$ in $G'$ and $1$ to agents other than $x$ to which a bichromatic edge exists in $G$. Also, since $y$ is connected to all curious agents of her type, the shortest paths to agents of her own type in $G$ cannot use $x$ and still exist after severing the edge~$xy$.
	Now, the neighborhood cost decreases by $\alpha \left(1 + \frac 1 {\fr{G}{y} + 1}\right) \ge \alpha \left(1 + \frac 1 {\numb{R}}\right) > 1 $. Hence, no agent in $C$ maintains more than one bichromatic edge.
	Now, assume that two agents $w, x\in C$ maintain a bichromatic edge to the same agent $y$. It is quickly checked that severing $xy$ increases the distance cost by $1$ for $y$ and her neighborhood cost decreases by more than $1$, as above.
	
	Together, the bichromatic edges form a matching. Hence, only a minority type can be a curious type and we can conclude that the blue agents are fully intra-connected and that the matching of bichromatic edges is of size $\numb{B}$. It remains to show that all curious red agents maintain edges with non-curious red agents. Assume that $y$ is a curious red agent forming a bichromatic edge to the blue agent $x$ and that there is no edge to a non-curious red agent $z$, i.e., $yz$ is not present in $G$. But then, $\dist{G}{x,z}\ge 3$, contradicting \Cref{prop:ICF-properties}(\ref{prop:curioustype}).
\end{proof}

We provide the computations of pairwise stability of \Cref{ex:PSN-exist-intermed-small}.

\begin{example}[\Cref{ex:PSN-exist-intermed-small} continued]\label{ex:PSN-exist-intermed-small-cont}
	
	We check that the network is pairwise stable.
	
	Blue agents cannot sever a bichromatic edge, because this increases the distance cost by $|R|\ge 2$ while saving a neighborhood cost of less than $2\alpha$. Also, they cannot sever a monochromatic edge, because this reduces the neighborhood cost only by $\alpha<1$. They would also not agree to create another edge with a red agent as this reduces the distance cost only by $1$ while it increases the neighborhood cost by $\alpha \frac{\numb{B}+1}{\numb{B}}\ge 1$.
	
	Red agents cannot sever a monochromatic edge. Using the computation in \Cref{lem:one_step_benefit}, this decreases the neighborhood cost only by $\alpha\left(1 + \frac 1{\numb{R}(\numb{R}-1)}\right)\le \alpha\left(1 + \frac 1{\numb{R}}\right)\le 1$. Finally, agent $r^*$ cannot sever a bichromatic edge as this saves her only a neighborhood cost of $\alpha \frac{\numb{R}}{\numb{R}+1}\le 1$.\hfill $\lhd$
\end{example}

The segregation of pairwise stable networks for intermediate parameter range is a direct computation based on the characterization of \Cref{thm:unique-PSN-intermed}.

\SegregationICFintermed*

\begin{proof}
	Let $\frac{\numb{R}}{\numb{R}+1} < \alpha < 1$ and assume that $G = (\ag{},E)$ is a pairwise stable network for an \icfgame with cost parameter~$\alpha$. 
	
	We start with computing the global segregation. By \Cref{thm:unique-PSN-intermed}, there are $\numb{B}$ bichromatic edges. Additionally, $|E| \ge \numb{B} + 2 {\numb{B}\choose 2} + \numb{B} (n-\numb{B}) = \numb{B} n$. Hence, $\gsm = \frac{|E|-\numb{B}}{|E|} \ge 1 - \frac 1n$.
	
	For the local segregation, we need to compute the quantity $\frac {\fr{G}{u}}{\deg{G}{u}}$ for every agent $u$. We can apply the characterization of \Cref{thm:unique-PSN-intermed} again to find

	$$\frac {\fr{G}{u}}{\deg{G}{u}} = \begin{cases}
	\frac {\numb{B} - 1}{\numb{B}} & \textnormal{if } u \textnormal{ blue,}\\
	\frac {\numb{R} - 1}{\numb{R}} & \textnormal{if } u \textnormal{ red and curious,}\\
	1 & \textnormal{otherwise.}\\
	\end{cases}$$

	Consequently,
	\begin{align*}
		\lsm(G) &= \frac 1n \left(\numb{B}\frac {\numb{B} - 1}{\numb{B}}+ \numb{B} \frac {\numb{R} - 1}{\numb{R}}+ (\numb{R}-\numb{B})\right)\\ &=  \frac 1n \left( n - 1 - \frac{\numb{B}}{\numb{R}}\right)\ge 1 - \frac 2n\text. \qedhere
	\end{align*}
\end{proof}

Next, we show that stars yield extremal segregation and double stars high segregation.

\IFCstarSeg*

\begin{proof}
	Note that in the considered parameter range, the star $\sta_n$ is pairwise stable according to \Cref{prop:ICF-properties}(\ref{prop:ICF-stars}). If there are only agents of one type, then $G = \sta_n$ fulfills $\gsm(G),\lsm(G) = 1$. On the other hand, if there are $2$ blue agents and $n-2$ red agents, consider $G' = \sta_n$ where the center agent is blue. Then $\gsm(G'),\lsm(G') = \frac 1{n-1}$.
\end{proof}

\DSstability*

\begin{proof}
	Consider the double star $\dsta_n$ and let $c_B$ and $c_R$ be the blue and red star center, respectively.
	
	Note that no agent can sever an edge, because this would disconnect the network. Also, no edge between a star center and a leaf node can be created, because it is not profitable for the center node. Indeed, consider a pair of nodes $v\in \ag{R}$ and the central node $c_B$. Adding the edge $c_Bv$ improves the distance to only one node for the agent $c_B$, while the neighborhood cost increases by 
	\begin{align*}
		&\Ncost{\dsta_n+c_Bv}{c_B} - \Ncost{\dsta_n}{c_B} \\
	&=\alpha\left((\deg{\dsta_n}{c_B}+1)\left(1+\frac{1}{\numb{B}}\right)\right.\\&\left.-\deg{\dsta_n}{c_B}\cdot \left(1+\frac{1}{\numb{B}}\right)\right)
	= \alpha\left(\frac{\numb{B} + 1}{\numb{B}}\right) \geq 1\text.
	\end{align*}
	Hence, the edge $c_Bv$ will be rejected by the agent $c_B$. 
	Analogously, a new edge between the center node $c_R$ and a node $v\in \ag{B}$ is not profitable for the center node $c_R$, because it increases the neighborhood cost by $\alpha\left(1+\frac{1}{\numb{R}}\right)\geq 1$ an decreases the distance cost by 1.
	
	Next, consider the case of creating a bichromatic edge. Then, the distance cost is decreased by $2$, while the neighborhood cost is increased by $\frac 32\alpha\ge 2$.
	
	Finally, consider the creation of an edge between two nodes $u, v$ of the same type, say type $R$. The new edge improves the distance cost by $1$ for both agents but increases the neighborhood cost by
	$\alpha\left(2\cdot\left(1+\frac{1}{2+1}\right)-1-\frac{1}{2}\right) = \frac{7\alpha}{6}\geq 1$.
	Hence, $\dsta_n$ is pairwise stable for any $\alpha\ge \frac 43$.
	
	It remains to compute the segregation measures for the double star.
	
	First, $\gsm(\dsta_n) = \frac{\numb{B}-1 + \numb{R} -1}{\numb{} -1} = 1 - \frac 1{\numb{}-1}$. Second, $\lsm(\dsta_n) = \frac 1{\numb{}} \left( \numb{B}-1 + \numb{R} -1 + \frac{\numb{B}-1}{\numb{B}} + \frac{\numb{R}-1}{\numb{R}}\right) = 1 - \frac 1{\numb{}}\left( \frac 1{\numb{B}} + \frac 1{\numb{R}}\right)\ge 1 - \frac 2 {\numb{}}$.
\end{proof}

In addition to the results in the body of the paper, we provide an example where the uniqueness of \Cref{prop:ICF-properties}(\ref{prop:ICF-cliques}) does not hold anymore.
\begin{example}\label{ex:PSN-exist-unique-small}
	Consider an \icfgame with two agent types. Let $n_B \ge 6$ and $\frac 67 \le \alpha \le \frac{\numb{R}}{\numb{R}+1}$. We fix a specific red agent $r^*\in \ag{R}$ and consider the network $G = (\ag{}, E)$ with $E = \{vw\colon v,w\in \ag{R}\}\cup \{vr^*\colon v\in \ag{}\setminus \{r^*\}\}$, i.e., the red type is fully intra-connected and there is a special agent $r^*$ to which all agents are connected. The structure of this network is depicted in \Cref{fig:PSN-exist-unique-small}. 
	\begin{figure}[h]
		\centering
		\begin{tikzpicture}
		
		\node (c) at (3,0){};
		
		\node[protodiam] (r4) at ($(c) + (60:1)$) {};
		\node[protodiam] (r1) at ($(c) + (120:1)$) {};
		\node[protodiam, label = $r^*$] (r2) at ($(c) + (180:1)$) {};
		\node[protodiam] (r3) at ($(c) + (240:1)$) {};
		\node[protodiam] (r5) at ($(c) + (300:1)$) {};
		\node[protodiam] (r6) at ($(c) + (0:1)$) {};

		\node (d) at (-1,0){};
		
		\node[protovertex] (b4) at ($(d) + (60:1)$) {};
		\node[protovertex] (b1) at ($(d) + (120:1)$) {};
		\node[protovertex] (b2) at ($(d) + (180:1)$) {};
		\node[protovertex] (b3) at ($(d) + (240:1)$) {};
		\node[protovertex] (b5) at ($(d) + (300:1)$) {};
		\node[protovertex] (b6) at ($(d) + (0:1)$) {};
		
		\foreach \i in {1,3,4,5,6}
		{
			\draw (r2) edge (b\i);
		}
		
		\foreach \i/\j in {1/2,1/3,1/4,1/5,1/6,2/3,2/4,2/5,2/6,3/4,3/5,3/6,4/5,4/6,5/6}{
			\draw (r\i) edge (r\j);}
		
		\draw[bend left] (b2) edge (r2);
		\end{tikzpicture}
		
		\caption{Pairwise stable network for $\frac 67 \le \alpha \le \frac{\numb{R}}{\numb{R}+1}$ with $\numb{B} =6$ and $\numb{R} = 6$ blue and red agents, respectively.}
		\label{fig:PSN-exist-unique-small}
	\end{figure} 
	If $\frac 67 \le \alpha \le \frac{\numb{B}}{\numb{B}+1}$, it is even possible to interchange the roles of the two agent types.
	
	Pairwise stability of this network follows by straightforward considerations, similar to the computations in \Cref{ex:PSN-exist-intermed-small}. \hfill $\lhd$  
\end{example}

\subsection{Decreasing Effort of Integration}

We start with the proofs of the statements collected in \Cref{prop:PSN-overview}.

\DEIoverview*

\begin{proof}
	We prove the statements one by one.
	\begin{enumerate}
		\item If some edge is not present, it has cost at most $2\alpha < 1$ and creating it decreases the distance cost by at least $1$.
		
		\item Creating a monochromatic edge has cost $\alpha <1$ and decreases the distance cost by at least $1$.
		
		\item Let $\alpha < 1$. Assume that there are agents $u,v\in \ag{}$ with $\dist G{u,v}\ge 3$. Then, creating $uv$ increases the neighborhood cost by at most $2\alpha < 2$, while decreasing the distance cost by at least $2$ for each of its endpoints. Hence, $G$ is not pairwise stable.
		
		\item Clearly, no monochromatic edge can be severed. Now, consider a bichromatic edge $uv$. Then, severing $uv$ increases the total cost for~$v$ by $1 - \alpha\left(1 + \frac 1{n - \numb{\mathcal T(v)}}\right)\ge 1 - \alpha\left(1 + \frac 1{n - \numb{R}}\right)\ge 1 - \frac{n - \numb{R}}{n - \numb{R} + 1}\left(1 + \frac 1{n - \numb{R}}\right) = 0$. Hence, also bichromatic edges cannot be severed.
		
		\item No edge can be severed, because these networks are trees. Due to the large distance cost, no agent favors to create an edge if this only improves the distance cost by~$1$. Hence, $\sta_n$ is stable, the two centers of $\dsta_n$ will not agree to build further edges, and leaves of $\dsta_n$ will not agree to create further monochromatic edges. Finally, the cost for creating an edge between two leaves of different types is $2\alpha\ge 2$ which does not make up for a distance improvement of $2$.
		
		\item As for $\dsta_n$, no edges can be severed, and the centers will not benefit from creating further edges. Also, leaves have no incentive to create monochromatic edges. Finally, the cost for a bichromatic edge between leaves of different types is $\frac 32 \alpha \ge 2$, but creating such an edge yields only a distance improvement of $2$. \qedhere
	\end{enumerate}
\end{proof}

\FewBichrom*

\begin{proof}
	The proof of both statements follows from a unified approach. Let $G = (\ag{}, E)$ be a fully intra-connected and pairwise stable network. Let $u\in \ag{}$. By full intra-connectivity, severing one of several bichromatic edges incident to $u$, increases the distance cost of $u$ by exactly $1$ while decreasing the neighborhood cost by $\Delta = \alpha \frac {\en{G}{u} + 1}{\en{G}{u}}$. If $\alpha > \frac{\numb{R}}{\numb{R} + 1}$, then $\Delta > 1$ and severing a bichromatic edge is beneficial for $u$. This proves the second statement. If even $\alpha > \frac{\numb{B}}{\numb{B} + 1}$ and $u$ is an agent of the majority type, then $\en{G}{u} \le \numb{B}$, and $\Delta \ge  \alpha \frac {\numb{B} + 1}{\numb{B}} > 1$.
\end{proof}

\DEIexistPSN*

\begin{proof}
	Suppose that $\mathcal T = \{T_1,\dots, T_k\}$ with $\numb{T_1}\le \dots \le \numb{T_k}$ and, for each $1\le j \le k$, $\ag{T_j} = \{t^1_j,\dots, t^{\numb{T_j}}_j\}$. By \Cref{prop:PSN-overview}, it suffices to consider the parameter range $\frac{\numb{} - \numb{R}}{\numb{} - \numb{R}+1} < \alpha < 1$. We will even provide pairwise stable networks whose parameter ranges overlap. 
	
	First, we will generalize the network of \Cref{ex:DEI-stability} to an arbitrary number of agent types. Let $j^* = \min(\{1\le j \le k\colon \numb{T_j}\ge 2\}\cup\{k\})$, i.e., the index of the smallest type of size at least~$2$ or the index of the last type if there exists exactly one agent per type. Consider the network $G = (\ag{},E)$ with edge set defined by
	
	\begin{itemize}
		\item $\{t_j^i,t_j^l\}\in E$ for $1\le j\le k$, $1 \le i < l \le \numb{T_j}$,
		\item $\{t_{j^*}^1, t_j^i\}\in E$ for $1\le j\le k$, $j\neq j^*$, $1 \le i \le \numb{T_j}$, and
		\item no further edges are in $E$.
	\end{itemize}
	
	We provide now conditions, under which the network $G$ is pairwise stable.
	\begin{lemma}\label{lem:PSN-centralagent}
	The network $G$ is pairwise stable if 
	\begin{enumerate}[label=\textit{(\roman*)}]
		\item $j^* = k$ and $\frac 23 \le \alpha \le 1$,
		\item $k = 2$, $j^* = k$, $\numb{T_k} \ge 2$ and $\frac 12 \le \alpha \le 1$, or
		\item $\frac 23 \le \alpha \le \frac {\numb{} - \numb{T_{j^*}}}{\numb{} - \numb{T_{j^*}} + 1}$.
	\end{enumerate}	
	\end{lemma}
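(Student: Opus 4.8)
The plan is to verify the two pairwise-stability conditions directly for $G$, exploiting that $G$ is fully intra-connected, has diameter $2$, and contains a distinguished ``hub'' $h = t_{j^*}^1$ adjacent to every other agent. First I would record the distances in $G$: the hub is at distance $1$ from everyone, any two same-type agents are at distance $1$, and any two different-type non-hub agents are at distance $2$, since every cross-type shortest path must route through $h$. I would also fix the per-move neighborhood-cost increments of the \deigame: creating (deleting) a monochromatic edge changes $\Ncost{G}{u}$ by $\pm\alpha$, while creating a bichromatic edge when $u$ already has $\en{G}{u}=e$ other-type neighbors costs $\alpha(1+\tfrac1{e+1})$ and deleting one saves $\alpha(1+\tfrac1e)$.

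For deletions there are three sub-cases. A monochromatic deletion raises the distance cost by at least $1$ (possibly to $\infty$) while saving only $\alpha\le 1$, hence is never profitable in any of the three regimes. A non-hub agent $u$ of type $T\neq T_{j^*}$ owns a single bichromatic edge, namely to $h$; severing it either disconnects $u$ (if $\numb{T}=1$) or, using that $u$ keeps a same-type neighbor still adjacent to $h$, raises the distance cost by exactly $\numb{}-\numb{T}\ge \numb{T_{j^*}}\ge 2$ while saving $2\alpha\le 2$, so again it is not profitable. The only delicate deletion is the hub severing a bichromatic edge to some $w$ of type $T\neq T_{j^*}$: here $\en{G}{h}=\numb{}-\numb{T_{j^*}}$, the saving is $\alpha(1+\tfrac{1}{\numb{}-\numb{T_{j^*}}})$ and the distance cost rises by exactly $1$, so stability demands $\alpha\le \frac{\numb{}-\numb{T_{j^*}}}{\numb{}-\numb{T_{j^*}}+1}$ --- \emph{unless} $w$ lies in a singleton type, in which case severing disconnects $w$ and no upper bound is needed.

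For additions, the only absent edges are bichromatic edges between two non-hub agents of distinct types, and each such edge lowers its endpoints' distance cost by exactly $1$. I would split them into Case~A, where both endpoints have types $\neq T_{j^*}$ and thus each already holds one bichromatic edge to $h$, making the neighborhood-cost increase $\tfrac32\alpha$ and requiring $\alpha\ge\tfrac23$ to make an endpoint refuse; and Case~B, where one endpoint is a non-hub agent of type $T_{j^*}$ carrying no bichromatic edge, whose increase is $2\alpha$, so $\alpha\ge\tfrac12$ already makes \emph{that} endpoint refuse. Since a move is blocked as soon as one endpoint is unwilling, Case~A needs $\alpha\ge\tfrac23$ and Case~B needs only $\alpha\ge\tfrac12$.

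Finally I would assemble the three regimes by tracking which sub-cases can occur. When $j^*=k$ (conditions (i) and (ii)) every type other than $T_{j^*}$ is a singleton, so every bichromatic hub-edge points to a singleton and the delicate hub-deletion degenerates into a harmless disconnection; this is exactly why the upper bound can be relaxed to $\alpha\le 1$ there, whereas condition (iii) must keep the tight bound $\alpha\le\frac{\numb{}-\numb{T_{j^*}}}{\numb{}-\numb{T_{j^*}}+1}$. On the addition side, Case~A requires two distinct non-$T_{j^*}$ types, i.e. $k\ge 3$, which forces the lower bound $\tfrac23$ in (i) and (iii); in the genuinely two-type situation of (ii) the unique non-$T_{j^*}$ type is the singleton minority, so no Case~A edge exists and only Case~B arises, permitting the weaker bound $\alpha\ge\tfrac12$. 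The main obstacle I anticipate is not any single hard estimate but precisely this bookkeeping: correctly matching each stated parameter window to its binding move (hub bichromatic deletion versus Case~A/Case~B addition) and checking the boundary inequalities, while carefully handling the degenerate singleton and disconnection cases.
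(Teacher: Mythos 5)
Your proposal is correct and follows essentially the same route as the paper's proof: a direct verification of both pairwise-stability conditions using the same key computations (disconnection for singleton types, the hub's deletion saving $\alpha\frac{\numb{}-\numb{T_{j^*}}+1}{\numb{}-\numb{T_{j^*}}}$ against a distance increase of exactly $1$, the $\tfrac32\alpha$ cost of a second bichromatic edge versus a distance gain of $1$, and the $2\alpha$ cost of a first bichromatic edge for a non-hub $T_{j^*}$-agent). The only difference is organizational — you classify moves once and then map each binding constraint to the three parameter windows, whereas the paper argues each of \textit{(i)}–\textit{(iii)} separately — but the underlying estimates and case coverage coincide.
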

	\begin{proof}
		\begin{enumerate}[label=\textit{(\roman*)}]
			\item If $j^* = k$ and $\frac 23 \le \alpha \le 1$, then no monochromatic edge can be severed because of $\alpha \le 1$. Bichromatic edges cannot be severed due to connectivity, and creating an edge costs $\frac 32\alpha \ge 1$ for an agent of type different to $k$ while it decreases her distance cost by exactly~$1$.
			\item Next, consider the case that $k = 2$, $j^* = k$, $\numb{T_k} \ge 2$ and $\frac 12 \le \alpha \le 1$. Then, again, no monochromatic edge can be severed because of $\alpha \le 1$. The unique bichromatic edge cannot be severed as this would disconnect the network. Also, adding another bichromatic edge must include a non-curious red agent. This agent would increase her neighborhood cost by $2\alpha\ge 1$ while only decreasing her distance cost by $1$.
			\item Now, assume that $\frac 23 \le \alpha \le \frac {\numb{} - \numb{T_{j^*}}}{\numb{} - \numb{T_{j^*}} + 1}$.
		Again, monochromatic edges cannot be severed as $\alpha < 1$. Further, bichromatic edges incident to an agent $t_j^1$ for $1\le j \le j^*-1$ cannot be severed as this would disconnect the network. Next, agent~$t_{j^*}^1$ cannot sever another bichromatic edge, because this would increase her cost by $1 - \alpha \frac {\numb{} - \numb{T_{j^*}} + 1}{\numb{} - \numb{T_{j^*}}}\ge 0$. Further, for $j^* <  j\le k$ and $1 \le i \le \numb{T_j}$, agent~$_j^i$ cannot sever $\{t_{j^*}^1, t_j^i\}$, because this increases the distance to at least $\numb{T_{j^*}}\ge 2$ agents (in $T_{j^*}$) by $1$ while decreasing the neighborhood cost by $2$.
	
		It remains to consider the creation of edges. Every agent in $\ag{}\setminus \ag{T_{j^*}}$ entertains exactly one bichromatic edge. Creating a second bichromatic edge costs $\frac 32\alpha \ge 1$ while it decreases the distance cost by exactly~$1$. Together, the network is pairwise stable. \qedhere
		\end{enumerate}
	\end{proof}
	
	Second, we generalize the network from \Cref{thm:DEIcharactPSN}. To this end, we design an algorithm that constructs pairwise stable networks. In the special case of $2$ agent types, it yields the networks encountered in \Cref{thm:DEIcharactPSN}. Note that this must already hold specifically for the parameter range where the uniqueness of the theorem applies. 
	
	Therefore, consider the network $G' = (\ag{},E')$ where the edge set $E'$ is computed according to \Cref{alg:psn-edges}.

	\begin{algorithm}
    
	  \caption{Determination of edge set for network~$G'$.\label{alg:psn-edges}}
	  \label{alg:PSN-edges}
	  \begin{flushleft}
	  \textbf{Input:} Set of agents~$\ag{}$. \\
	  \textbf{Output:} Edge set~$E'$.  	
	  \end{flushleft}

	  \begin{algorithmic}[] 
	\STATE $E' \leftarrow \{\{t_j^i,t_j^l\}\colon 1\le j\le k, 1 \le i < l \le \numb{T_j}\}$
	\WHILE {there exist $u,v\in\ag{}$ with $\dist{(\ag{},E')}{u,v}\ge 3$}
	\STATE $E' \leftarrow E'\cup \{uv\}$
	\ENDWHILE
	\RETURN $E'$
	 \end{algorithmic}
	\end{algorithm}

	The algorithm starts with the fully intra-connected network without any bichromatic edges. Then, bichromatic edges are added whenever the distance between two agents is too large. Clearly, this algorithm has to terminate by returning $E'$ after at most $\numb{}\choose 2$ executions of the while loop.
	
	\begin{lemma}\label{lem:PSN-props}
		The following properties are valid.
		\begin{itemize}
			\item The diameter of $G'$ satisfies $\diam(G') \le 2$.
			\item Every triangle\footnote{A triangle is defined as a complete subnetwork induced by three vertices.} in $G'$ consists of monochromatic edges only.
			\item Every agent incident to at most $k-1$ bichromatic edges in $G'$.
		\end{itemize}
	\end{lemma}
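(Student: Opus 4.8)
The plan is to establish the three properties in the stated order, the whole argument resting on one elementary observation about which edges the while loop can insert. First I would dispose of the diameter bound, which is essentially the termination condition of the loop: the loop exits precisely when no two agents are at distance at least $3$, so in the returned network every pair of agents is at distance at most $2$, whence $\diam(G')\le 2$. Termination itself is already noted in the excerpt (at most $\binom{\numb{}}{2}$ iterations), so the exit is reached.

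Next I would record the key structural fact underpinning the remaining two parts: every edge inserted by the while loop is bichromatic. Indeed, the initialization makes $G'$ fully intra-connected, so any two same-type agents are adjacent, hence at distance $1$, and this remains true throughout the run since edges are only added. Consequently a pair $u,v$ with distance at least $3$ must be of different types, so the inserted edge $uv$ is bichromatic. Because the initial edge set consists of exactly the monochromatic edges, the monochromatic edges of $G'$ are precisely the initial ones and the bichromatic edges of $G'$ are precisely those inserted during the loop.

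For the second property I would argue, equivalently, that no bichromatic edge lies on a triangle. Suppose a triangle on vertices $u,v,w$ contained a bichromatic edge; then at least one of its three edges was inserted during the loop. Let $e=xy$ be the edge of the triangle inserted \emph{last}, regarding the initial monochromatic edges as inserted at time $0$; by the previous paragraph $e$ is bichromatic. At the moment $e$ was inserted, the other two edges of the triangle were already present, so $x$ and $y$ shared a common neighbour and were therefore at distance at most $2$ at that moment. This contradicts the loop condition (distance at least $3$) that triggered the insertion of $e$. Hence every triangle of $G'$ is monochromatic.

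Finally, the third property follows by combining the second with full intra-connectivity. If an agent $u$ of type $T$ had two bichromatic edges $uv_1$ and $uv_2$ with $v_1,v_2\in\ag{T'}$ for some type $T'\neq T$, then $v_1v_2\in E'$ by full intra-connectivity, so $\{u,v_1,v_2\}$ would form a triangle containing the bichromatic edges $uv_1$ and $uv_2$, contradicting the second property. Thus $u$ has at most one bichromatic edge to each of the $k-1$ types different from her own, i.e.\ at most $k-1$ bichromatic edges in total. I expect the only delicate point to be the timing argument in the triangle step: a triangle of the \emph{final} network may combine edges inserted at different moments, so one must single out the last-inserted edge and exploit that the other two were already in place to reach the distance contradiction; the rest is immediate.
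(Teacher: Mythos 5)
Your proof is correct. For the first two properties it follows essentially the same route as the paper: the diameter bound is just the negation of the loop guard, and your ``last-inserted edge of the triangle'' argument is a reformulation of the paper's induction over the iterations of the while loop --- both hinge on the same observation that when a bichromatic edge $xy$ is inserted the loop guard forces $\dist{}{x,y}\ge 3$, so the other two edges of a would-be triangle cannot already be present. Your preliminary remark that every inserted edge is bichromatic (because full intra-connectivity keeps same-type pairs at distance $1$) is implicit in the paper but worth making explicit as you do. The only genuine divergence is in the third property: the paper argues dynamically that once an agent acquires a bichromatic edge to some type, her distance to that entire type drops to at most $2$, so the loop never offers her a second edge into that type; you instead derive the bound statically from the second property, since two bichromatic edges from $u$ into the same type $T'$ would close a triangle (via the monochromatic edge between the two $T'$-endpoints) containing bichromatic edges. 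Both are one-line arguments; yours has the mild advantage of being a property of the output network alone rather than of the construction process, while the paper's version does not need the triangle property as a prerequisite.
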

	
	\begin{proof}
		The first property is immediate from the definition of the while loop. We show the second property by induction. More precisely, we show that, during the whole algorithm, every triangle in $(V,E')$ contains only monochromatic edges by induction over the number $w$ of executions of the while loop. If $w = 0$, then there are no bichromatic edges and the assertion is true. Now let $w\ge 1$ and assume that no triangle in the network $(V,E')$ before adding the $w$-th edge in the while loop contains a bichromatic edge. Assume that we add the bichromatic edge $uv$ as the $w$-th bichromatic edge. By induction, only a triangle containing $uv$ can contain a bichromatic edge. But then, there exists an agent $x\in \ag{}$ with $ux, xv\in E'$, contradicting the distance condition of the while loop.
		
		For the third property, we observe that every agent can add at most one bichromatic edge to an agent of each fixed type. Once this edge is added, the distance to all agents of this type is at most $2$ due to the intra-connectivity of the network. As there are at most $k-1$ other types, the assertion follows. \qedhere 
	\end{proof}
	It is easy to deduce the pairwise stability of $G'$.
	
	\begin{lemma}\label{lem:PSN-algo}
		The network $G'$ is pairwise stable for $\frac k{k+1}\le \alpha \le 1$.
	\end{lemma}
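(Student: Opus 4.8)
The plan is to verify both pairwise-stability conditions directly, relying on the three structural properties of $G'$ from \Cref{lem:PSN-props} together with the observation that $G'$ is fully intra-connected: the algorithm initializes $E'$ with all monochromatic edges and only ever adds edges, so every monochromatic pair stays adjacent and, in particular, every non-edge of $G'$ is bichromatic. This immediately removes the need to consider creation of monochromatic edges.

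For the deletion condition I would treat the two edge types separately. Severing a monochromatic edge $uv$ leaves $\en{G}{u}$ unchanged, so the neighborhood cost of $u$ drops by exactly $\alpha$, while the distance to the former neighbor increases by at least $1$ (by $\infty$ if the graph disconnects); since $\alpha \le 1$, this is never beneficial. For a bichromatic edge $uv$ the decisive tool is the triangle property: because $uv$ is bichromatic, $u$ and $v$ have no common neighbor in $G'$, so after deleting $uv$ their distance jumps from $1$ to at least $3$, raising the distance cost by at least $2$. The neighborhood cost of $u$ decreases only by $\alpha\left(1+\frac{1}{\en{G}{u}}\right)\le 2\alpha \le 2$, so the move again does not help when $\alpha \le 1$.

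For the creation condition I would use $\diam(G')\le 2$: any non-adjacent pair is at distance exactly $2$, and since adding a single edge to a diameter-$2$ graph shortens no distance except that between the two endpoints, creating $uv$ lowers the distance cost of $u$ by exactly $1$. On the other side, $u$ is incident to at most $k-1$ bichromatic edges, so the new bichromatic edge increases the harmonic term by $\frac{1}{\en{G}{u}+1}\ge \frac 1k$, and the neighborhood cost of $u$ grows by $\alpha\left(1+\frac{1}{\en{G}{u}+1}\right)\ge \alpha\,\frac{k+1}{k}\ge 1$ by the lower bound $\alpha \ge \frac{k}{k+1}$. Hence $u$ (and symmetrically $v$) does not strictly benefit, which is exactly what condition (ii) requires.

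The main obstacle is the distance bookkeeping in both directions, and this is precisely where \Cref{lem:PSN-props} does the work: the deletion argument needs the triangle property to guarantee an increase of at least $2$ in distance when a bichromatic edge is removed, and the creation argument needs the diameter bound to guarantee that exactly $1$ unit of distance cost is saved. Once these structural facts are invoked, the cost inequalities are routine and simultaneously explain why the admissible range is the two-sided interval $\frac{k}{k+1}\le\alpha\le 1$: the upper bound makes deletions unprofitable and the lower bound makes creations unprofitable.
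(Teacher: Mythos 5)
Your proposal is correct and follows essentially the same route as the paper's proof: monochromatic deletions are blocked by $\alpha\le 1$, bichromatic deletions by the triangle-free property (distance jump of at least $2$ versus a saving of at most $2\alpha\le 2$), and bichromatic creations by the diameter bound combined with the fact that each agent has at most $k-1$ bichromatic edges, giving a cost of at least $\alpha\left(1+\frac 1k\right)\ge 1$ against a distance saving of exactly $1$. The only cosmetic difference is that you spell out the full intra-connectivity observation and the monochromatic-deletion bookkeeping slightly more explicitly than the paper does.
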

	\begin{proof}
		As in previous networks, monochromatic edges cannot be severed because of $\alpha \le 1$. Now, consider a bichromatic edge $uv$. Then, $\dist{G-uv}{u,v} \ge 3$. Indeed, if $\dist{G-uv}{u,v} = 2$, then $uv$ is part of a triangle, contradicting the second statement in \Cref{lem:PSN-props}. Hence, severing $uv$ increases the distance cost for $uv$ by at least $2$ while saving a neighborhood cost of at most $2$.
		
		It remains to consider the creation of edges. As the network is fully intra-connected, only bichromatic edges can be created. Hence, consider the creation of a bichromatic edge~$uv$. Its creation decreases the distance cost for $u$ by exactly~$1$. Indeed, as $\diam(G') \le 2$, the distance to $v$ is decreased by exactly~$1$, and the distance to other agents is no shorter. On the other hand, as $u$ is incident to at most $k-1$ bichromatic edges, the creation of $uv$ costs at least $\alpha\left(1 + \frac 1k\right)\ge 1$. Hence, the total cost for $u$ cannot have decreased.
	\end{proof}
	
	To conclude the proof, we want to argue that we can cover the whole parameter range of $\alpha$. First, we cover the range until $\alpha = \frac 23$. According to \Cref{prop:PSN-overview}(\ref{prop:PSN-over-kn}), this is covered by $\com_n$ if $\numb{}-\numb{T_k}\ge 2$. In particular, this is the case if $k\ge 3$ or $\numb{T_1}\ge 2$. If $k = 2$ and $\numb{T_1} = 1$, we can apply case \textit{(ii)} of \Cref{lem:PSN-centralagent} if $\numb{T_k}\ge 2$. If $\numb{T_k} = 1$, then the network consisting of two agents of different types, connected by an edge, is pairwise stable.
	
	Finally, consider the parameter range $\frac 23 \le \alpha \le 1$. If $j^* = k$, then case \textit{(i)} of \Cref{lem:PSN-centralagent} applies. Otherwise, $j^* < k$, and therefore $\numb{} - \numb{T_{j^*}}\ge k$. This implies that $\frac {\numb{} - \numb{T_{j^*}}}{\numb{} - \numb{T_{j^*}} + 1}\ge \frac k{k+1}$, and the parameter range is covered by case \textit{(iii)} of \Cref{lem:PSN-centralagent} and \Cref{lem:PSN-algo}.
\end{proof}

\DEIuniquePSN*

\begin{proof}
Consider a network $G = (V,E)$ satisfying the assumptions of the corollary. We start with the global segregation measure. According to \Cref{thm:DEIcharactPSN}, there are $\numb{B}$ bichromatic edges and a total of $\numb{B} + \numb{B}(\numb{B} - 1)/2 + \numb{R}(\numb{R} - 1)/2 \ge \numb{B} + \numb{B}(\numb{B} - 1)/2 + \numb{B}(\numb{R} - 1)/2 = \numb{B}\numb{}/2$ edges. Hence,
$$\gsm(G) = \frac{|E|-\numb{B}}{|E|} = 1 - \frac{\numb{B}}{|E|}\ge 1 - \frac{\numb{B}}{\numb{B}\numb{}/2} = 1 -\frac 2n\text.$$
Using the characterization in \Cref{thm:DEIcharactPSN} once again, the computation of the local segregation measure is identical as in the proof of \Cref{cor:SegrICFintermed}.
\end{proof}

\section{Detailed Experimental Analysis}\label{appendix:plots}
In this section we provide more detailed experimental analysis complementing Section~\ref{sec:experiments}.

\subsection{Details about the Experimental Setup}
For our simulation experiments we first generated an initial network and an intitial agent distribution. Then agents are activated and compute a best possible edge addition or edge deletion. This sequential activation process is then run until no agent has an improving move and a pairwise stable network is found. We now discuss the details of this setup. 

\paragraph*{General Setup} Our experiments considered 1000 agents partitioned into two types with 500 agents each.
For each run we chose
\begin{itemize}
	\item a random spanning tree or a grid as initial network,
	\item an integrated or perfectly segregated inital agent distribution,
	\item if best response moves or if best add-only moves are performed,
	\item if the segregation strength is measured via the local segregation measure $\lsm$ or via the global segregation measure $\gsm$, and
	\item the value of $\alpha$ in 15 steps between $5$ and $255$.
\end{itemize}
In total this yielded $2^4*15 = 240$ different configurations and for every configuration we simulated 50 runs, yielding a total number of 12000 considered networks. 

\paragraph*{Generating the Initial Networks} We considered random spanning trees and grids as initial networks. We used grids of size $20\times 50$. Moreover, we sampled the random spanning trees by the following scheme: starting from a single node, we add nodes one-by-one, and each new arriving node attaches to one of the existing nodes chosen uniformly at random.

\paragraph*{Generating the Initial Agent Distribution}
We focus on two cases: perfectly segregated and integrated initial state. An integrated state is sampled by a uniformly random type assignment to each node. To generate a perfectly segregated spanning tree, we generate two one-type spanning trees of 500 nodes and join them by connecting the initial nodes of each tree. A perfectly segregated grid is sampled by assigning one type to all 500 nodes in the first ten rows and another type to the rest.

\paragraph*{Random Activation of the Agents}
We start with marking all nodes as willing to improve.
In each step of the algorithm, one agent is chosen from the set of the marked nodes uniformly at random. This active agent is searching for a best-allowed move. If no move is possible, the agent is unmarked. If the agent has an improving move, the new strategy is applied to the network, and all agents move back to the set of the marked nodes to be ready to become activated again. The algorithm stops when the last agent is unmarked.

\paragraph*{Convergence Criteria}
Figure~\ref{plot:avg_segr_timeline} shows a representative timeline of the local segregation of the obtained networks in each step of the best move dynamics starting from a random tree with a random color distribution. We observe that the segregation value quickly reaches a high value and remains in the interval $[0.8, 1]$ until the end of the execution of the dynamics. It illustrates the need for relaxation of the solution concept to avoid long calculations. Therefore, our experimental study of the best move dynamics uses  $1.01$-approximate pairwise stable states as solution concept. We say that a network is a $1.01$-\emph{approximate pairwise stable} if no agent can improve her cost by more than a factor of $1.01$. The approximation factor is chosen empirically to minimize the convergence time and the approximation gap.

Note that for the add-only move dynamics, the process naturally stops at the latest when a complete network is reached. Hence, the computation time is rather low compared to the best move dynamics and we could consider perfect pairwise stable networks.

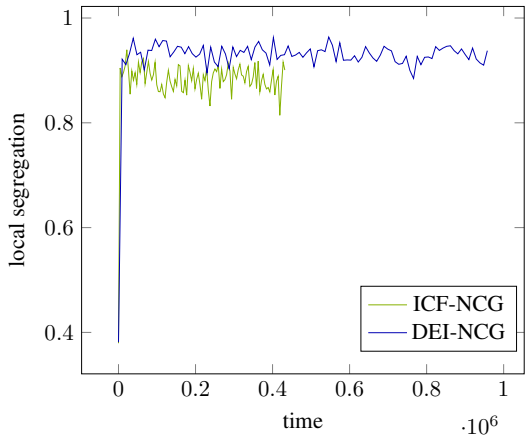
\begin{figure}[!ht]
\centering
\begin{minipage}{0.4\textwidth}
\resizebox {\textwidth} {!} {
\begin{tikzpicture}
\begin{axis}[
%xmode=log,
%ymode=log,
		legend style={
		at={(0.63,0.15)},
		anchor=west},
		xlabel= {time},
		ylabel= {local segregation},
		baseline
	]
 
\addplot+ [color = lime!70!black, mark = none] coordinates{  (0,0.379333) (4318,0.904667) (8636,0.888000) (12954,0.896000) (17272,0.908667) (21590,0.939333) (25908,0.900000) (30226,0.854667) (34544,0.899000) (38862,0.881333) (43180,0.895762) (47498,0.873333) (51816,0.916000) (56134,0.896000) (60452,0.916000) (64770,0.922667) (69088,0.875000) (73406,0.896333) (77724,0.917143) (82042,0.894333) (86360,0.881667) (90678,0.891667) (94996,0.915333) (99314,0.872667) (103632,0.859667) (107950,0.859333) (112268,0.873000) (116586,0.855000) (120904,0.848000) (125222,0.878000) (129540,0.900333) (133858,0.879810) (138176,0.869000) (142494,0.860000) (146812,0.882667) (151130,0.870476) (155448,0.911000) (159766,0.908667) (164084,0.860000) (168402,0.857667) (172720,0.881333) (177038,0.852833) (181356,0.909000) (185674,0.896333) (189992,0.880333) (194310,0.903667) (198628,0.865000) (202946,0.883000) (207264,0.894667) (211582,0.872667) (215900,0.845000) (220218,0.891667) (224536,0.880667) (228854,0.903333) (233172,0.862000) (237490,0.832000) (241808,0.878667) (246126,0.895333) (250444,0.902667) (254762,0.897667) (259080,0.910000) (263398,0.865667) (267716,0.902667) (272034,0.878667) (276352,0.882333) (280670,0.892000) (284988,0.913667) (289306,0.899000) (293624,0.845000) (297942,0.905667) (302260,0.893333) (306578,0.884000) (310896,0.901333) (315214,0.912000) (319532,0.891333) (323850,0.889000) (328168,0.876333) (332486,0.897810) (336804,0.907667) (341122,0.869667) (345440,0.874000) (349758,0.887667) (354076,0.915000) (358394,0.865667) (362712,0.917667) (367030,0.859333) (371348,0.880667) (375666,0.892333) (379984,0.873333) (384302,0.864667) (388620,0.867333) (392938,0.859000) (397256,0.881667) (401574,0.901333) (405892,0.853000) (410210,0.879333) (414528,0.888667) (418846,0.814333) (423164,0.870000) (427482,0.915667) (431800,0.900667)}; 
\addlegendentry{\icfgame};

\addplot+ [color = blue!70!black, mark = none] coordinates{  (0,0.381381) (9576,0.921333) (19152,0.911000) (28728,0.932667) (38304,0.961333) (47880,0.930000) (57456,0.934667) (67032,0.901000) (76608,0.938667) (86184,0.937952) (95760,0.959667) (105336,0.945333) (114912,0.957000) (124488,0.956000) (134064,0.926000) (143640,0.935333) (153216,0.946000) (162792,0.944333) (172368,0.932000) (181944,0.945333) (191520,0.932000) (201096,0.925333) (210672,0.931000) (220248,0.947667) (229824,0.895333) (239400,0.944333) (248976,0.916000) (258552,0.907333) (268128,0.946000) (277704,0.931333) (287280,0.903333) (296856,0.944667) (306432,0.926333) (316008,0.938333) (325584,0.934333) (335160,0.946000) (344736,0.920667) (354312,0.944667) (363888,0.955333) (373464,0.939333) (383040,0.933000) (392616,0.910333) (402192,0.962667) (411768,0.921000) (421344,0.928667) (430920,0.930000) (440496,0.946667) (450072,0.926667) (459648,0.929333) (469224,0.934333) (478800,0.925000) (488376,0.933667) (497952,0.941000) (507528,0.907000) (517104,0.937667) (526680,0.939333) (536256,0.936000) (545832,0.963667) (555408,0.948000) (564984,0.917000) (574560,0.952667) (584136,0.919333) (593712,0.920000) (603288,0.919667) (612864,0.928000) (622440,0.916000) (632016,0.925000) (641592,0.945333) (651168,0.933333) (660744,0.923667) (670320,0.917333) (679896,0.929333) (689472,0.946000) (699048,0.940667) (708624,0.937333) (718200,0.916667) (727776,0.911667) (737352,0.913333) (746928,0.927000) (756504,0.900000) (766080,0.884667) (775656,0.926667) (785232,0.911000) (794808,0.925000) (804384,0.925667) (813960,0.922667) (823536,0.945333) (833112,0.938143) (842688,0.942667) (852264,0.946000) (861840,0.947000) (871416,0.937667) (880992,0.931667) (890568,0.941000) (900144,0.930333) (909720,0.920667) (919296,0.942667) (928872,0.921667) (938448,0.914333) (948024,0.910000) (957600,0.937667)}; 
\addlegendentry{\deigame};

\end{axis}
\end{tikzpicture}
}
\end{minipage}
\caption{A timeline of the local segregation of a network obtained by the best response dynamic for $n=50$, $\alpha=15$ starting from a tree with random color distribution in the  \icfgame and \deigame.}
\label{plot:avg_segr_timeline}
\end{figure}

\paragraph*{Visualization of Our Results}
In the next section we show 
box-and-whiskers plots of the local and global segregation for the networks obtained by the best move dynamics for $n=1000$ over 50 runs.  Lower and upper whiskers are the minimal and maximal local segregation values over 50 runs of the algorithm. The middle lines are the median values, while the bottom and top of the boxes represent the first and the third quartiles. 

\subsection{Additional Experiments Regarding the Local Segregation Measure}
This section provides additional empirical results for the local segregation measure for the \icfgame and the \deigame. 

\subsubsection{Results for the \icfgame}
The following figures are box-and-whiskers plots showing the obtained local segregation in our experiments for the sequential process of the \icfgame. All plots show high segregation of stable networks can be avoided by the lower cost of the connections ($\alpha < 30$) and if the add-only process starts from a sparse low-segregated network.

\begin{figure}[!ht]
\centering
\begin{minipage}{0.23\textwidth}
\resizebox {\textwidth} {!} {
\begin{tikzpicture}
\begin{axis}[
legend columns=-1,
legend entries={segregated grid;,random grid;,segregated tree;,random tree},
legend to name=named,
legend style={nodes={scale=0.65, transform shape}},% legend cell align=center},
xlabel= {$\alpha$},
ylabel= {local segregation},
boxplot/draw direction=y,
baseline,
xtick = {1,2,3, 4, 5, 6, 7, 8},
xticklabels = {5, 10, 15, 20, 25, 30, 35, 40},
ymin=0.5,
ymax=1
]
\addplot[red!50!black, domain=1.1:1.11]{0.55};
%\addlegendentry{fixed grid};
\addplot[blue!50!black, domain=1.1:1.11]{0.55};
%\addlegendentry{random grid};
\addplot[yellow!70!black, domain=1.1:1.11]{0.55};
%\addlegendentry{fixed tree};
\addplot[lime!70!black, domain=1.1:1.11]{0.55};
%\addlegendentry{random tree};

%rand tree
\addplot+ [color = lime!70!black,solid,boxplot prepared = {box extend=0.3, draw position = 1, lower whisker = 0.603168, lower quartile = 0.607732, median = 0.612729, upper quartile = 0.617235, upper whisker = 0.621527},]coordinates{}; 
\addplot+ [color = lime!70!black,solid,boxplot prepared = {box extend=0.3, draw position = 2, lower whisker = 0.631120, lower quartile = 0.639577, median = 0.642429, upper quartile = 0.645221, upper whisker = 0.648405},]coordinates{}; 
\addplot+ [color = lime!70!black,solid,boxplot prepared = {box extend=0.3, draw position = 3, lower whisker = 0.657872, lower quartile = 0.663968, median = 0.668497, upper quartile = 0.673094, upper whisker = 0.686780},]coordinates{}; 
\addplot+ [color = lime!70!black,solid,boxplot prepared = {box extend=0.3, draw position = 4, lower whisker = 0.678090, lower quartile = 0.683689, median = 0.686953, upper quartile = 0.692236, upper whisker = 0.707467},]coordinates{}; 
\addplot+ [color = lime!70!black,solid,boxplot prepared = {box extend=0.3, draw position = 5, lower whisker = 0.691792, lower quartile = 0.701712, median = 0.707269, upper quartile = 0.711191, upper whisker = 0.718742},]coordinates{}; 
\addplot+ [color = lime!70!black,solid,boxplot prepared = {box extend=0.3, draw position = 6, lower whisker = 0.708435, lower quartile = 0.714302, median = 0.718056, upper quartile = 0.722919, upper whisker = 0.725387},]coordinates{}; 

%segr tree
\addplot+ [color = yellow!70!black,solid,boxplot prepared = {box extend=0.3, draw position = 1, lower whisker = 0.614643, lower quartile = 0.625454, median = 0.629678, upper quartile = 0.633610, upper whisker = 0.642933},]coordinates{}; 
\addplot+ [color = yellow!70!black,solid,boxplot prepared = {box extend=0.3, draw position = 2, lower whisker = 0.657352, lower quartile = 0.674074, median = 0.677401, upper quartile = 0.681813, upper whisker = 0.692238},]coordinates{}; 
\addplot+ [color = yellow!70!black,solid,boxplot prepared = {box extend=0.3, draw position = 3, lower whisker = 0.704792, lower quartile = 0.709536, median = 0.716545, upper quartile = 0.719788, upper whisker = 0.731335},]coordinates{}; 
\addplot+ [color = yellow!70!black,solid,boxplot prepared = {box extend=0.3, draw position = 4, lower whisker = 0.727169, lower quartile = 0.741266, median = 0.745318, upper quartile = 0.748648, upper whisker = 0.756755},]coordinates{}; 
\addplot+ [color = yellow!70!black,solid,boxplot prepared = {box extend=0.3, draw position = 5, lower whisker = 0.746182, lower quartile = 0.761763, median = 0.766180, upper quartile = 0.769747, upper whisker = 0.781742},]coordinates{}; 
\addplot+ [color = yellow!70!black,solid,boxplot prepared = {box extend=0.3, draw position = 6, lower whisker = 0.774973, lower quartile = 0.782826, median = 0.787657, upper quartile = 0.790773, upper whisker = 0.803171},]coordinates{};

%fixed grid
\addplot+ [color = red!70!black,solid,boxplot prepared = {box extend=0.3, draw position = 1, lower whisker = 0.680514, lower quartile = 0.686651, median = 0.690038, upper quartile = 0.693975, upper whisker = 0.701583},]coordinates{}; 
\addplot+ [color = red!70!black,solid,boxplot prepared = {box extend=0.3, draw position = 2, lower whisker = 0.705936, lower quartile = 0.714685, median = 0.718417, upper quartile = 0.720312, upper whisker = 0.729312},]coordinates{}; 
\addplot+ [color = red!70!black,solid,boxplot prepared = {box extend=0.3, draw position = 3, lower whisker = 0.738681, lower quartile = 0.746192, median = 0.750123, upper quartile = 0.753216, upper whisker = 0.759837},]coordinates{}; 
\addplot+ [color = red!70!black,solid,boxplot prepared = {box extend=0.3, draw position = 4, lower whisker = 0.767805, lower quartile = 0.775313, median = 0.777602, upper quartile = 0.780722, upper whisker = 0.785118},]coordinates{}; 
\addplot+ [color = red!70!black,solid,boxplot prepared = {box extend=0.3, draw position = 5, lower whisker = 0.786601, lower quartile = 0.795219, median = 0.797798, upper quartile = 0.799603, upper whisker = 0.805759},]coordinates{}; 
\addplot+ [color = red!70!black,solid,boxplot prepared = {box extend=0.3, draw position = 6, lower whisker = 0.805258, lower quartile = 0.809236, median = 0.811469, upper quartile = 0.813139, upper whisker = 0.819814},]coordinates{};

%rand grid
\addplot+ [color = blue!50!black,solid,boxplot prepared = {box extend=0.3, draw position = 1, lower whisker = 0.578120, lower quartile = 0.583523, median = 0.585371, upper quartile = 0.588368, upper whisker = 0.594768},]coordinates{}; 
\addplot+ [color = blue!50!black,solid,boxplot prepared = {box extend=0.3, draw position = 2, lower whisker = 0.621240, lower quartile = 0.631941, median = 0.634500, upper quartile = 0.636490, upper whisker = 0.638341},]coordinates{}; 
\addplot+ [color = blue!50!black,solid,boxplot prepared = {box extend=0.3, draw position = 3, lower whisker = 0.649372, lower quartile = 0.653145, median = 0.655508, upper quartile = 0.657635, upper whisker = 0.667402},]coordinates{}; 
\addplot+ [color = blue!50!black,solid,boxplot prepared = {box extend=0.3, draw position = 4, lower whisker = 0.654096, lower quartile = 0.663345, median = 0.666467, upper quartile = 0.670526, upper whisker = 0.673912},]coordinates{}; 
\addplot+ [color = blue!50!black,solid,boxplot prepared = {box extend=0.3, draw position = 5, lower whisker = 0.669777, lower quartile = 0.673172, median = 0.675110, upper quartile = 0.677894, upper whisker = 0.682192},]coordinates{}; 
\addplot+ [color = blue!50!black,solid,boxplot prepared = {box extend=0.3, draw position = 6, lower whisker = 0.674330, lower quartile = 0.682455, median = 0.685261, upper quartile = 0.688337, upper whisker = 0.693348},]coordinates{}; 

\end{axis}
\end{tikzpicture}
}
\end{minipage}
\begin{minipage}{0.23\textwidth}
\resizebox {\textwidth} {!} {
\begin{tikzpicture}
\begin{axis}[
xlabel= {$\alpha$},
ylabel= {local segregation},
boxplot/draw direction=y,
baseline,
xtick = {1, 2, 3, ..., 11},
xticklabels  = {5, 30, 55, 80, 105, 130, 155, 180, 205, 230, 255},
ymin=0.5,
ymax=1
]

%rand tree
\addplot+ [color = lime!70!black,solid,boxplot prepared = {box extend=0.4, draw position = 1, lower whisker = 0.603168, lower quartile = 0.607732, median = 0.612729, upper quartile = 0.617235, upper whisker = 0.621527},]coordinates{}; 
\addplot+ [color = lime!70!black,solid,boxplot prepared = {box extend=0.4, draw position = 2, lower whisker = 0.708435, lower quartile = 0.714302, median = 0.718056, upper quartile = 0.722919, upper whisker = 0.725387},]coordinates{}; 
\addplot+ [color = lime!70!black,solid,boxplot prepared = {box extend=0.4, draw position = 3, lower whisker = 0.734886, lower quartile = 0.763801, median = 0.770590, upper quartile = 0.774945, upper whisker = 0.779703},]coordinates{}; 
\addplot+ [color = lime!70!black,solid,boxplot prepared = {box extend=0.4, draw position = 4, lower whisker = 0.799903, lower quartile = 0.804931, median = 0.810123, upper quartile = 0.816841, upper whisker = 0.833158},]coordinates{}; 
\addplot+ [color = lime!70!black,solid,boxplot prepared = {box extend=0.4, draw position = 5, lower whisker = 0.835867, lower quartile = 0.861096, median = 0.877844, upper quartile = 0.903136, upper whisker = 0.909231},]coordinates{}; 
\addplot+ [color = lime!70!black,solid,boxplot prepared = {box extend=0.4, draw position = 6, lower whisker = 0.931126, lower quartile = 0.942213, median = 0.948357, upper quartile = 0.950541, upper whisker = 0.955797},]coordinates{}; 
\addplot+ [color = lime!70!black,solid,boxplot prepared = {box extend=0.4, draw position = 7, lower whisker = 0.940105, lower quartile = 0.944721, median = 0.947245, upper quartile = 0.949611, upper whisker = 0.952959},]coordinates{}; 
\addplot+ [color = lime!70!black,solid,boxplot prepared = {box extend=0.4, draw position = 8, lower whisker = 0.945450, lower quartile = 0.952702, median = 0.955511, upper quartile = 0.958615, upper whisker = 0.964269},]coordinates{}; 
\addplot+ [color = lime!70!black,solid,boxplot prepared = {box extend=0.4, draw position = 9, lower whisker = 0.950111, lower quartile = 0.956789, median = 0.959013, upper quartile = 0.959888, upper whisker = 0.961478},]coordinates{}; 
\addplot+ [color = lime!70!black,solid,boxplot prepared = {box extend=0.4, draw position = 10, lower whisker = 0.958757, lower quartile = 0.960711, median = 0.962101, upper quartile = 0.965926, upper whisker = 0.967783},]coordinates{}; 
\addplot+ [color = lime!70!black,solid,boxplot prepared = {box extend=0.4, draw position = 11, lower whisker = 0.959326, lower quartile = 0.964103, median = 0.966723, upper quartile = 0.969318, upper whisker = 0.973596},]coordinates{}; 

%segr tree
\addplot+ [color = yellow!70!black,solid,boxplot prepared = {box extend=0.4, draw position = 1, lower whisker = 0.614643, lower quartile = 0.625454, median = 0.629678, upper quartile = 0.633610, upper whisker = 0.642933},]coordinates{}; 
\addplot+ [color = yellow!70!black,solid,boxplot prepared = {box extend=0.4, draw position = 2, lower whisker = 0.774973, lower quartile = 0.782826, median = 0.787657, upper quartile = 0.790773, upper whisker = 0.803171},]coordinates{}; 
\addplot+ [color = yellow!70!black,solid,boxplot prepared = {box extend=0.4, draw position = 3, lower whisker = 0.851263, lower quartile = 0.857680, median = 0.862585, upper quartile = 0.865482, upper whisker = 0.877369},]coordinates{}; 
\addplot+ [color = yellow!70!black,solid,boxplot prepared = {box extend=0.4, draw position = 4, lower whisker = 0.890544, lower quartile = 0.902372, median = 0.905525, upper quartile = 0.911209, upper whisker = 0.917202},]coordinates{}; 
\addplot+ [color = yellow!70!black,solid,boxplot prepared = {box extend=0.4, draw position = 5, lower whisker = 0.918113, lower quartile = 0.926994, median = 0.930621, upper quartile = 0.932897, upper whisker = 0.941251},]coordinates{}; 
\addplot+ [color = yellow!70!black,solid,boxplot prepared = {box extend=0.4, draw position = 6, lower whisker = 0.926186, lower quartile = 0.939059, median = 0.944117, upper quartile = 0.949088, upper whisker = 0.955444},]coordinates{}; 
\addplot+ [color = yellow!70!black,solid,boxplot prepared = {box extend=0.4, draw position = 7, lower whisker = 0.938230, lower quartile = 0.945303, median = 0.947669, upper quartile = 0.950496, upper whisker = 0.958903},]coordinates{}; 
\addplot+ [color = yellow!70!black,solid,boxplot prepared = {box extend=0.4, draw position = 8, lower whisker = 0.946412, lower quartile = 0.952007, median = 0.955525, upper quartile = 0.958435, upper whisker = 0.962932},]coordinates{}; 
\addplot+ [color = yellow!70!black,solid,boxplot prepared = {box extend=0.4, draw position = 9, lower whisker = 0.949517, lower quartile = 0.955718, median = 0.959361, upper quartile = 0.962694, upper whisker = 0.969487},]coordinates{}; 
\addplot+ [color = yellow!70!black,solid,boxplot prepared = {box extend=0.4, draw position = 10, lower whisker = 0.956616, lower quartile = 0.960896, median = 0.962680, upper quartile = 0.964558, upper whisker = 0.972669},]coordinates{}; 
\addplot+ [color = yellow!70!black,solid,boxplot prepared = {box extend=0.4, draw position = 11, lower whisker = 0.956660, lower quartile = 0.964738, median = 0.966981, upper quartile = 0.968442, upper whisker = 0.972266},]coordinates{}; 

%fixed grid
\addplot+ [color = red!70!black,solid,boxplot prepared = {box extend=0.4, draw position = 1, lower whisker = 0.680514, lower quartile = 0.686651, median = 0.690038, upper quartile = 0.693975, upper whisker = 0.701583},]coordinates{}; 
\addplot+ [color = red!70!black,solid,boxplot prepared = {box extend=0.4, draw position = 2, lower whisker = 0.805258, lower quartile = 0.809236, median = 0.811469, upper quartile = 0.813139, upper whisker = 0.819814},]coordinates{}; 
\addplot+ [color = red!70!black,solid,boxplot prepared = {box extend=0.4, draw position = 3, lower whisker = 0.852233, lower quartile = 0.856354, median = 0.858914, upper quartile = 0.861127, upper whisker = 0.868929},]coordinates{}; 
\addplot+ [color = red!70!black,solid,boxplot prepared = {box extend=0.4, draw position = 4, lower whisker = 0.881934, lower quartile = 0.886633, median = 0.889181, upper quartile = 0.892313, upper whisker = 0.896038},]coordinates{}; 
\addplot+ [color = red!70!black,solid,boxplot prepared = {box extend=0.4, draw position = 5, lower whisker = 0.908956, lower quartile = 0.916114, median = 0.919376, upper quartile = 0.921264, upper whisker = 0.931096},]coordinates{}; 
\addplot+ [color = red!70!black,solid,boxplot prepared = {box extend=0.4, draw position = 6, lower whisker = 0.928296, lower quartile = 0.935104, median = 0.942069, upper quartile = 0.946054, upper whisker = 0.956155},]coordinates{}; 
\addplot+ [color = red!70!black,solid,boxplot prepared = {box extend=0.4, draw position = 7, lower whisker = 0.937665, lower quartile = 0.942019, median = 0.945583, upper quartile = 0.949181, upper whisker = 0.955683},]coordinates{}; 
\addplot+ [color = red!70!black,solid,boxplot prepared = {box extend=0.4, draw position = 8, lower whisker = 0.945851, lower quartile = 0.950943, median = 0.954033, upper quartile = 0.956352, upper whisker = 0.962383},]coordinates{}; 
\addplot+ [color = red!70!black,solid,boxplot prepared = {box extend=0.4, draw position = 9, lower whisker = 0.946687, lower quartile = 0.955077, median = 0.958788, upper quartile = 0.961060, upper whisker = 0.969218},]coordinates{}; 
\addplot+ [color = red!70!black,solid,boxplot prepared = {box extend=0.4, draw position = 10, lower whisker = 0.954956, lower quartile = 0.959738, median = 0.962722, upper quartile = 0.965188, upper whisker = 0.973058},]coordinates{}; 
\addplot+ [color = red!70!black,solid,boxplot prepared = {box extend=0.4, draw position = 11, lower whisker = 0.959119, lower quartile = 0.963684, median = 0.965516, upper quartile = 0.967531, upper whisker = 0.972349},]coordinates{}; 

%rand grid
\addplot+ [color = blue!50!black,solid,boxplot prepared = {box extend=0.4, draw position = 1, lower whisker = 0.578120, lower quartile = 0.583523, median = 0.585371, upper quartile = 0.588368, upper whisker = 0.594768},]coordinates{}; 
\addplot+ [color = blue!50!black,solid,boxplot prepared = {box extend=0.4, draw position = 2, lower whisker = 0.674330, lower quartile = 0.682455, median = 0.685261, upper quartile = 0.688337, upper whisker = 0.693348},]coordinates{}; 
\addplot+ [color = blue!50!black,solid,boxplot prepared = {box extend=0.4, draw position = 3, lower whisker = 0.721993, lower quartile = 0.729992, median = 0.733861, upper quartile = 0.737676, upper whisker = 0.758905},]coordinates{}; 
\addplot+ [color = blue!50!black,solid,boxplot prepared = {box extend=0.4, draw position = 4, lower whisker = 0.775265, lower quartile = 0.786859, median = 0.791750, upper quartile = 0.797196, upper whisker = 0.809965},]coordinates{}; 
\addplot+ [color = blue!50!black,solid,boxplot prepared = {box extend=0.4, draw position = 5, lower whisker = 0.815139, lower quartile = 0.835095, median = 0.845527, upper quartile = 0.853985, upper whisker = 0.884983},]coordinates{}; 
\addplot+ [color = blue!50!black,solid,boxplot prepared = {box extend=0.4, draw position = 6, lower whisker = 0.928179, lower quartile = 0.938846, median = 0.945421, upper quartile = 0.948944, upper whisker = 0.955840},]coordinates{}; 
\addplot+ [color = blue!50!black,solid,boxplot prepared = {box extend=0.4, draw position = 7, lower whisker = 0.931994, lower quartile = 0.945407, median = 0.948281, upper quartile = 0.952247, upper whisker = 0.957554},]coordinates{}; 
\addplot+ [color = blue!50!black,solid,boxplot prepared = {box extend=0.4, draw position = 8, lower whisker = 0.941164, lower quartile = 0.952538, median = 0.954399, upper quartile = 0.955750, upper whisker = 0.963303},]coordinates{}; 
\addplot+ [color = blue!50!black,solid,boxplot prepared = {box extend=0.4, draw position = 9, lower whisker = 0.953434, lower quartile = 0.958500, median = 0.960338, upper quartile = 0.961651, upper whisker = 0.964993},]coordinates{}; 
\addplot+ [color = blue!50!black,solid,boxplot prepared = {box extend=0.4, draw position = 10, lower whisker = 0.952112, lower quartile = 0.959760, median = 0.962246, upper quartile = 0.966463, upper whisker = 0.969811},]coordinates{}; 
\addplot+ [color = blue!50!black,solid,boxplot prepared = {box extend=0.4, draw position = 11, lower whisker = 0.959592, lower quartile = 0.965390, median = 0.967149, upper quartile = 0.968333, upper whisker = 0.972602},]coordinates{}; 

\end{axis}
\end{tikzpicture}
}
\end{minipage}
\ref{named}
\caption{Local segregation of  $1.01$-approximate networks in the \icfgame obtained by the best move dynamic for $n=1000$ over 50 runs starting from a random or segregated tree and grid.}
\label{plot:ICFNCG}
\end{figure}
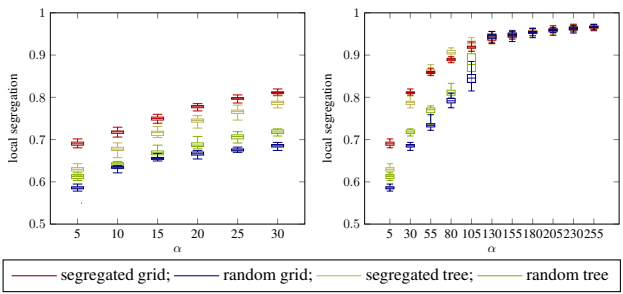

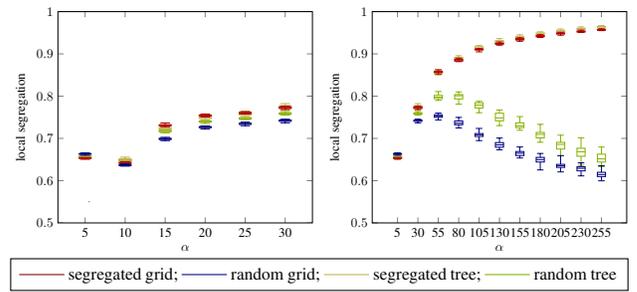
\begin{figure}[!ht]
\centering
\begin{minipage}{0.23\textwidth}
\resizebox {\textwidth} {!} {
\begin{tikzpicture}
\begin{axis}[
legend columns=-1,
legend entries={segregated grid;,random grid;,segregated tree;,random tree},
legend to name=named,
legend style={nodes={scale=0.65, transform shape}},% legend cell align=center},
xlabel= {$\alpha$},
ylabel= {local segregation},
boxplot/draw direction=y,
baseline,
xtick = {1,2,3, 4, 5, 6, 7, 8},
xticklabels = {5, 10, 15, 20, 25, 30, 35, 40},
ymin=0.5,
ymax=1
]
\addplot[red!50!black, domain=1.1:1.11]{0.55};
%\addlegendentry{fixed grid};
\addplot[blue!50!black, domain=1.1:1.11]{0.55};
%\addlegendentry{random grid};
\addplot[yellow!70!black, domain=1.1:1.11]{0.55};
%\addlegendentry{fixed tree};
\addplot[lime!70!black, domain=1.1:1.11]{0.55};
%\addlegendentry{random tree};

%tree
\addplot+ [color = lime!70!black,solid,boxplot prepared = {box extend=0.3, upper whisker = 0.659879, upper quartile = 0.658397, median = 0.656753, lower quartile = 0.655413, lower whisker = 0.653990},]coordinates{}; 
\addplot+ [color = lime!70!black,solid,boxplot prepared = {box extend=0.3, upper whisker = 0.650925, upper quartile = 0.649870, median = 0.649138, lower quartile = 0.647020, lower whisker = 0.644806},]coordinates{}; 
\addplot+ [color = lime!70!black,solid,boxplot prepared = {box extend=0.3, upper whisker = 0.720179, upper quartile = 0.719365, median = 0.717668, lower quartile = 0.715807, lower whisker = 0.713312},]coordinates{}; 
\addplot+ [color = lime!70!black,solid,boxplot prepared = {box extend=0.3, upper whisker = 0.743078, upper quartile = 0.741271, median = 0.739565, lower quartile = 0.738483, lower whisker = 0.735386},]coordinates{}; 
\addplot+ [color = lime!70!black,solid,boxplot prepared = {box extend=0.3, upper whisker = 0.751736, upper quartile = 0.748995, median = 0.746719, lower quartile = 0.745677, lower whisker = 0.744522},]coordinates{}; 
\addplot+ [color = lime!70!black,solid,boxplot prepared = {box extend=0.3, upper whisker = 0.763583, upper quartile = 0.760466, median = 0.758692, lower quartile = 0.757232, lower whisker = 0.755161},]coordinates{}; 

%segr tree
\addplot+ [color = yellow!70!black,solid,boxplot prepared = {box extend=0.3, draw position = 1, lower whisker = 0.652161, lower quartile = 0.653350, median = 0.654066, upper quartile = 0.654914, upper whisker = 0.655529},]coordinates{}; 
\addplot+ [color = yellow!70!black,solid,boxplot prepared = {box extend=0.3, draw position = 2, lower whisker = 0.643005, lower quartile = 0.648401, median = 0.649864, upper quartile = 0.651919, upper whisker = 0.655596},]coordinates{}; 
\addplot+ [color = yellow!70!black,solid,boxplot prepared = {box extend=0.3, draw position = 3, lower whisker = 0.718538, lower quartile = 0.722290, median = 0.724476, upper quartile = 0.725825, upper whisker = 0.731051},]coordinates{}; 
\addplot+ [color = yellow!70!black,solid,boxplot prepared = {box extend=0.3, draw position = 4, lower whisker = 0.742324, lower quartile = 0.747015, median = 0.748845, upper quartile = 0.749996, upper whisker = 0.752475},]coordinates{}; 
\addplot+ [color = yellow!70!black,solid,boxplot prepared = {box extend=0.3, draw position = 5, lower whisker = 0.749928, lower quartile = 0.756005, median = 0.757036, upper quartile = 0.758665, upper whisker = 0.764659},]coordinates{}; 
\addplot+ [color = yellow!70!black,solid,boxplot prepared = {box extend=0.3, draw position = 6, lower whisker = 0.767564, lower quartile = 0.772087, median = 0.775743, upper quartile = 0.777537, upper whisker = 0.782297},]coordinates{}; 

%fixed grid
\addplot+ [color = red!70!black,solid,boxplot prepared = {box extend=0.3, draw position = 1, lower whisker = 0.650482, lower quartile = 0.652644, median = 0.653581, upper quartile = 0.654087, upper whisker = 0.654632},]coordinates{}; 
\addplot+ [color = red!70!black,solid,boxplot prepared = {box extend=0.3, draw position = 2, lower whisker = 0.638210, lower quartile = 0.640454, median = 0.641485, upper quartile = 0.643782, upper whisker = 0.644536},]coordinates{}; 
\addplot+ [color = red!70!black,solid,boxplot prepared = {box extend=0.3, draw position = 3, lower whisker = 0.727639, lower quartile = 0.729764, median = 0.731083, upper quartile = 0.732452, upper whisker = 0.736530},]coordinates{}; 
\addplot+ [color = red!70!black,solid,boxplot prepared = {box extend=0.3, draw position = 4, lower whisker = 0.750945, lower quartile = 0.752831, median = 0.754032, upper quartile = 0.755560, upper whisker = 0.757696},]coordinates{}; 
\addplot+ [color = red!70!black,solid,boxplot prepared = {box extend=0.3, draw position = 5, lower whisker = 0.757637, lower quartile = 0.758648, median = 0.759543, upper quartile = 0.761889, upper whisker = 0.763217},]coordinates{}; 
\addplot+ [color = red!70!black,solid,boxplot prepared = {box extend=0.3, draw position = 6, lower whisker = 0.768352, lower quartile = 0.770891, median = 0.773063, upper quartile = 0.774361, upper whisker = 0.776514},]coordinates{}; 

%rand grid
\addplot+ [color = blue!50!black,solid,boxplot prepared = {box extend=0.3, draw position = 1, lower whisker = 0.660295, lower quartile = 0.662472, median = 0.663409, upper quartile = 0.664214, upper whisker = 0.665714},]coordinates{}; 
\addplot+ [color = blue!50!black,solid,boxplot prepared = {box extend=0.3, draw position = 2, lower whisker = 0.634319, lower quartile = 0.635633, median = 0.636894, upper quartile = 0.637588, upper whisker = 0.639591},]coordinates{}; 
\addplot+ [color = blue!50!black,solid,boxplot prepared = {box extend=0.3, draw position = 3, lower whisker = 0.694702, lower quartile = 0.697318, median = 0.698636, upper quartile = 0.700405, upper whisker = 0.702590},]coordinates{}; 
\addplot+ [color = blue!50!black,solid,boxplot prepared = {box extend=0.3, draw position = 4, lower whisker = 0.722026, lower quartile = 0.725352, median = 0.726957, upper quartile = 0.728018, upper whisker = 0.729656},]coordinates{}; 
\addplot+ [color = blue!50!black,solid,boxplot prepared = {box extend=0.3, draw position = 5, lower whisker = 0.729907, lower quartile = 0.733217, median = 0.734471, upper quartile = 0.735810, upper whisker = 0.738647},]coordinates{}; 
\addplot+ [color = blue!50!black,solid,boxplot prepared = {box extend=0.3, draw position = 6, lower whisker = 0.736867, lower quartile = 0.740440, median = 0.742355, upper quartile = 0.743409, upper whisker = 0.745493},]coordinates{}; 

\end{axis}
\end{tikzpicture}
}
\end{minipage}
\begin{minipage}{0.23\textwidth}
\resizebox {\textwidth} {!} {
\begin{tikzpicture}
\begin{axis}[
xlabel= {$\alpha$},
ylabel= {local segregation},
boxplot/draw direction=y,
baseline,
xtick = {1, 2, 3, ..., 11},
xticklabels  = {5, 30, 55, 80, 105, 130, 155, 180, 205, 230, 255},
ymin=0.5,
ymax=1
]

%tree
\addplot+ [color = lime!70!black,solid,boxplot prepared = {box extend=0.4, upper whisker = 0.659879, upper quartile = 0.658397, median = 0.656753, lower quartile = 0.655413, lower whisker = 0.653990},]coordinates{}; 
\addplot+ [color = lime!70!black,solid,boxplot prepared = {box extend=0.4, upper whisker = 0.763583, upper quartile = 0.760466, median = 0.758692, lower quartile = 0.757232, lower whisker = 0.755161},]coordinates{}; 
\addplot+ [color = lime!70!black,solid,boxplot prepared = {box extend=0.4, upper whisker = 0.810918, upper quartile = 0.801887, median = 0.797691, lower quartile = 0.794103, lower whisker = 0.789915},]coordinates{}; 
\addplot+ [color = lime!70!black,solid,boxplot prepared = {box extend=0.4, upper whisker = 0.810056, upper quartile = 0.803677, median = 0.800853, lower quartile = 0.793454, lower whisker = 0.781104},]coordinates{}; 
\addplot+ [color = lime!70!black,solid,boxplot prepared = {box extend=0.4, upper whisker = 0.787627, upper quartile = 0.784401, median = 0.778971, lower quartile = 0.772142, lower whisker = 0.760741},]coordinates{}; 
\addplot+ [color = lime!70!black,solid,boxplot prepared = {box extend=0.4, upper whisker = 0.767364, upper quartile = 0.759520, median = 0.748505, lower quartile = 0.741629, lower whisker = 0.730269},]coordinates{}; 
\addplot+ [color = lime!70!black,solid,boxplot prepared = {box extend=0.4, upper whisker = 0.751625, upper quartile = 0.736903, median = 0.729563, lower quartile = 0.725385, lower whisker = 0.719247},]coordinates{}; 
\addplot+ [color = lime!70!black,solid,boxplot prepared = {box extend=0.4, upper whisker = 0.733305, upper quartile = 0.714098, median = 0.709601, lower quartile = 0.701306, lower whisker = 0.690854},]coordinates{}; 
\addplot+ [color = lime!70!black,solid,boxplot prepared = {box extend=0.4, upper whisker = 0.707582, upper quartile = 0.690663, median = 0.685736, lower quartile = 0.674783, lower whisker = 0.658763},]coordinates{}; 
\addplot+ [color = lime!70!black,solid,boxplot prepared = {box extend=0.4, upper whisker = 0.701504, upper quartile = 0.676710, median = 0.668350, lower quartile = 0.657748, lower whisker = 0.636315},]coordinates{}; 
\addplot+ [color = lime!70!black,solid,boxplot prepared = {box extend=0.4, upper whisker = 0.679541, upper quartile = 0.662636, median = 0.651606, lower quartile = 0.644054, lower whisker = 0.633262},]coordinates{}; 

%segr tree
\addplot+ [color = yellow!70!black,solid,boxplot prepared = {box extend=0.4, draw position = 1, lower whisker = 0.652161, lower quartile = 0.653350, median = 0.654066, upper quartile = 0.654914, upper whisker = 0.655529},]coordinates{}; 
\addplot+ [color = yellow!70!black,solid,boxplot prepared = {box extend=0.4, draw position = 2, lower whisker = 0.767564, lower quartile = 0.772087, median = 0.775743, upper quartile = 0.777537, upper whisker = 0.782297},]coordinates{}; 
\addplot+ [color = yellow!70!black,solid,boxplot prepared = {box extend=0.4, draw position = 3, lower whisker = 0.850005, lower quartile = 0.852788, median = 0.854899, upper quartile = 0.857658, upper whisker = 0.862474},]coordinates{}; 
\addplot+ [color = yellow!70!black,solid,boxplot prepared = {box extend=0.4, draw position = 4, lower whisker = 0.883507, lower quartile = 0.887282, median = 0.889788, upper quartile = 0.891954, upper whisker = 0.895682},]coordinates{}; 
\addplot+ [color = yellow!70!black,solid,boxplot prepared = {box extend=0.4, draw position = 5, lower whisker = 0.909074, lower quartile = 0.910714, median = 0.913204, upper quartile = 0.917234, upper whisker = 0.919856},]coordinates{}; 
\addplot+ [color = yellow!70!black,solid,boxplot prepared = {box extend=0.4, draw position = 6, lower whisker = 0.923144, lower quartile = 0.925099, median = 0.927663, upper quartile = 0.931488, upper whisker = 0.936320},]coordinates{}; 
\addplot+ [color = yellow!70!black,solid,boxplot prepared = {box extend=0.4, draw position = 7, lower whisker = 0.932835, lower quartile = 0.936333, median = 0.939496, upper quartile = 0.941652, upper whisker = 0.945190},]coordinates{}; 
\addplot+ [color = yellow!70!black,solid,boxplot prepared = {box extend=0.4, draw position = 8, lower whisker = 0.940433, lower quartile = 0.944929, median = 0.947588, upper quartile = 0.949665, upper whisker = 0.952202},]coordinates{}; 
\addplot+ [color = yellow!70!black,solid,boxplot prepared = {box extend=0.4, draw position = 9, lower whisker = 0.948553, lower quartile = 0.951383, median = 0.953600, upper quartile = 0.955831, upper whisker = 0.958901},]coordinates{}; 
\addplot+ [color = yellow!70!black,solid,boxplot prepared = {box extend=0.4, draw position = 10, lower whisker = 0.953756, lower quartile = 0.957942, median = 0.959456, upper quartile = 0.960781, upper whisker = 0.963900},]coordinates{}; 
\addplot+ [color = yellow!70!black,solid,boxplot prepared = {box extend=0.4, draw position = 11, lower whisker = 0.957611, lower quartile = 0.961093, median = 0.963655, upper quartile = 0.964826, upper whisker = 0.966842},]coordinates{}; 

%rand grid
\addplot+ [color = blue!50!black,solid,boxplot prepared = {box extend=0.4, draw position = 1, lower whisker = 0.660295, lower quartile = 0.662472, median = 0.663409, upper quartile = 0.664214, upper whisker = 0.665714},]coordinates{}; 
\addplot+ [color = blue!50!black,solid,boxplot prepared = {box extend=0.4, draw position = 2, lower whisker = 0.736867, lower quartile = 0.740440, median = 0.742355, upper quartile = 0.743409, upper whisker = 0.745493},]coordinates{}; 
\addplot+ [color = blue!50!black,solid,boxplot prepared = {box extend=0.4, draw position = 3, lower whisker = 0.743763, lower quartile = 0.749882, median = 0.752613, upper quartile = 0.755885, upper whisker = 0.759802},]coordinates{}; 
\addplot+ [color = blue!50!black,solid,boxplot prepared = {box extend=0.4, draw position = 4, lower whisker = 0.724693, lower quartile = 0.732339, median = 0.736061, upper quartile = 0.741292, upper whisker = 0.749750},]coordinates{}; 
\addplot+ [color = blue!50!black,solid,boxplot prepared = {box extend=0.4, draw position = 5, lower whisker = 0.694786, lower quartile = 0.704600, median = 0.708528, upper quartile = 0.711441, upper whisker = 0.723636},]coordinates{}; 
\addplot+ [color = blue!50!black,solid,boxplot prepared = {box extend=0.4, draw position = 6, lower whisker = 0.673080, lower quartile = 0.679199, median = 0.684756, upper quartile = 0.690056, upper whisker = 0.700850},]coordinates{}; 
\addplot+ [color = blue!50!black,solid,boxplot prepared = {box extend=0.4, draw position = 7, lower whisker = 0.653669, lower quartile = 0.659405, median = 0.664931, upper quartile = 0.668912, upper whisker = 0.679791},]coordinates{}; 
\addplot+ [color = blue!50!black,solid,boxplot prepared = {box extend=0.4, draw position = 8, lower whisker = 0.625365, lower quartile = 0.644113, median = 0.649811, upper quartile = 0.655027, upper whisker = 0.664394},]coordinates{}; 
\addplot+ [color = blue!50!black,solid,boxplot prepared = {box extend=0.4, draw position = 9, lower whisker = 0.620683, lower quartile = 0.631375, median = 0.635065, upper quartile = 0.638523, upper whisker = 0.659369},]coordinates{}; 
\addplot+ [color = blue!50!black,solid,boxplot prepared = {box extend=0.4, draw position = 10, lower whisker = 0.611280, lower quartile = 0.623212, median = 0.629153, upper quartile = 0.632750, upper whisker = 0.642166},]coordinates{}; 
\addplot+ [color = blue!50!black,solid,boxplot prepared = {box extend=0.4, draw position = 11, lower whisker = 0.599656, lower quartile = 0.609701, median = 0.614495, upper quartile = 0.619795, upper whisker = 0.635213},]coordinates{}; 

%segr grid
\addplot+ [color = red!70!black,solid,boxplot prepared = {box extend=0.4, draw position = 1, lower whisker = 0.650482, lower quartile = 0.652644, median = 0.653581, upper quartile = 0.654087, upper whisker = 0.654632},]coordinates{}; 
\addplot+ [color = red!70!black,solid,boxplot prepared = {box extend=0.4, draw position = 2, lower whisker = 0.768352, lower quartile = 0.770891, median = 0.773063, upper quartile = 0.774361, upper whisker = 0.776514},]coordinates{}; 
\addplot+ [color = red!70!black,solid,boxplot prepared = {box extend=0.4, draw position = 3, lower whisker = 0.852445, lower quartile = 0.856174, median = 0.857146, upper quartile = 0.859243, upper whisker = 0.862548},]coordinates{}; 
\addplot+ [color = red!70!black,solid,boxplot prepared = {box extend=0.4, draw position = 4, lower whisker = 0.881620, lower quartile = 0.882974, median = 0.885742, upper quartile = 0.886685, upper whisker = 0.889856},]coordinates{}; 
\addplot+ [color = red!70!black,solid,boxplot prepared = {box extend=0.4, draw position = 5, lower whisker = 0.904777, lower quartile = 0.909308, median = 0.910435, upper quartile = 0.911631, upper whisker = 0.913487},]coordinates{}; 
\addplot+ [color = red!70!black,solid,boxplot prepared = {box extend=0.4, draw position = 6, lower whisker = 0.920637, lower quartile = 0.922934, median = 0.923998, upper quartile = 0.925277, upper whisker = 0.928442},]coordinates{}; 
\addplot+ [color = red!70!black,solid,boxplot prepared = {box extend=0.4, draw position = 7, lower whisker = 0.929886, lower quartile = 0.932895, median = 0.934411, upper quartile = 0.936067, upper whisker = 0.938366},]coordinates{}; 
\addplot+ [color = red!70!black,solid,boxplot prepared = {box extend=0.4, draw position = 8, lower whisker = 0.938868, lower quartile = 0.940714, median = 0.942132, upper quartile = 0.943119, upper whisker = 0.945638},]coordinates{}; 
\addplot+ [color = red!70!black,solid,boxplot prepared = {box extend=0.4, draw position = 9, lower whisker = 0.944206, lower quartile = 0.946983, median = 0.948622, upper quartile = 0.949469, upper whisker = 0.951244},]coordinates{}; 
\addplot+ [color = red!70!black,solid,boxplot prepared = {box extend=0.4, draw position = 10, lower whisker = 0.950679, lower quartile = 0.951785, median = 0.952754, upper quartile = 0.954063, upper whisker = 0.956140},]coordinates{}; 
\addplot+ [color = red!70!black,solid,boxplot prepared = {box extend=0.4, draw position = 11, lower whisker = 0.955072, lower quartile = 0.956095, median = 0.956832, upper quartile = 0.957558, upper whisker = 0.959508},]coordinates{}; 

\end{axis}
\end{tikzpicture}
}
\end{minipage}
\ref{named}
\caption{Local segregation of pairwise stable networks in the add-only \icfgame obtained by the best move dynamic for $n=1000$ over 50 runs starting from a random or segregated tree and grid.}
\label{plot:ICFNCG_AddOnly}
\end{figure}

\subsubsection*{Results for the \deigame}
For sake of comparison, we include the results for the local segregation measure for the \deigame again. The following two plots are identical to the respective plots in the main body of the paper.  
This shows clearly the similarities of the respective results. In particular, the tendency of decreasing segregation in case of the add-only version of the dynamics with random initial networks is observed for both games.

\begin{figure}[!ht]
\centering
\begin{minipage}{0.23\textwidth}
\resizebox {\textwidth} {!} {
\begin{tikzpicture}
\begin{axis}[
legend columns=-1,
legend entries={segregated grid;,random grid;,segregated tree;,random tree},
legend to name=named,
legend style={nodes={scale=0.65, transform shape}},% legend cell align=center},
xlabel= {$\alpha$},
ylabel= {local segregation},
boxplot/draw direction=y,
baseline,
xtick = {1,2,3, 4, 5, 6, 7, 8},
xticklabels = {5, 10, 15, 20, 25, 30, 35, 40},
ymin=0.5,
ymax=1
]
\addplot[red!50!black, domain=1.1:1.11]{0.55};
%\addlegendentry{fixed grid};
\addplot[blue!50!black, domain=1.1:1.11]{0.55};
%\addlegendentry{random grid};
\addplot[yellow!70!black, domain=1.1:1.11]{0.55};
%\addlegendentry{fixed tree};
\addplot[lime!70!black, domain=1.1:1.11]{0.55};
%\addlegendentry{random tree};

%rand tree
\addplot+ [color = lime!70!black,solid,boxplot prepared = {box extend=0.3, draw position = 1, lower whisker = 0.587877, lower quartile = 0.595792, median = 0.599611, upper quartile = 0.600736, upper whisker = 0.600876},]coordinates{}; 
\addplot+ [color = lime!70!black,solid,boxplot prepared = {box extend=0.3, draw position = 2, lower whisker = 0.614817, lower quartile = 0.639993, median = 0.648220, upper quartile = 0.655185, upper whisker = 0.659965},]coordinates{}; 
\addplot+ [color = lime!70!black,solid,boxplot prepared = {box extend=0.3, draw position = 3, lower whisker = 0.672204, lower quartile = 0.680865, median = 0.686433, upper quartile = 0.691970, upper whisker = 0.704313},]coordinates{}; 
\addplot+ [color = lime!70!black,solid,boxplot prepared = {box extend=0.3, draw position = 4, lower whisker = 0.693762, lower quartile = 0.710872, median = 0.720265, upper quartile = 0.725355, upper whisker = 0.743811},]coordinates{}; 
\addplot+ [color = lime!70!black,solid,boxplot prepared = {box extend=0.3, draw position = 5, lower whisker = 0.725430, lower quartile = 0.738375, median = 0.745606, upper quartile = 0.752434, upper whisker = 0.769720},]coordinates{}; 
\addplot+ [color = lime!70!black,solid,boxplot prepared = {box extend=0.3, draw position = 6, lower whisker = 0.746930, lower quartile = 0.759186, median = 0.773552, upper quartile = 0.779921, upper whisker = 0.783034},]coordinates{}; 

%segr tree
\addplot+ [color = yellow!70!black,solid,boxplot prepared = {box extend=0.3, draw position = 1, lower whisker = 0.625312, lower quartile = 0.645360, median = 0.649989, upper quartile = 0.654888, upper whisker = 0.664035},]coordinates{}; 
\addplot+ [color = yellow!70!black,solid,boxplot prepared = {box extend=0.3, draw position = 2, lower whisker = 0.715601, lower quartile = 0.723320, median = 0.726892, upper quartile = 0.729812, upper whisker = 0.739498},]coordinates{}; 
\addplot+ [color = yellow!70!black,solid,boxplot prepared = {box extend=0.3, draw position = 3, lower whisker = 0.764764, lower quartile = 0.776444, median = 0.782217, upper quartile = 0.785451, upper whisker = 0.796037},]coordinates{}; 
\addplot+ [color = yellow!70!black,solid,boxplot prepared = {box extend=0.3, draw position = 4, lower whisker = 0.802477, lower quartile = 0.815618, median = 0.821320, upper quartile = 0.823649, upper whisker = 0.830700},]coordinates{}; 
\addplot+ [color = yellow!70!black,solid,boxplot prepared = {box extend=0.3, draw position = 5, lower whisker = 0.835204, lower quartile = 0.842199, median = 0.845808, upper quartile = 0.849384, upper whisker = 0.856944},]coordinates{}; 
\addplot+ [color = yellow!70!black,solid,boxplot prepared = {box extend=0.3, draw position = 6, lower whisker = 0.857273, lower quartile = 0.865692, median = 0.870443, upper quartile = 0.872591, upper whisker = 0.883314},]coordinates{}; 

%fixed grid
\addplot+ [color = red!70!black,solid,boxplot prepared = {box extend=0.3, draw position = 1, lower whisker = 0.697566, lower quartile = 0.705443, median = 0.709720, upper quartile = 0.711501, upper whisker = 0.719408},]coordinates{}; 
\addplot+ [color = red!70!black,solid,boxplot prepared = {box extend=0.3, draw position = 2, lower whisker = 0.760929, lower quartile = 0.767087, median = 0.769896, upper quartile = 0.772985, upper whisker = 0.779684},]coordinates{}; 
\addplot+ [color = red!70!black,solid,boxplot prepared = {box extend=0.3, draw position = 3, lower whisker = 0.809113, lower quartile = 0.817723, median = 0.820637, upper quartile = 0.822316, upper whisker = 0.832187},]coordinates{}; 
\addplot+ [color = red!70!black,solid,boxplot prepared = {box extend=0.3, draw position = 4, lower whisker = 0.841635, lower quartile = 0.849097, median = 0.851167, upper quartile = 0.853647, upper whisker = 0.859318},]coordinates{}; 
\addplot+ [color = red!70!black,solid,boxplot prepared = {box extend=0.3, draw position = 5, lower whisker = 0.863108, lower quartile = 0.869798, median = 0.871972, upper quartile = 0.874686, upper whisker = 0.878572},]coordinates{}; 
\addplot+ [color = red!70!black,solid,boxplot prepared = {box extend=0.3, draw position = 6, lower whisker = 0.880598, lower quartile = 0.884663, median = 0.886926, upper quartile = 0.888980, upper whisker = 0.895380},]coordinates{}; 

%rand grid
\addplot+ [color = blue!50!black,solid,boxplot prepared = {box extend=0.3, draw position = 1, lower whisker = 0.553215, lower quartile = 0.567747, median = 0.572520, upper quartile = 0.575852, upper whisker = 0.583499},]coordinates{}; 
\addplot+ [color = blue!50!black,solid,boxplot prepared = {box extend=0.3, draw position = 2, lower whisker = 0.605987, lower quartile = 0.612051, median = 0.617258, upper quartile = 0.620476, upper whisker = 0.633279},]coordinates{}; 
\addplot+ [color = blue!50!black,solid,boxplot prepared = {box extend=0.3, draw position = 3, lower whisker = 0.634162, lower quartile = 0.641935, median = 0.648416, upper quartile = 0.651590, upper whisker = 0.659823},]coordinates{}; 
\addplot+ [color = blue!50!black,solid,boxplot prepared = {box extend=0.3, draw position = 4, lower whisker = 0.638457, lower quartile = 0.661246, median = 0.664685, upper quartile = 0.669107, upper whisker = 0.674004},]coordinates{}; 
\addplot+ [color = blue!50!black,solid,boxplot prepared = {box extend=0.3, draw position = 5, lower whisker = 0.660377, lower quartile = 0.670119, median = 0.677976, upper quartile = 0.683920, upper whisker = 0.692616},]coordinates{}; 
\addplot+ [color = blue!50!black,solid,boxplot prepared = {box extend=0.3, draw position = 6, lower whisker = 0.671119, lower quartile = 0.684051, median = 0.690288, upper quartile = 0.695877, upper whisker = 0.711555},]coordinates{}; 

\end{axis}
\end{tikzpicture}
}
\end{minipage}
\begin{minipage}{0.23\textwidth}
\resizebox {\textwidth} {!} {
\begin{tikzpicture}
\begin{axis}[
xlabel= {$\alpha$},
ylabel= {local segregation},
boxplot/draw direction=y,
baseline,
xtick = {1, 2, 3, ..., 11},
xticklabels  = {5, 30, 55, 80, 105, 130, 155, 180, 205, 230, 255},
ymin=0.5,
ymax=1
]

%rand tree
\addplot+ [color = lime!70!black,solid,boxplot prepared = {box extend=0.4, draw position = 1, lower whisker = 0.587877, lower quartile = 0.595792, median = 0.599611, upper quartile = 0.600736, upper whisker = 0.600876},]coordinates{}; 
\addplot+ [color = lime!70!black,solid,boxplot prepared = {box extend=0.4, draw position = 2, lower whisker = 0.746930, lower quartile = 0.759186, median = 0.773552, upper quartile = 0.779921, upper whisker = 0.783034},]coordinates{}; 
\addplot+ [color = lime!70!black,solid,boxplot prepared = {box extend=0.4, draw position = 3, lower whisker = 0.873544, lower quartile = 0.875830, median = 0.878203, upper quartile = 0.882616, upper whisker = 0.884611},]coordinates{}; 
\addplot+ [color = lime!70!black,solid,boxplot prepared = {box extend=0.4, draw position = 4, lower whisker = 0.923530, lower quartile = 0.925024, median = 0.930053, upper quartile = 0.931133, upper whisker = 0.935467},]coordinates{}; 
\addplot+ [color = lime!70!black,solid,boxplot prepared = {box extend=0.4, draw position = 5, lower whisker = 0.957246, lower quartile = 0.959246, median = 0.959575, upper quartile = 0.961969, upper whisker = 0.962539},]coordinates{}; 
\addplot+ [color = lime!70!black,solid,boxplot prepared = {box extend=0.4, draw position = 6, lower whisker = 0.961707, lower quartile = 0.961947, median = 0.964808, upper quartile = 0.967901, upper whisker = 0.970280},]coordinates{}; 
\addplot+ [color = lime!70!black,solid,boxplot prepared = {box extend=0.4, draw position = 7, lower whisker = 0.968187, lower quartile = 0.968823, median = 0.972421, upper quartile = 0.974994, upper whisker = 0.975260},]coordinates{}; 
\addplot+ [color = lime!70!black,solid,boxplot prepared = {box extend=0.4, draw position = 8, lower whisker = 0.969983, lower quartile = 0.973441, median = 0.975758, upper quartile = 0.977213, upper whisker = 0.978670},]coordinates{}; 
\addplot+ [color = lime!70!black,solid,boxplot prepared = {box extend=0.4, draw position = 9, lower whisker = 0.973462, lower quartile = 0.976978, median = 0.978195, upper quartile = 0.978944, upper whisker = 0.978955},]coordinates{}; 
\addplot+ [color = lime!70!black,solid,boxplot prepared = {box extend=0.4, draw position = 10, lower whisker = 0.977007, lower quartile = 0.977624, median = 0.978862, upper quartile = 0.980195, upper whisker = 0.981108},]coordinates{}; 
\addplot+ [color = lime!70!black,solid,boxplot prepared = {box extend=0.4, draw position = 11, lower whisker = 0.978110, lower quartile = 0.980106, median = 0.982154, upper quartile = 0.982968, upper whisker = 0.983385},]coordinates{}; 

%segr tree
\addplot+ [color = yellow!70!black,solid,boxplot prepared = {box extend=0.4, draw position = 1, lower whisker = 0.625312, lower quartile = 0.645360, median = 0.649989, upper quartile = 0.654888, upper whisker = 0.664035},]coordinates{}; 
\addplot+ [color = yellow!70!black,solid,boxplot prepared = {box extend=0.4, draw position = 2, lower whisker = 0.857273, lower quartile = 0.865692, median = 0.870443, upper quartile = 0.872591, upper whisker = 0.883314},]coordinates{}; 
\addplot+ [color = yellow!70!black,solid,boxplot prepared = {box extend=0.4, draw position = 3, lower whisker = 0.923923, lower quartile = 0.926922, median = 0.930204, upper quartile = 0.933215, upper whisker = 0.941737},]coordinates{}; 
\addplot+ [color = yellow!70!black,solid,boxplot prepared = {box extend=0.4, draw position = 4, lower whisker = 0.949148, lower quartile = 0.951784, median = 0.954276, upper quartile = 0.955444, upper whisker = 0.959144},]coordinates{}; 
\addplot+ [color = yellow!70!black,solid,boxplot prepared = {box extend=0.4, draw position = 5, lower whisker = 0.952666, lower quartile = 0.958743, median = 0.962441, upper quartile = 0.964854, upper whisker = 0.971552},]coordinates{}; 
\addplot+ [color = yellow!70!black,solid,boxplot prepared = {box extend=0.4, draw position = 6, lower whisker = 0.959560, lower quartile = 0.963026, median = 0.964291, upper quartile = 0.966124, upper whisker = 0.970617},]coordinates{}; 
\addplot+ [color = yellow!70!black,solid,boxplot prepared = {box extend=0.4, draw position = 7, lower whisker = 0.966569, lower quartile = 0.968616, median = 0.970596, upper quartile = 0.972148, upper whisker = 0.975831},]coordinates{}; 
\addplot+ [color = yellow!70!black,solid,boxplot prepared = {box extend=0.4, draw position = 8, lower whisker = 0.969493, lower quartile = 0.973342, median = 0.974507, upper quartile = 0.975862, upper whisker = 0.979846},]coordinates{}; 
\addplot+ [color = yellow!70!black,solid,boxplot prepared = {box extend=0.4, draw position = 9, lower whisker = 0.973960, lower quartile = 0.975806, median = 0.977811, upper quartile = 0.978961, upper whisker = 0.982225},]coordinates{}; 
\addplot+ [color = yellow!70!black,solid,boxplot prepared = {box extend=0.4, draw position = 10, lower whisker = 0.974930, lower quartile = 0.978464, median = 0.979766, upper quartile = 0.980507, upper whisker = 0.983006},]coordinates{}; 
\addplot+ [color = yellow!70!black,solid,boxplot prepared = {box extend=0.4, draw position = 11, lower whisker = 0.977565, lower quartile = 0.980369, median = 0.982019, upper quartile = 0.983316, upper whisker = 0.985345},]coordinates{}; 

%rand grid
\addplot+ [color = blue!50!black,solid,boxplot prepared = {box extend=0.4, draw position = 1, lower whisker = 0.553215, lower quartile = 0.567747, median = 0.572520, upper quartile = 0.575852, upper whisker = 0.583499},]coordinates{}; 
\addplot+ [color = blue!50!black,solid,boxplot prepared = {box extend=0.4, draw position = 2, lower whisker = 0.671119, lower quartile = 0.684051, median = 0.690288, upper quartile = 0.695877, upper whisker = 0.711555},]coordinates{}; 
\addplot+ [color = blue!50!black,solid,boxplot prepared = {box extend=0.4, draw position = 3, lower whisker = 0.769120, lower quartile = 0.792679, median = 0.799753, upper quartile = 0.805371, upper whisker = 0.824674},]coordinates{}; 
\addplot+ [color = blue!50!black,solid,boxplot prepared = {box extend=0.4, draw position = 4, lower whisker = 0.902491, lower quartile = 0.911603, median = 0.915007, upper quartile = 0.916627, upper whisker = 0.923923},]coordinates{}; 
\addplot+ [color = blue!50!black,solid,boxplot prepared = {box extend=0.4, draw position = 5, lower whisker = 0.948760, lower quartile = 0.953995, median = 0.957272, upper quartile = 0.959069, upper whisker = 0.964313},]coordinates{}; 
\addplot+ [color = blue!50!black,solid,boxplot prepared = {box extend=0.4, draw position = 6, lower whisker = 0.959348, lower quartile = 0.962614, median = 0.965023, upper quartile = 0.966140, upper whisker = 0.971082},]coordinates{}; 
\addplot+ [color = blue!50!black,solid,boxplot prepared = {box extend=0.4, draw position = 7, lower whisker = 0.966164, lower quartile = 0.969773, median = 0.971402, upper quartile = 0.973260, upper whisker = 0.977559},]coordinates{}; 
\addplot+ [color = blue!50!black,solid,boxplot prepared = {box extend=0.4, draw position = 8, lower whisker = 0.968674, lower quartile = 0.974483, median = 0.975671, upper quartile = 0.976556, upper whisker = 0.980299},]coordinates{}; 
\addplot+ [color = blue!50!black,solid,boxplot prepared = {box extend=0.4, draw position = 9, lower whisker = 0.974160, lower quartile = 0.976849, median = 0.978065, upper quartile = 0.978941, upper whisker = 0.981402},]coordinates{}; 
\addplot+ [color = blue!50!black,solid,boxplot prepared = {box extend=0.4, draw position = 10, lower whisker = 0.976702, lower quartile = 0.978616, median = 0.979923, upper quartile = 0.980969, upper whisker = 0.983412},]coordinates{}; 
\addplot+ [color = blue!50!black,solid,boxplot prepared = {box extend=0.4, draw position = 11, lower whisker = 0.977296, lower quartile = 0.980610, median = 0.981960, upper quartile = 0.983154, upper whisker = 0.984985},]coordinates{}; 

%fixed grid
\addplot+ [color = red!70!black,solid,boxplot prepared = {box extend=0.4, draw position = 1, lower whisker = 0.697566, lower quartile = 0.705443, median = 0.709720, upper quartile = 0.711501, upper whisker = 0.719408},]coordinates{}; 
\addplot+ [color = red!70!black,solid,boxplot prepared = {box extend=0.4, draw position = 2, lower whisker = 0.880598, lower quartile = 0.884663, median = 0.886926, upper quartile = 0.888980, upper whisker = 0.895380},]coordinates{}; 
\addplot+ [color = red!70!black,solid,boxplot prepared = {box extend=0.4, draw position = 3, lower whisker = 0.927817, lower quartile = 0.930200, median = 0.931157, upper quartile = 0.933681, upper whisker = 0.937793},]coordinates{}; 
\addplot+ [color = red!70!black,solid,boxplot prepared = {box extend=0.4, draw position = 4, lower whisker = 0.947143, lower quartile = 0.949648, median = 0.950862, upper quartile = 0.952350, upper whisker = 0.955776},]coordinates{}; 
\addplot+ [color = red!70!black,solid,boxplot prepared = {box extend=0.4, draw position = 5, lower whisker = 0.952846, lower quartile = 0.958263, median = 0.962840, upper quartile = 0.963716, upper whisker = 0.966948},]coordinates{}; 
\addplot+ [color = red!70!black,solid,boxplot prepared = {box extend=0.4, draw position = 6, lower whisker = 0.956431, lower quartile = 0.963584, median = 0.964688, upper quartile = 0.966167, upper whisker = 0.970800},]coordinates{}; 
\addplot+ [color = red!70!black,solid,boxplot prepared = {box extend=0.4, draw position = 7, lower whisker = 0.964810, lower quartile = 0.969006, median = 0.971059, upper quartile = 0.972356, upper whisker = 0.975870},]coordinates{}; 
\addplot+ [color = red!70!black,solid,boxplot prepared = {box extend=0.4, draw position = 8, lower whisker = 0.969395, lower quartile = 0.973264, median = 0.974714, upper quartile = 0.975619, upper whisker = 0.978041},]coordinates{}; 
\addplot+ [color = red!70!black,solid,boxplot prepared = {box extend=0.4, draw position = 9, lower whisker = 0.972500, lower quartile = 0.976252, median = 0.977093, upper quartile = 0.978713, upper whisker = 0.981757},]coordinates{}; 
\addplot+ [color = red!70!black,solid,boxplot prepared = {box extend=0.4, draw position = 10, lower whisker = 0.975554, lower quartile = 0.978756, median = 0.979826, upper quartile = 0.980922, upper whisker = 0.985774},]coordinates{}; 
\addplot+ [color = red!70!black,solid,boxplot prepared = {box extend=0.4, draw position = 11, lower whisker = 0.978110, lower quartile = 0.980720, median = 0.981614, upper quartile = 0.982151, upper whisker = 0.984563},]coordinates{}; 

\end{axis}
\end{tikzpicture}
}
\end{minipage}
\ref{named}
\caption{Local segregation of  $1.01$-approximate networks in the \deigame obtained by the best move dynamic for $n=1000$ over 50 runs starting from a random or segregated tree and grid.}
\label{plot:DEINCG_appendix}
\end{figure}
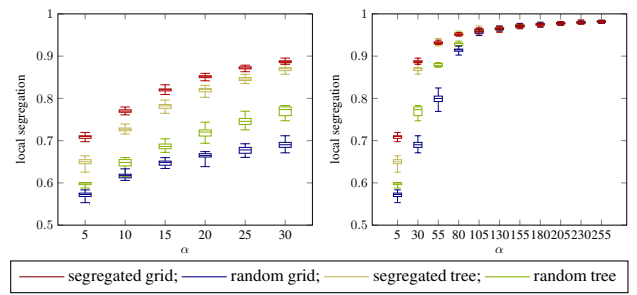

\begin{figure}[!ht]
\centering
\begin{minipage}{0.23\textwidth}
\resizebox {\textwidth} {!} {
\begin{tikzpicture}
\begin{axis}[
		legend columns=-1,
          legend entries={segregated grid;,random grid;,segregated tree;,random tree},
          legend to name=named,
		legend style={nodes={scale=0.65, transform shape}},% legend cell align=center},
		xlabel= {$\alpha$},
		ylabel= {local segregation},
		boxplot/draw direction=y,
		baseline,
		xtick = {1,2,3, 4, 5, 6, 7, 8},
		xticklabels = {5, 10, 15, 20, 25, 30, 35, 40},
		ymin=0.5,
    		ymax=1
	]
\addplot[red!50!black, domain=1.1:1.2]{0.55};
%\addlegendentry{fixed grid};
\addplot[blue!50!black, domain=1.1:1.2]{0.55};
%\addlegendentry{random grid};
\addplot[yellow!70!black, domain=1.1:1.2]{0.55};
%\addlegendentry{fixed tree};
\addplot[lime!70!black, domain=1.1:1.2]{0.55};
%\addlegendentry{random tree};

%rand tree
\addplot+ [color = lime!70!black,solid,boxplot prepared = {box extend=0.3,  upper whisker = 0.569317, upper quartile = 0.566948, median = 0.565704, lower quartile = 0.563991, lower whisker = 0.561343},]coordinates{}; 
\addplot+ [color = lime!70!black,solid,boxplot prepared = {box extend=0.3, draw position = 2, lower whisker = 0.656054, lower quartile = 0.660461, median = 0.664114, upper quartile = 0.665906, upper whisker = 0.670238},]coordinates{}; 
\addplot+ [color = lime!70!black,solid,boxplot prepared = {box extend=0.3, draw position = 3, lower whisker = 0.761166, lower quartile = 0.765647, median = 0.767783, upper quartile = 0.770015, upper whisker = 0.773884},]coordinates{}; 
\addplot+ [color = lime!70!black,solid,boxplot prepared = {box extend=0.3, draw position = 4, lower whisker = 0.786826, lower quartile = 0.791560, median = 0.793824, upper quartile = 0.795343, upper whisker = 0.803133},]coordinates{}; 
\addplot+ [color = lime!70!black,solid,boxplot prepared = {box extend=0.3, draw position = 5, lower whisker = 0.796294, lower quartile = 0.804001, median = 0.806804, upper quartile = 0.808822, upper whisker = 0.812360},]coordinates{}; 
\addplot+ [color = lime!70!black,solid,boxplot prepared = {box extend=0.3, draw position = 6,  upper whisker = 0.815487, upper quartile = 0.810792, median = 0.807471, lower quartile = 0.804435, lower whisker = 0.798459},]coordinates{}; 

%fixed tree
\addplot+ [color = yellow!70!black,solid,boxplot prepared = {box extend=0.3, draw position = 1, lower whisker = 0.571288, lower quartile = 0.574302, median = 0.575457, upper quartile = 0.576574, upper whisker = 0.577170},]coordinates{}; 
\addplot+ [color = yellow!70!black,solid,boxplot prepared = {box extend=0.3, draw position = 2, lower whisker = 0.669629, lower quartile = 0.680519, median = 0.682716, upper quartile = 0.685042, upper whisker = 0.693209},]coordinates{}; 
\addplot+ [color = yellow!70!black,solid,boxplot prepared = {box extend=0.3, draw position = 3, lower whisker = 0.795463, lower quartile = 0.799373, median = 0.800758, upper quartile = 0.802217, upper whisker = 0.804513},]coordinates{}; 
\addplot+ [color = yellow!70!black,solid,boxplot prepared = {box extend=0.3, draw position = 4, lower whisker = 0.823144, lower quartile = 0.826224, median = 0.828067, upper quartile = 0.830107, upper whisker = 0.833727},]coordinates{}; 
\addplot+ [color = yellow!70!black,solid,boxplot prepared = {box extend=0.3, draw position = 5, lower whisker = 0.838669, lower quartile = 0.844115, median = 0.845808, upper quartile = 0.848215, upper whisker = 0.855670},]coordinates{}; 
\addplot+ [color = yellow!70!black,solid,boxplot prepared = {box extend=0.3, draw position = 6, lower whisker = 0.864458, lower quartile = 0.870431, median = 0.873253, upper quartile = 0.875989, upper whisker = 0.882930},]coordinates{};

%fixed grid
\addplot+ [color = red!70!black,solid,boxplot prepared = {box extend=0.3, draw position = 1, lower whisker = 0.573521, lower quartile = 0.575512, median = 0.576715, upper quartile = 0.577702, upper whisker = 0.579620},]coordinates{}; 
\addplot+ [color = red!70!black,solid,boxplot prepared = {box extend=0.3, draw position = 2, lower whisker = 0.659204, lower quartile = 0.665023, median = 0.667811, upper quartile = 0.671121, upper whisker = 0.679496},]coordinates{}; 
\addplot+ [color = red!70!black,solid,boxplot prepared = {box extend=0.3, draw position = 3, lower whisker = 0.801244, lower quartile = 0.805434, median = 0.807576, upper quartile = 0.808680, upper whisker = 0.810356},]coordinates{}; 
\addplot+ [color = red!70!black,solid,boxplot prepared = {box extend=0.3, draw position = 4, lower whisker = 0.827337, lower quartile = 0.830322, median = 0.832444, upper quartile = 0.833859, upper whisker = 0.835993},]coordinates{}; 
\addplot+ [color = red!70!black,solid,boxplot prepared = {box extend=0.3, draw position = 5, lower whisker = 0.843377, lower quartile = 0.844394, median = 0.846362, upper quartile = 0.848605, upper whisker = 0.852728},]coordinates{}; 
\addplot+ [color = red!70!black,solid,boxplot prepared = {box extend=0.3, draw position = 6, lower whisker = 0.859387, lower quartile = 0.870538, median = 0.872640, upper quartile = 0.874923, upper whisker = 0.879567},]coordinates{}; 

%rand grid
\addplot+ [color = blue!50!black,solid, boxplot prepared = {box extend=0.3, draw position = 1, upper whisker = 0.558332, upper quartile = 0.556623, median = 0.555612, lower quartile = 0.553640, lower whisker = 0.551390},]coordinates{}; 
\addplot+ [color = blue!50!black,solid,boxplot prepared = {box extend=0.3, draw position = 2, lower whisker = 0.622430, lower quartile = 0.625420, median = 0.626853, upper quartile = 0.628278, upper whisker = 0.632146},]coordinates{}; 
\addplot+ [color = blue!50!black,solid,boxplot prepared = {box extend=0.3, draw position = 3, lower whisker = 0.721274, lower quartile = 0.723054, median = 0.725273, upper quartile = 0.727901, upper whisker = 0.734792},]coordinates{}; 
\addplot+ [color = blue!50!black,solid,boxplot prepared = {box extend=0.3, draw position = 4, lower whisker = 0.752463, lower quartile = 0.758091, median = 0.760399, upper quartile = 0.762618, upper whisker = 0.765219},]coordinates{}; 
\addplot+ [color = blue!50!black,solid,boxplot prepared = {box extend=0.3, draw position = 5, lower whisker = 0.767886, lower quartile = 0.769114, median = 0.769849, upper quartile = 0.772662, upper whisker = 0.777153},]coordinates{}; 
\addplot+ [color = blue!50!black,solid,boxplot prepared = {box extend=0.3, draw position = 6, upper whisker = 0.786549, upper quartile = 0.781019, median = 0.778677, lower quartile = 0.776975, lower whisker = 0.773675},]coordinates{}; 

\end{axis}
\end{tikzpicture}
}
\end{minipage}
\begin{minipage}{0.23\textwidth}
\resizebox {\textwidth} {!} {
\begin{tikzpicture}
\begin{axis}[
		xlabel= {$\alpha$},
		ylabel= {local segregation},
		boxplot/draw direction=y,
		baseline,
		xtick = {1, 2, 3, ..., 11},
		xticklabels  = {5, 30, 55, 80, 105, 130, 155, 180, 205, 230, 255},
		ymin=0.5,
    		ymax=1
	]

%tree
\addplot+ [color = lime!70!black,solid,boxplot prepared = {box extend=0.4,  upper whisker = 0.569317, upper quartile = 0.566948, median = 0.565704, lower quartile = 0.563991, lower whisker = 0.561343},]coordinates{};

\addplot+ [color = lime!70!black,solid,boxplot prepared = {box extend=0.4,  upper whisker = 0.815487, upper quartile = 0.810792, median = 0.807471, lower quartile = 0.804435, lower whisker = 0.798459},]coordinates{}; 

\addplot+ [color = lime!70!black,solid,boxplot prepared = {box extend=0.4,  upper whisker =0.863297, upper quartile = 0.859855, median = 0.855878, lower quartile = 0.852971, lower whisker = 0.848393},]coordinates{}; 

\addplot+ [color = lime!70!black,solid,boxplot prepared = {box extend=0.4,  upper whisker =0.844602, upper quartile = 0.843130, median =0.839602, lower quartile = 0.832798, lower whisker = 0.810987},]coordinates{};

\addplot+ [color = lime!70!black,solid,boxplot prepared = {box extend=0.4,  upper whisker =0.814326, upper quartile = 0.807044, median = 0.799180, lower quartile = 0.795927, lower whisker = 0.782402},]coordinates{};

\addplot+ [color = lime!70!black,solid,boxplot prepared = {box extend=0.4,  upper whisker =0.782966, upper quartile = 0.773236, median = 0.767630, lower quartile = 0.760762, lower whisker = 0.751039},]coordinates{};

\addplot+ [color = lime!70!black,solid,boxplot prepared = {box extend=0.4,  upper whisker =0.762613, upper quartile = 0.741276, median = 0.733168, lower quartile = 0.723363, lower whisker = 0.714112},]coordinates{};

\addplot+ [color = lime!70!black,solid,boxplot prepared = {box extend=0.4,  upper whisker =0.732723, upper quartile = 0.719416, median = 0.707596, lower quartile = 0.700607, lower whisker = 0.686914},]coordinates{};

\addplot+ [color = lime!70!black,solid,boxplot prepared = {box extend=0.4, upper whisker =0.705276, upper quartile = 0.697771, median = 0.684921, lower quartile = 0.678178, lower whisker = 0.652036},]coordinates{};

\addplot+ [color = lime!70!black,solid,boxplot prepared = {box extend=0.4,  upper whisker =0.692432, upper quartile = 0.680535, median = 0.672003, lower quartile = 0.656411, lower whisker = 0.637509},]coordinates{};

\addplot+ [color = lime!70!black,solid,boxplot prepared = {box extend=0.4,  upper whisker =0.668041, upper quartile = 0.651777, median = 0.645223, lower quartile = 0.638358, lower whisker = 0.615563},]coordinates{};

%segr tree
\addplot+ [color = yellow!70!black,solid,boxplot prepared = {box extend=0.4, draw position = 1, lower whisker = 0.571288, lower quartile = 0.574302, median = 0.575457, upper quartile = 0.576574, upper whisker = 0.577170},]coordinates{}; 
\addplot+ [color = yellow!70!black,solid,boxplot prepared = {box extend=0.4, draw position = 2, lower whisker = 0.864458, lower quartile = 0.870431, median = 0.873253, upper quartile = 0.875989, upper whisker = 0.882930},]coordinates{}; 
\addplot+ [color = yellow!70!black,solid,boxplot prepared = {box extend=0.4, draw position = 3, lower whisker = 0.930440, lower quartile = 0.933701, median = 0.935299, upper quartile = 0.938208, upper whisker = 0.939951},]coordinates{}; 
\addplot+ [color = yellow!70!black,solid,boxplot prepared = {box extend=0.4, draw position = 4, lower whisker = 0.950209, lower quartile = 0.953446, median = 0.955660, upper quartile = 0.956451, upper whisker = 0.961252},]coordinates{}; 
\addplot+ [color = yellow!70!black,solid,boxplot prepared = {box extend=0.4, draw position = 5, lower whisker = 0.959553, lower quartile = 0.963507, median = 0.964870, upper quartile = 0.966410, upper whisker = 0.969313},]coordinates{}; 
\addplot+ [color = yellow!70!black,solid,boxplot prepared = {box extend=0.4, draw position = 6, lower whisker = 0.970464, lower quartile = 0.971915, median = 0.972507, upper quartile = 0.973634, upper whisker = 0.974675},]coordinates{}; 
\addplot+ [color = yellow!70!black,solid,boxplot prepared = {box extend=0.4, draw position = 7, lower whisker = 0.973479, lower quartile = 0.975701, median = 0.976698, upper quartile = 0.977827, upper whisker = 0.980390},]coordinates{}; 
\addplot+ [color = yellow!70!black,solid,boxplot prepared = {box extend=0.4, draw position = 8, lower whisker = 0.974726, lower quartile = 0.979194, median = 0.980581, upper quartile = 0.981786, upper whisker = 0.983989},]coordinates{}; 
\addplot+ [color = yellow!70!black,solid,boxplot prepared = {box extend=0.4, draw position = 9, lower whisker = 0.980286, lower quartile = 0.981267, median = 0.982301, upper quartile = 0.983633, upper whisker = 0.985189},]coordinates{}; 
\addplot+ [color = yellow!70!black,solid,boxplot prepared = {box extend=0.4, draw position = 10, lower whisker = 0.982389, lower quartile = 0.984324, median = 0.984832, upper quartile = 0.985335, upper whisker = 0.986870},]coordinates{}; 
\addplot+ [color = yellow!70!black,solid,boxplot prepared = {box extend=0.4, draw position = 11, lower whisker = 0.983601, lower quartile = 0.985197, median = 0.985953, upper quartile = 0.986788, upper whisker = 0.988917},]coordinates{}; 

%rand grid
\addplot+ [color = blue!50!black,solid, boxplot prepared = {box extend=0.4, draw position = 1, upper whisker = 0.558332, upper quartile = 0.556623, median = 0.555612, lower quartile = 0.553640, lower whisker = 0.551390},]coordinates{}; 
\addplot+ [color = blue!50!black,solid,boxplot prepared = {box extend=0.4, draw position = 2, upper whisker = 0.786549, upper quartile = 0.781019, median = 0.778677, lower quartile = 0.776975, lower whisker = 0.773675},]coordinates{}; 
\addplot+ [color = blue!50!black,solid,boxplot prepared = {box extend=0.4, draw position = 3, upper whisker = 0.761652, upper quartile = 0.747456, median = 0.744271, lower quartile = 0.740453, lower whisker = 0.723683},]coordinates{}; 
\addplot+ [color = blue!50!black,solid,boxplot prepared = {box extend=0.4, draw position = 4, upper whisker = 0.724669, upper quartile = 0.716603, median = 0.712547, lower quartile = 0.708569, lower whisker = 0.696026},]coordinates{}; 
\addplot+ [color = blue!50!black,solid,boxplot prepared = {box extend=0.4, draw position = 5, upper whisker = 0.695507, upper quartile = 0.677721, median = 0.670677, lower quartile = 0.666027, lower whisker = 0.652375},]coordinates{}; 
\addplot+ [color = blue!50!black,solid,boxplot prepared = {box extend=0.4, draw position = 6, upper whisker = 0.658490, upper quartile = 0.653778, median = 0.643803, lower quartile = 0.637674, lower whisker = 0.631237},]coordinates{}; 
\addplot+ [color = blue!50!black,solid,boxplot prepared = {box extend=0.4, draw position = 7, upper whisker = 0.646484, upper quartile = 0.636930, median = 0.628301, lower quartile = 0.623119, lower whisker = 0.603993},]coordinates{}; 
\addplot+ [color = blue!50!black,solid,boxplot prepared = {box extend=0.4, draw position = 8, upper whisker = 0.622749, upper quartile = 0.612773, median = 0.606101, lower quartile = 0.597162, lower whisker = 0.586179},]coordinates{}; 
\addplot+ [color = blue!50!black,solid,boxplot prepared = {box extend=0.4, draw position = 9, upper whisker = 0.610267, upper quartile = 0.605699, median = 0.592222, lower quartile = 0.587106, lower whisker = 0.580332},]coordinates{}; 
\addplot+ [color = blue!50!black,solid,boxplot prepared = {box extend=0.4, draw position = 10, upper whisker = 0.597261, upper quartile = 0.591855, median = 0.582898, lower quartile = 0.579444, lower whisker = 0.568106},]coordinates{}; 
\addplot+ [color = blue!50!black,solid,boxplot prepared = {box extend=0.4, draw position = 11, upper whisker = 0.587987, upper quartile = 0.581270, median = 0.576112, lower quartile = 0.573518, lower whisker = 0.556067},]coordinates{};

%fixed grid
\addplot+ [color = red!70!black,solid,boxplot prepared = {box extend=0.4, draw position = 1, lower whisker = 0.573521, lower quartile = 0.575512, median = 0.576715, upper quartile = 0.577702, upper whisker = 0.579620},]coordinates{}; 
\addplot+ [color = red!70!black,solid,boxplot prepared = {box extend=0.4, draw position = 2, lower whisker = 0.859387, lower quartile = 0.870538, median = 0.872640, upper quartile = 0.874923, upper whisker = 0.879567},]coordinates{}; 
\addplot+ [color = red!70!black,solid,boxplot prepared = {box extend=0.4, draw position = 3, lower whisker = 0.931284, lower quartile = 0.933401, median = 0.935217, upper quartile = 0.936746, upper whisker = 0.938053},]coordinates{}; 
\addplot+ [color = red!70!black,solid,boxplot prepared = {box extend=0.4, draw position = 4, lower whisker = 0.951478, lower quartile = 0.953341, median = 0.954085, upper quartile = 0.954621, upper whisker = 0.955221},]coordinates{}; 
\addplot+ [color = red!70!black,solid,boxplot prepared = {box extend=0.4, draw position = 5, lower whisker = 0.959751, lower quartile = 0.962712, median = 0.963661, upper quartile = 0.963842, upper whisker = 0.966639},]coordinates{}; 
\addplot+ [color = red!70!black,solid,boxplot prepared = {box extend=0.4, draw position = 6, lower whisker = 0.967118, lower quartile = 0.969189, median = 0.969529, upper quartile = 0.970626, upper whisker = 0.971680},]coordinates{}; 
\addplot+ [color = red!70!black,solid,boxplot prepared = {box extend=0.4, draw position = 7, lower whisker = 0.970371, lower quartile = 0.972813, median = 0.973415, upper quartile = 0.974614, upper whisker = 0.976680},]coordinates{}; 
\addplot+ [color = red!70!black,solid,boxplot prepared = {box extend=0.4, draw position = 8, lower whisker = 0.975417, lower quartile = 0.975686, median = 0.976479, upper quartile = 0.977643, upper whisker = 0.978513},]coordinates{}; 
\addplot+ [color = red!70!black,solid,boxplot prepared = {box extend=0.4, draw position = 9, lower whisker = 0.976438, lower quartile = 0.977982, median = 0.978677, upper quartile = 0.979150, upper whisker = 0.980169},]coordinates{}; 
\addplot+ [color = red!70!black,solid,boxplot prepared = {box extend=0.4, draw position = 10, lower whisker = 0.977814, lower quartile = 0.979731, median = 0.980077, upper quartile = 0.981225, upper whisker = 0.981783},]coordinates{}; 
\addplot+ [color = red!70!black,solid,boxplot prepared = {box extend=0.4, draw position = 11, lower whisker = 0.980028, lower quartile = 0.981022, median = 0.981347, upper quartile = 0.981774, upper whisker = 0.983726},]coordinates{}; 

\end{axis}
\end{tikzpicture}
}
\end{minipage}
\\

\ref{named}
\caption{Local segregation of pairwise stable networks in the add-only \deigame obtained by the best move dynamic for $n=1000$ over 50 runs starting from a random or segregated tree and grid.}
\label{plot:ICFNCG_AddOnly_appendix}
\end{figure}
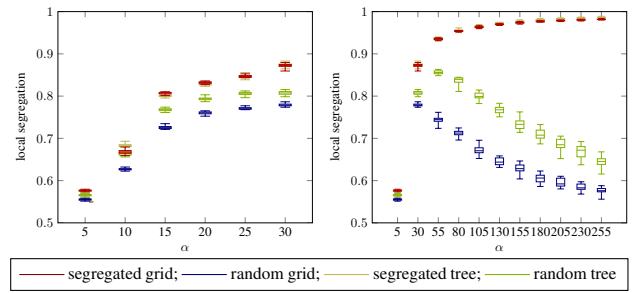

\subsection{Experiments Regarding the Global Segregation Measure}
This section illustrates the dependence of the global segregation measure on the parameter $\alpha$ and the initial state in the \deigame and \icfgame. The observations are similar as for the local segregation measure, showing the robustness of our results.

\subsubsection{Results for the \icfgame}
The results for the global segregation measure for $1.01$-approximate networks in the \icfgame and pairwise stable networks in the add-only \icfgame are presented in \Cref{plot:apx_ICFNCG_global} and \Cref{plot:ICF_AddOnly_global}.
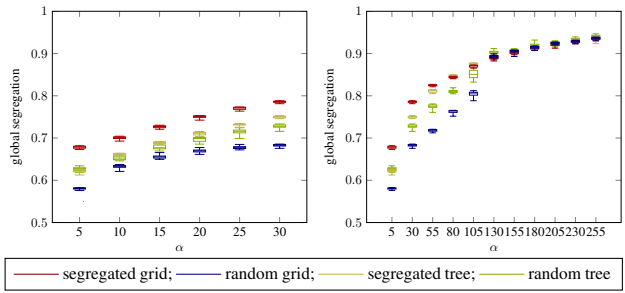
\begin{figure}[!ht]
\centering
\begin{minipage}{0.23\textwidth}
\resizebox {\textwidth} {!} {
\begin{tikzpicture}
\begin{axis}[
legend columns=-1,
legend entries={segregated grid;,random grid;,segregated tree;,random tree},
legend to name=named,
legend style={nodes={scale=0.65, transform shape}},% legend cell align=center},
xlabel= {$\alpha$},
ylabel= {global segregation},
boxplot/draw direction=y,
baseline,
xtick = {1,2,3, 4, 5, 6, 7, 8},
xticklabels = {5, 10, 15, 20, 25, 30, 35, 40},
ymin=0.5,
ymax=1
]
\addplot[red!50!black, domain=1.1:1.11]{0.55};
%\addlegendentry{fixed grid};
\addplot[blue!50!black, domain=1.1:1.11]{0.55};
%\addlegendentry{random grid};
\addplot[yellow!70!black, domain=1.1:1.11]{0.55};
%\addlegendentry{fixed tree};
\addplot[lime!70!black, domain=1.1:1.11]{0.55};
%\addlegendentry{random tree};

%rand tree
\addplot+ [color = lime!70!black,solid,boxplot prepared = {box extend=0.3, draw position = 1, lower whisker = 0.618929, lower quartile = 0.623155, median = 0.627830, upper quartile = 0.629695, upper whisker = 0.634229},]coordinates{}; 
\addplot+ [color = lime!70!black,solid,boxplot prepared = {box extend=0.300000, draw position = 2, lower whisker = 0.646098, lower quartile = 0.649019, median = 0.652645, upper quartile = 0.657220, upper whisker = 0.660137},]coordinates{}; 
\addplot+ [color = lime!70!black,solid,boxplot prepared = {box extend=0.300000, draw position = 3, lower whisker = 0.669281, lower quartile = 0.674502, median = 0.677092, upper quartile = 0.683413, upper whisker = 0.690321},]coordinates{}; 
\addplot+ [color = lime!70!black,solid,boxplot prepared = {box extend=0.300000, draw position = 4, lower whisker = 0.685567, lower quartile = 0.692263, median = 0.698912, upper quartile = 0.701361, upper whisker = 0.714845},]coordinates{}; 
\addplot+ [color = lime!70!black,solid,boxplot prepared = {box extend=0.300000, draw position = 5, lower whisker = 0.699136, lower quartile = 0.712274, median = 0.715928, upper quartile = 0.718830, upper whisker = 0.724455},]coordinates{}; 
\addplot+ [color = lime!70!black,solid,boxplot prepared = {box extend=0.3, draw position = 6, lower whisker = 0.715860, lower quartile = 0.725669, median = 0.727815, upper quartile = 0.731089, upper whisker = 0.733295},]coordinates{}; 

%segr tree
\addplot+ [color = yellow!70!black,solid,boxplot prepared = {box extend=0.300000, draw position = 1, lower whisker = 0.612479, lower quartile = 0.620354, median = 0.623600, upper quartile = 0.625415, upper whisker = 0.626462},]coordinates{}; 
\addplot+ [color = yellow!70!black,solid,boxplot prepared = {box extend=0.300000, draw position = 2, lower whisker = 0.647863, lower quartile = 0.658040, median = 0.662370, upper quartile = 0.662997, upper whisker = 0.664131},]coordinates{}; 
\addplot+ [color = yellow!70!black,solid,boxplot prepared = {box extend=0.300000, draw position = 3, lower whisker = 0.685244, lower quartile = 0.687451, median = 0.688737, upper quartile = 0.691096, upper whisker = 0.691676},]coordinates{}; 
\addplot+ [color = yellow!70!black,solid,boxplot prepared = {box extend=0.300000, draw position = 4, lower whisker = 0.703676, lower quartile = 0.707377, median = 0.711440, upper quartile = 0.712958, upper whisker = 0.714674},]coordinates{}; 
\addplot+ [color = yellow!70!black,solid,boxplot prepared = {box extend=0.300000, draw position = 5, lower whisker = 0.725547, lower quartile = 0.730981, median = 0.732274, upper quartile = 0.732673, upper whisker = 0.734626},]coordinates{}; 
\addplot+ [color = yellow!70!black,solid,boxplot prepared = {box extend=0.300000, draw position = 6, lower whisker = 0.745875, lower quartile = 0.747814, median = 0.750736, upper quartile = 0.751221, upper whisker = 0.753105},]coordinates{};

%fixed grid
\addplot+ [color = red!70!black,solid,boxplot prepared = {box extend=0.300000, draw position = 1, lower whisker = 0.672942, lower quartile = 0.676011, median = 0.678449, upper quartile = 0.680490, upper whisker = 0.681873},]coordinates{}; 
\addplot+ [color = red!70!black,solid,boxplot prepared = {box extend=0.300000, draw position = 2, lower whisker = 0.692710, lower quartile = 0.698926, median = 0.701025, upper quartile = 0.702816, upper whisker = 0.704298},]coordinates{}; 
\addplot+ [color = red!70!black,solid,boxplot prepared = {box extend=0.300000, draw position = 3, lower whisker = 0.720882, lower quartile = 0.724619, median = 0.727212, upper quartile = 0.728980, upper whisker = 0.730315},]coordinates{}; 
\addplot+ [color = red!70!black,solid,boxplot prepared = {box extend=0.300000, draw position = 4, lower whisker = 0.742350, lower quartile = 0.748929, median = 0.751017, upper quartile = 0.752624, upper whisker = 0.753410},]coordinates{}; 
\addplot+ [color = red!70!black,solid,boxplot prepared = {box extend=0.300000, draw position = 5, lower whisker = 0.763720, lower quartile = 0.767944, median = 0.771108, upper quartile = 0.772991, upper whisker = 0.773432},]coordinates{}; 
\addplot+ [color = red!70!black,solid,boxplot prepared = {box extend=0.300000, draw position = 6, lower whisker = 0.781304, lower quartile = 0.784482, median = 0.786057, upper quartile = 0.787100, upper whisker = 0.788396},]coordinates{};

%rand grid
\addplot+ [color = blue!50!black,solid,boxplot prepared = {box extend=0.3, draw position = 1, lower whisker = 0.575970, lower quartile = 0.579377, median = 0.580970, upper quartile = 0.582007, upper whisker = 0.582597},]coordinates{}; 
\addplot+ [color = blue!50!black,solid,boxplot prepared = {box extend=0.300000, draw position = 2, lower whisker = 0.620996, lower quartile = 0.630794, median = 0.633166, upper quartile = 0.634829, upper whisker = 0.637280},]coordinates{}; 
\addplot+ [color = blue!50!black,solid,boxplot prepared = {box extend=0.300000, draw position = 3, lower whisker = 0.649546, lower quartile = 0.652589, median = 0.655450, upper quartile = 0.657389, upper whisker = 0.665612},]coordinates{}; 
\addplot+ [color = blue!50!black,solid,boxplot prepared = {box extend=0.300000, draw position = 4, lower whisker = 0.661351, lower quartile = 0.667158, median = 0.669622, upper quartile = 0.671579, upper whisker = 0.677560},]coordinates{}; 
\addplot+ [color = blue!50!black,solid,boxplot prepared = {box extend=0.300000, draw position = 5, lower whisker = 0.671956, lower quartile = 0.675246, median = 0.677191, upper quartile = 0.679612, upper whisker = 0.684882},]coordinates{}; 
\addplot+ [color = blue!50!black,solid,boxplot prepared = {box extend=0.3, draw position = 6, lower whisker = 0.675625, lower quartile = 0.681141, median = 0.682848, upper quartile = 0.684797, upper whisker = 0.685680},]coordinates{}; 

\end{axis}
\end{tikzpicture}
}
\end{minipage}
\begin{minipage}{0.23\textwidth}
\resizebox {\textwidth} {!} {
\begin{tikzpicture}
\begin{axis}[
xlabel= {$\alpha$},
ylabel= {global segregation},
boxplot/draw direction=y,
baseline,
xtick = {1, 2, 3, ..., 11},
xticklabels  = {5, 30, 55, 80, 105, 130, 155, 180, 205, 230, 255},
ymin=0.5,
ymax=1
]

%rand tree
\addplot+ [color = lime!70!black,solid,boxplot prepared = {box extend=0.400000, draw position = 1, lower whisker = 0.618929, lower quartile = 0.623155, median = 0.627830, upper quartile = 0.629695, upper whisker = 0.634229},]coordinates{}; 
\addplot+ [color = lime!70!black,solid,boxplot prepared = {box extend=0.400000, draw position = 2, lower whisker = 0.715860, lower quartile = 0.725669, median = 0.727815, upper quartile = 0.731089, upper whisker = 0.733295},]coordinates{}; 
\addplot+ [color = lime!70!black,solid,boxplot prepared = {box extend=0.400000, draw position = 3, lower whisker = 0.760687, lower quartile = 0.772744, median = 0.775262, upper quartile = 0.778594, upper whisker = 0.780765},]coordinates{}; 
\addplot+ [color = lime!70!black,solid,boxplot prepared = {box extend=0.400000, draw position = 4, lower whisker = 0.804795, lower quartile = 0.807985, median = 0.810271, upper quartile = 0.812639, upper whisker = 0.819005},]coordinates{}; 
\addplot+ [color = lime!70!black,solid,boxplot prepared = {box extend=0.400000, draw position = 5, lower whisker = 0.832432, lower quartile = 0.843004, median = 0.850719, upper quartile = 0.860242, upper whisker = 0.866781},]coordinates{}; 
\addplot+ [color = lime!70!black,solid,boxplot prepared = {box extend=0.400000, draw position = 6, lower whisker = 0.888378, lower quartile = 0.900210, median = 0.903063, upper quartile = 0.905212, upper whisker = 0.911964},]coordinates{}; 
\addplot+ [color = lime!70!black,solid,boxplot prepared = {box extend=0.400000, draw position = 7, lower whisker = 0.900694, lower quartile = 0.903075, median = 0.907448, upper quartile = 0.909446, upper whisker = 0.911598},]coordinates{}; 
\addplot+ [color = lime!70!black,solid,boxplot prepared = {box extend=0.400000, draw position = 8, lower whisker = 0.910292, lower quartile = 0.917283, median = 0.919758, upper quartile = 0.922127, upper whisker = 0.931925},]coordinates{}; 
\addplot+ [color = lime!70!black,solid,boxplot prepared = {box extend=0.400000, draw position = 9, lower whisker = 0.916048, lower quartile = 0.923636, median = 0.925598, upper quartile = 0.927527, upper whisker = 0.930233},]coordinates{}; 
\addplot+ [color = lime!70!black,solid,boxplot prepared = {box extend=0.400000, draw position = 10, lower whisker = 0.927795, lower quartile = 0.930515, median = 0.932499, upper quartile = 0.936042, upper whisker = 0.939444},]coordinates{}; 
\addplot+ [color = lime!70!black,solid,boxplot prepared = {box extend=0.400000, draw position = 11, lower whisker = 0.930124, lower quartile = 0.935451, median = 0.939348, upper quartile = 0.943102, upper whisker = 0.946578},]coordinates{}; 

%segr tree
\addplot+ [color = yellow!70!black,solid,boxplot prepared = {box extend=0.400000, draw position = 1, lower whisker = 0.612479, lower quartile = 0.620354, median = 0.623600, upper quartile = 0.625415, upper whisker = 0.626462},]coordinates{}; 
\addplot+ [color = yellow!70!black,solid,boxplot prepared = {box extend=0.400000, draw position = 2, lower whisker = 0.745875, lower quartile = 0.747814, median = 0.750736, upper quartile = 0.751221, upper whisker = 0.753105},]coordinates{}; 
\addplot+ [color = yellow!70!black,solid,boxplot prepared = {box extend=0.400000, draw position = 3, lower whisker = 0.805130, lower quartile = 0.808382, median = 0.812320, upper quartile = 0.813887, upper whisker = 0.814978},]coordinates{}; 
\addplot+ [color = yellow!70!black,solid,boxplot prepared = {box extend=0.400000, draw position = 4, lower whisker = 0.839957, lower quartile = 0.846873, median = 0.848337, upper quartile = 0.849223, upper whisker = 0.849765},]coordinates{}; 
\addplot+ [color = yellow!70!black,solid,boxplot prepared = {box extend=0.400000, draw position = 5, lower whisker = 0.872236, lower quartile = 0.874235, median = 0.875391, upper quartile = 0.877220, upper whisker = 0.877882},]coordinates{}; 
\addplot+ [color = yellow!70!black,solid,boxplot prepared = {box extend=0.400000, draw position = 6, lower whisker = 0.882093, lower quartile = 0.887814, median = 0.890936, upper quartile = 0.894172, upper whisker = 0.895877},]coordinates{}; 
\addplot+ [color = yellow!70!black,solid,boxplot prepared = {box extend=0.400000, draw position = 7, lower whisker = 0.897558, lower quartile = 0.901108, median = 0.903828, upper quartile = 0.905611, upper whisker = 0.906386},]coordinates{}; 
\addplot+ [color = yellow!70!black,solid,boxplot prepared = {box extend=0.400000, draw position = 8, lower whisker = 0.909292, lower quartile = 0.913115, median = 0.914551, upper quartile = 0.917214, upper whisker = 0.918382},]coordinates{}; 
\addplot+ [color = yellow!70!black,solid,boxplot prepared = {box extend=0.400000, draw position = 9, lower whisker = 0.913462, lower quartile = 0.919304, median = 0.920338, upper quartile = 0.923104, upper whisker = 0.924288},]coordinates{}; 
\addplot+ [color = yellow!70!black,solid,boxplot prepared = {box extend=0.400000, draw position = 10, lower whisker = 0.926923, lower quartile = 0.929136, median = 0.930512, upper quartile = 0.931265, upper whisker = 0.931748},]coordinates{}; 
\addplot+ [color = yellow!70!black,solid,boxplot prepared = {box extend=0.400000, draw position = 11, lower whisker = 0.923567, lower quartile = 0.933620, median = 0.935762, upper quartile = 0.937801, upper whisker = 0.938312},]coordinates{}; 

%fixed grid
\addplot+ [color = red!70!black,solid,boxplot prepared = {box extend=0.400000, draw position = 1, lower whisker = 0.672942, lower quartile = 0.676011, median = 0.678449, upper quartile = 0.680490, upper whisker = 0.681873},]coordinates{}; 
\addplot+ [color = red!70!black,solid,boxplot prepared = {box extend=0.400000, draw position = 2, lower whisker = 0.781304, lower quartile = 0.784482, median = 0.786057, upper quartile = 0.787100, upper whisker = 0.788396},]coordinates{}; 
\addplot+ [color = red!70!black,solid,boxplot prepared = {box extend=0.400000, draw position = 3, lower whisker = 0.821429, lower quartile = 0.823840, median = 0.824950, upper quartile = 0.826120, upper whisker = 0.826979},]coordinates{}; 
\addplot+ [color = red!70!black,solid,boxplot prepared = {box extend=0.400000, draw position = 4, lower whisker = 0.840796, lower quartile = 0.842715, median = 0.844414, upper quartile = 0.845490, upper whisker = 0.846120},]coordinates{}; 
\addplot+ [color = red!70!black,solid,boxplot prepared = {box extend=0.400000, draw position = 5, lower whisker = 0.865273, lower quartile = 0.868058, median = 0.871349, upper quartile = 0.871787, upper whisker = 0.872941},]coordinates{}; 
\addplot+ [color = red!70!black,solid,boxplot prepared = {box extend=0.400000, draw position = 6, lower whisker = 0.882663, lower quartile = 0.887560, median = 0.889064, upper quartile = 0.892256, upper whisker = 0.896095},]coordinates{}; 
\addplot+ [color = red!70!black,solid,boxplot prepared = {box extend=0.400000, draw position = 7, lower whisker = 0.897108, lower quartile = 0.899793, median = 0.902625, upper quartile = 0.903433, upper whisker = 0.903915},]coordinates{}; 
\addplot+ [color = red!70!black,solid,boxplot prepared = {box extend=0.400000, draw position = 8, lower whisker = 0.907089, lower quartile = 0.910662, median = 0.913250, upper quartile = 0.914894, upper whisker = 0.915888},]coordinates{}; 
\addplot+ [color = red!70!black,solid,boxplot prepared = {box extend=0.400000, draw position = 9, lower whisker = 0.912600, lower quartile = 0.919229, median = 0.921096, upper quartile = 0.922249, upper whisker = 0.923552},]coordinates{}; 
\addplot+ [color = red!70!black,solid,boxplot prepared = {box extend=0.400000, draw position = 10, lower whisker = 0.923722, lower quartile = 0.927025, median = 0.928799, upper quartile = 0.929939, upper whisker = 0.931973},]coordinates{}; 
\addplot+ [color = red!70!black,solid,boxplot prepared = {box extend=0.400000, draw position = 11, lower whisker = 0.929852, lower quartile = 0.933410, median = 0.934697, upper quartile = 0.936489, upper whisker = 0.937549},]coordinates{}; 

%rand grid
\addplot+ [color = blue!50!black,solid,boxplot prepared = {box extend=0.400000, draw position = 1, lower whisker = 0.575970, lower quartile = 0.579377, median = 0.580970, upper quartile = 0.582007, upper whisker = 0.582597},]coordinates{}; 
\addplot+ [color = blue!50!black,solid,boxplot prepared = {box extend=0.400000, draw position = 2, lower whisker = 0.675625, lower quartile = 0.681141, median = 0.682848, upper quartile = 0.684797, upper whisker = 0.685680},]coordinates{}; 
\addplot+ [color = blue!50!black,solid,boxplot prepared = {box extend=0.400000, draw position = 3, lower whisker = 0.712288, lower quartile = 0.716263, median = 0.718814, upper quartile = 0.719896, upper whisker = 0.720797},]coordinates{}; 
\addplot+ [color = blue!50!black,solid,boxplot prepared = {box extend=0.400000, draw position = 4, lower whisker = 0.751730, lower quartile = 0.760741, median = 0.764120, upper quartile = 0.765041, upper whisker = 0.765606},]coordinates{}; 
\addplot+ [color = blue!50!black,solid,boxplot prepared = {box extend=0.400000, draw position = 5, lower whisker = 0.788302, lower quartile = 0.801479, median = 0.806613, upper quartile = 0.808405, upper whisker = 0.812225},]coordinates{}; 
\addplot+ [color = blue!50!black,solid,boxplot prepared = {box extend=0.400000, draw position = 6, lower whisker = 0.886864, lower quartile = 0.890807, median = 0.892674, upper quartile = 0.895277, upper whisker = 0.899286},]coordinates{}; 
\addplot+ [color = blue!50!black,solid,boxplot prepared = {box extend=0.400000, draw position = 7, lower whisker = 0.892928, lower quartile = 0.903969, median = 0.905549, upper quartile = 0.907083, upper whisker = 0.908397},]coordinates{}; 
\addplot+ [color = blue!50!black,solid,boxplot prepared = {box extend=0.400000, draw position = 8, lower whisker = 0.907089, lower quartile = 0.912794, median = 0.915557, upper quartile = 0.916729, upper whisker = 0.918107},]coordinates{}; 
\addplot+ [color = blue!50!black,solid,boxplot prepared = {box extend=0.400000, draw position = 9, lower whisker = 0.920952, lower quartile = 0.923696, median = 0.924789, upper quartile = 0.926446, upper whisker = 0.927438},]coordinates{}; 
\addplot+ [color = blue!50!black,solid,boxplot prepared = {box extend=0.400000, draw position = 10, lower whisker = 0.922615, lower quartile = 0.926818, median = 0.929079, upper quartile = 0.930162, upper whisker = 0.931061},]coordinates{}; 
\addplot+ [color = blue!50!black,solid,boxplot prepared = {box extend=0.400000, draw position = 11, lower whisker = 0.931411, lower quartile = 0.935251, median = 0.937298, upper quartile = 0.938241, upper whisker = 0.939151},]coordinates{}; 

\end{axis}
\end{tikzpicture}
}
\end{minipage}
\ref{named}
\caption{Global segregation of  $1.01$-approximate networks in the \icfgame obtained by the best move dynamic for $n=1000$ over 50 runs starting from a random or segregated tree and grid.}
\label{plot:apx_ICFNCG_global}
\end{figure}

\begin{figure}[!ht]
\centering
\begin{minipage}{0.23\textwidth}
\resizebox {\textwidth} {!} {
\begin{tikzpicture}
\begin{axis}[
legend columns=-1,
legend entries={segregated grid;,random grid;,segregated tree;,random tree},
legend to name=named,
legend style={nodes={scale=0.65, transform shape}},% legend cell align=center},
xlabel= {$\alpha$},
ylabel= {global segregation},
boxplot/draw direction=y,
baseline,
xtick = {1,2,3, 4, 5, 6, 7, 8},
xticklabels = {5, 10, 15, 20, 25, 30, 35, 40},
ymin=0.5,
ymax=1
]
\addplot[red!50!black, domain=1.1:1.11]{0.55};
%\addlegendentry{fixed grid};
\addplot[blue!50!black, domain=1.1:1.11]{0.55};
%\addlegendentry{random grid};
\addplot[yellow!70!black, domain=1.1:1.11]{0.55};
%\addlegendentry{fixed tree};
\addplot[lime!70!black, domain=1.1:1.11]{0.55};
%\addlegendentry{random tree};

%tree
\addplot+ [color = lime!70!black,solid,boxplot prepared = {box extend=0.300000, draw position = 1, lower whisker = 0.648007, lower quartile = 0.648604, median = 0.649320, upper quartile = 0.649623, upper whisker = 0.649807},]coordinates{}; 
\addplot+ [color = lime!70!black,solid,boxplot prepared = {box extend=0.300000, draw position = 2, lower whisker = 0.632181, lower quartile = 0.633513, median = 0.634361, upper quartile = 0.635223, upper whisker = 0.635909},]coordinates{}; 
\addplot+ [color = lime!70!black,solid,boxplot prepared = {box extend=0.300000, draw position = 3, lower whisker = 0.679879, lower quartile = 0.681863, median = 0.683223, upper quartile = 0.683817, upper whisker = 0.685149},]coordinates{}; 
\addplot+ [color = lime!70!black,solid,boxplot prepared = {box extend=0.300000, draw position = 4, lower whisker = 0.714886, lower quartile = 0.715664, median = 0.716619, upper quartile = 0.717405, upper whisker = 0.718170},]coordinates{}; 
\addplot+ [color = lime!70!black,solid,boxplot prepared = {box extend=0.300000, draw position = 5, lower whisker = 0.728128, lower quartile = 0.729569, median = 0.730808, upper quartile = 0.732004, upper whisker = 0.733315},]coordinates{}; 
\addplot+ [color = lime!70!black,solid,boxplot prepared = {box extend=0.300000, draw position = 6, lower whisker = 0.737440, lower quartile = 0.739493, median = 0.740480, upper quartile = 0.741059, upper whisker = 0.741290},]coordinates{}; 

%segr tree
\addplot+ [color = yellow!70!black,solid,boxplot prepared = {box extend=0.300000, draw position = 1, lower whisker = 0.643976, lower quartile = 0.645205, median = 0.645717, upper quartile = 0.646838, upper whisker = 0.647937},]coordinates{}; 
\addplot+ [color = yellow!70!black,solid,boxplot prepared = {box extend=0.300000, draw position = 2, lower whisker = 0.630822, lower quartile = 0.632524, median = 0.633139, upper quartile = 0.633528, upper whisker = 0.633663},]coordinates{}; 
\addplot+ [color = yellow!70!black,solid,boxplot prepared = {box extend=0.300000, draw position = 3, lower whisker = 0.681770, lower quartile = 0.684107, median = 0.684907, upper quartile = 0.685582, upper whisker = 0.685988},]coordinates{}; 
\addplot+ [color = yellow!70!black,solid,boxplot prepared = {box extend=0.300000, draw position = 4, lower whisker = 0.716240, lower quartile = 0.717477, median = 0.718607, upper quartile = 0.719288, upper whisker = 0.719723},]coordinates{}; 
\addplot+ [color = yellow!70!black,solid,boxplot prepared = {box extend=0.300000, draw position = 5, lower whisker = 0.728806, lower quartile = 0.730788, median = 0.731329, upper quartile = 0.731726, upper whisker = 0.732125},]coordinates{}; 
\addplot+ [color = yellow!70!black,solid,boxplot prepared = {box extend=0.300000, draw position = 6, lower whisker = 0.742579, lower quartile = 0.744773, median = 0.747167, upper quartile = 0.747985, upper whisker = 0.750081},]coordinates{}; 

%fixed grid
\addplot+ [color = red!70!black,solid,boxplot prepared = {box extend=0.300000, draw position = 1, lower whisker = 0.642875, lower quartile = 0.645809, median = 0.646281, upper quartile = 0.647352, upper whisker = 0.649161},]coordinates{}; 
\addplot+ [color = red!70!black,solid,boxplot prepared = {box extend=0.300000, draw position = 2, lower whisker = 0.630746, lower quartile = 0.631111, median = 0.632170, upper quartile = 0.632837, upper whisker = 0.633751},]coordinates{}; 
\addplot+ [color = red!70!black,solid,boxplot prepared = {box extend=0.300000, draw position = 3, lower whisker = 0.690603, lower quartile = 0.692035, median = 0.693586, upper quartile = 0.694292, upper whisker = 0.697453},]coordinates{}; 
\addplot+ [color = red!70!black,solid,boxplot prepared = {box extend=0.300000, draw position = 4, lower whisker = 0.726248, lower quartile = 0.728068, median = 0.729734, upper quartile = 0.730238, upper whisker = 0.731425},]coordinates{}; 
\addplot+ [color = red!70!black,solid,boxplot prepared = {box extend=0.300000, draw position = 5, lower whisker = 0.738079, lower quartile = 0.739128, median = 0.739926, upper quartile = 0.741113, upper whisker = 0.743522},]coordinates{}; 
\addplot+ [color = red!70!black,solid,boxplot prepared = {box extend=0.300000, draw position = 6, lower whisker = 0.749704, lower quartile = 0.751004, median = 0.753062, upper quartile = 0.753901, upper whisker = 0.756270},]coordinates{}; 

%rand grid
\addplot+ [color = blue!50!black,solid,boxplot prepared = {box extend=0.300000, draw position = 1, lower whisker = 0.658300, lower quartile = 0.660030, median = 0.660389, upper quartile = 0.660936, upper whisker = 0.661601},]coordinates{}; 
\addplot+ [color = blue!50!black,solid,boxplot prepared = {box extend=0.300000, draw position = 2, lower whisker = 0.629599, lower quartile = 0.631421, median = 0.632037, upper quartile = 0.632408, upper whisker = 0.633867},]coordinates{}; 
\addplot+ [color = blue!50!black,solid,boxplot prepared = {box extend=0.300000, draw position = 3, lower whisker = 0.675195, lower quartile = 0.677056, median = 0.679030, upper quartile = 0.679880, upper whisker = 0.681814},]coordinates{}; 
\addplot+ [color = blue!50!black,solid,boxplot prepared = {box extend=0.300000, draw position = 4, lower whisker = 0.710738, lower quartile = 0.714578, median = 0.715147, upper quartile = 0.716649, upper whisker = 0.717636},]coordinates{}; 
\addplot+ [color = blue!50!black,solid,boxplot prepared = {box extend=0.300000, draw position = 5, lower whisker = 0.721108, lower quartile = 0.724819, median = 0.725819, upper quartile = 0.727084, upper whisker = 0.729588},]coordinates{}; 
\addplot+ [color = blue!50!black,solid,boxplot prepared = {box extend=0.300000, draw position = 6, lower whisker = 0.730211, lower quartile = 0.732211, median = 0.733372, upper quartile = 0.734354, upper whisker = 0.734898},]coordinates{}; 

\end{axis}
\end{tikzpicture}
}
\end{minipage}
\begin{minipage}{0.23\textwidth}
\resizebox {\textwidth} {!} {
\begin{tikzpicture}
\begin{axis}[
xlabel= {$\alpha$},
ylabel= {global segregation},
boxplot/draw direction=y,
baseline,
xtick = {1, 2, 3, ..., 11},
xticklabels  = {5, 30, 55, 80, 105, 130, 155, 180, 205, 230, 255},
ymin=0.5,
ymax=1
]

%tree
\addplot+ [color = lime!70!black,solid,boxplot prepared = {box extend=0.400000, draw position = 1, lower whisker = 0.648007, lower quartile = 0.648604, median = 0.649320, upper quartile = 0.649623, upper whisker = 0.649807},]coordinates{}; 
\addplot+ [color = lime!70!black,solid,boxplot prepared = {box extend=0.400000, draw position = 2, lower whisker = 0.737440, lower quartile = 0.739493, median = 0.740480, upper quartile = 0.741059, upper whisker = 0.741290},]coordinates{}; 
\addplot+ [color = lime!70!black,solid,boxplot prepared = {box extend=0.400000, draw position = 3, lower whisker = 0.778626, lower quartile = 0.782714, median = 0.783492, upper quartile = 0.784204, upper whisker = 0.784753},]coordinates{}; 
\addplot+ [color = lime!70!black,solid,boxplot prepared = {box extend=0.400000, draw position = 4, lower whisker = 0.783639, lower quartile = 0.787297, median = 0.789242, upper quartile = 0.790266, upper whisker = 0.791085},]coordinates{}; 
\addplot+ [color = lime!70!black,solid,boxplot prepared = {box extend=0.400000, draw position = 5, lower whisker = 0.765206, lower quartile = 0.773098, median = 0.777313, upper quartile = 0.777779, upper whisker = 0.778775},]coordinates{}; 
\addplot+ [color = lime!70!black,solid,boxplot prepared = {box extend=0.400000, draw position = 6, lower whisker = 0.745188, lower quartile = 0.753110, median = 0.755892, upper quartile = 0.757775, upper whisker = 0.758525},]coordinates{}; 
\addplot+ [color = lime!70!black,solid,boxplot prepared = {box extend=0.400000, draw position = 7, lower whisker = 0.717182, lower quartile = 0.724742, median = 0.731864, upper quartile = 0.734665, upper whisker = 0.737000},]coordinates{}; 
\addplot+ [color = lime!70!black,solid,boxplot prepared = {box extend=0.400000, draw position = 8, lower whisker = 0.695897, lower quartile = 0.703981, median = 0.710274, upper quartile = 0.712241, upper whisker = 0.715324},]coordinates{}; 
\addplot+ [color = lime!70!black,solid,boxplot prepared = {box extend=0.400000, draw position = 9, lower whisker = 0.680276, lower quartile = 0.689853, median = 0.693562, upper quartile = 0.698360, upper whisker = 0.700844},]coordinates{}; 
\addplot+ [color = lime!70!black,solid,boxplot prepared = {box extend=0.400000, draw position = 10, lower whisker = 0.661774, lower quartile = 0.670861, median = 0.674266, upper quartile = 0.677608, upper whisker = 0.679503},]coordinates{}; 
\addplot+ [color = lime!70!black,solid,boxplot prepared = {box extend=0.400000, draw position = 11, lower whisker = 0.638889, lower quartile = 0.652936, median = 0.657060, upper quartile = 0.662937, upper whisker = 0.663799},]coordinates{}; 

%segr tree
\addplot+ [color = yellow!70!black,solid,boxplot prepared = {box extend=0.400000, draw position = 1, lower whisker = 0.643976, lower quartile = 0.645205, median = 0.645717, upper quartile = 0.646838, upper whisker = 0.647937},]coordinates{}; 
\addplot+ [color = yellow!70!black,solid,boxplot prepared = {box extend=0.400000, draw position = 2, lower whisker = 0.742579, lower quartile = 0.744773, median = 0.747167, upper quartile = 0.747985, upper whisker = 0.750081},]coordinates{}; 
\addplot+ [color = yellow!70!black,solid,boxplot prepared = {box extend=0.400000, draw position = 3, lower whisker = 0.805945, lower quartile = 0.807923, median = 0.809153, upper quartile = 0.810766, upper whisker = 0.814514},]coordinates{}; 
\addplot+ [color = yellow!70!black,solid,boxplot prepared = {box extend=0.400000, draw position = 4, lower whisker = 0.839542, lower quartile = 0.842693, median = 0.844190, upper quartile = 0.846097, upper whisker = 0.847535},]coordinates{}; 
\addplot+ [color = yellow!70!black,solid,boxplot prepared = {box extend=0.400000, draw position = 5, lower whisker = 0.864472, lower quartile = 0.866667, median = 0.869385, upper quartile = 0.872506, upper whisker = 0.874357},]coordinates{}; 
\addplot+ [color = yellow!70!black,solid,boxplot prepared = {box extend=0.400000, draw position = 6, lower whisker = 0.881146, lower quartile = 0.883085, median = 0.885910, upper quartile = 0.889085, upper whisker = 0.892354},]coordinates{}; 
\addplot+ [color = yellow!70!black,solid,boxplot prepared = {box extend=0.400000, draw position = 7, lower whisker = 0.893617, lower quartile = 0.898765, median = 0.900872, upper quartile = 0.902588, upper whisker = 0.906937},]coordinates{}; 
\addplot+ [color = yellow!70!black,solid,boxplot prepared = {box extend=0.400000, draw position = 8, lower whisker = 0.905193, lower quartile = 0.908902, median = 0.911418, upper quartile = 0.913350, upper whisker = 0.916012},]coordinates{}; 
\addplot+ [color = yellow!70!black,solid,boxplot prepared = {box extend=0.400000, draw position = 9, lower whisker = 0.914340, lower quartile = 0.918233, median = 0.919370, upper quartile = 0.922330, upper whisker = 0.924852},]coordinates{}; 
\addplot+ [color = yellow!70!black,solid,boxplot prepared = {box extend=0.400000, draw position = 10, lower whisker = 0.924252, lower quartile = 0.926420, median = 0.928838, upper quartile = 0.930113, upper whisker = 0.931897},]coordinates{}; 
\addplot+ [color = yellow!70!black,solid,boxplot prepared = {box extend=0.400000, draw position = 11, lower whisker = 0.928068, lower quartile = 0.931467, median = 0.934067, upper quartile = 0.935679, upper whisker = 0.937820},]coordinates{}; 

%rand grid
\addplot+ [color = blue!50!black,solid,boxplot prepared = {box extend=0.400000, draw position = 1, lower whisker = 0.658300, lower quartile = 0.660030, median = 0.660389, upper quartile = 0.660936, upper whisker = 0.661601},]coordinates{}; 
\addplot+ [color = blue!50!black,solid,boxplot prepared = {box extend=0.400000, draw position = 2, lower whisker = 0.730211, lower quartile = 0.732211, median = 0.733372, upper quartile = 0.734354, upper whisker = 0.734898},]coordinates{}; 
\addplot+ [color = blue!50!black,solid,boxplot prepared = {box extend=0.400000, draw position = 3, lower whisker = 0.745502, lower quartile = 0.746995, median = 0.749004, upper quartile = 0.749945, upper whisker = 0.750923},]coordinates{}; 
\addplot+ [color = blue!50!black,solid,boxplot prepared = {box extend=0.400000, draw position = 4, lower whisker = 0.724851, lower quartile = 0.729116, median = 0.730055, upper quartile = 0.731054, upper whisker = 0.733386},]coordinates{}; 
\addplot+ [color = blue!50!black,solid,boxplot prepared = {box extend=0.400000, draw position = 5, lower whisker = 0.695086, lower quartile = 0.699329, median = 0.701764, upper quartile = 0.702635, upper whisker = 0.703967},]coordinates{}; 
\addplot+ [color = blue!50!black,solid,boxplot prepared = {box extend=0.400000, draw position = 6, lower whisker = 0.672099, lower quartile = 0.673938, median = 0.675082, upper quartile = 0.676207, upper whisker = 0.679817},]coordinates{}; 
\addplot+ [color = blue!50!black,solid,boxplot prepared = {box extend=0.400000, draw position = 7, lower whisker = 0.649025, lower quartile = 0.651817, median = 0.655501, upper quartile = 0.656119, upper whisker = 0.657968},]coordinates{}; 
\addplot+ [color = blue!50!black,solid,boxplot prepared = {box extend=0.400000, draw position = 8, lower whisker = 0.619187, lower quartile = 0.636746, median = 0.638899, upper quartile = 0.640006, upper whisker = 0.642056},]coordinates{}; 
\addplot+ [color = blue!50!black,solid,boxplot prepared = {box extend=0.400000, draw position = 9, lower whisker = 0.615180, lower quartile = 0.621558, median = 0.624240, upper quartile = 0.626280, upper whisker = 0.627676},]coordinates{}; 
\addplot+ [color = blue!50!black,solid,boxplot prepared = {box extend=0.400000, draw position = 10, lower whisker = 0.606474, lower quartile = 0.613042, median = 0.615595, upper quartile = 0.617444, upper whisker = 0.620499},]coordinates{}; 
\addplot+ [color = blue!50!black,solid,boxplot prepared = {box extend=0.400000, draw position = 11, lower whisker = 0.591147, lower quartile = 0.599237, median = 0.602815, upper quartile = 0.604557, upper whisker = 0.605939},]coordinates{}; 

%segr grid
\addplot+ [color = red!70!black,solid,boxplot prepared = {box extend=0.400000, draw position = 1, lower whisker = 0.642875, lower quartile = 0.645809, median = 0.646281, upper quartile = 0.647352, upper whisker = 0.649161},]coordinates{}; 
\addplot+ [color = red!70!black,solid,boxplot prepared = {box extend=0.400000, draw position = 2, lower whisker = 0.749704, lower quartile = 0.751004, median = 0.753062, upper quartile = 0.753901, upper whisker = 0.756270},]coordinates{}; 
\addplot+ [color = red!70!black,solid,boxplot prepared = {box extend=0.400000, draw position = 3, lower whisker = 0.820424, lower quartile = 0.822718, median = 0.824999, upper quartile = 0.826011, upper whisker = 0.829050},]coordinates{}; 
\addplot+ [color = red!70!black,solid,boxplot prepared = {box extend=0.400000, draw position = 4, lower whisker = 0.855568, lower quartile = 0.857484, median = 0.859322, upper quartile = 0.860402, upper whisker = 0.861912},]coordinates{}; 
\addplot+ [color = red!70!black,solid,boxplot prepared = {box extend=0.400000, draw position = 5, lower whisker = 0.882391, lower quartile = 0.885412, median = 0.886294, upper quartile = 0.887031, upper whisker = 0.890188},]coordinates{}; 
\addplot+ [color = red!70!black,solid,boxplot prepared = {box extend=0.400000, draw position = 6, lower whisker = 0.900917, lower quartile = 0.902327, median = 0.903112, upper quartile = 0.904913, upper whisker = 0.906582},]coordinates{}; 
\addplot+ [color = red!70!black,solid,boxplot prepared = {box extend=0.400000, draw position = 7, lower whisker = 0.911567, lower quartile = 0.914737, median = 0.915226, upper quartile = 0.917392, upper whisker = 0.920304},]coordinates{}; 
\addplot+ [color = red!70!black,solid,boxplot prepared = {box extend=0.400000, draw position = 8, lower whisker = 0.922836, lower quartile = 0.924759, median = 0.926064, upper quartile = 0.927169, upper whisker = 0.928882},]coordinates{}; 
\addplot+ [color = red!70!black,solid,boxplot prepared = {box extend=0.400000, draw position = 9, lower whisker = 0.930488, lower quartile = 0.932708, median = 0.934355, upper quartile = 0.935062, upper whisker = 0.937192},]coordinates{}; 
\addplot+ [color = red!70!black,solid,boxplot prepared = {box extend=0.400000, draw position = 10, lower whisker = 0.937316, lower quartile = 0.939170, median = 0.940223, upper quartile = 0.941636, upper whisker = 0.943404},]coordinates{}; 
\addplot+ [color = red!70!black,solid,boxplot prepared = {box extend=0.400000, draw position = 11, lower whisker = 0.942820, lower quartile = 0.944290, median = 0.945321, upper quartile = 0.946525, upper whisker = 0.948148},]coordinates{}; 

\end{axis}
\end{tikzpicture}
}
\end{minipage}
\ref{named}
\caption{Global segregation of pairwise stable networks in the add-only \icfgame obtained by the best move dynamic for $n=1000$ over 50 runs starting from a random or segregated tree and grid.}
\label{plot:ICF_AddOnly_global}
\end{figure}
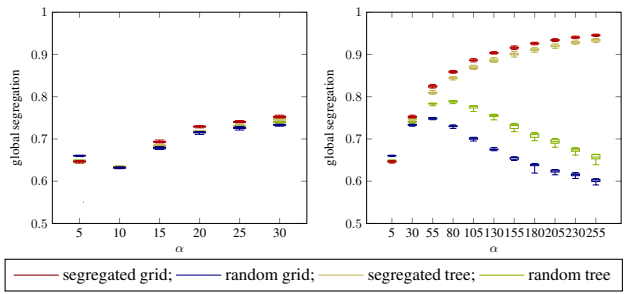

\subsubsection{Results for the \deigame}
The results for the global segregation measure for $1.01$-approximate networks in the \deigame and pairwise stable networks in the add-only \deigame are presented in \Cref{plot:DEINCG_global} and \Cref{plot:DEINCG_AddOnly_global}.
\begin{figure}[!ht]
\centering
\begin{minipage}{0.23\textwidth}
\resizebox {\textwidth} {!} {
\begin{tikzpicture}
\begin{axis}[
legend columns=-1,
legend entries={segregated grid;,random grid;,segregated tree;,random tree},
legend to name=named,
legend style={nodes={scale=0.65, transform shape}},% legend cell align=center},
xlabel= {$\alpha$},
ylabel= {global segregation},
boxplot/draw direction=y,
baseline,
xtick = {1,2,3, 4, 5, 6, 7, 8},
xticklabels = {5, 10, 15, 20, 25, 30, 35, 40},
ymin=0.5,
ymax=1
]
\addplot[red!50!black, domain=1.1:1.11]{0.55};
%\addlegendentry{fixed grid};
\addplot[blue!50!black, domain=1.1:1.11]{0.55};
%\addlegendentry{random grid};
\addplot[yellow!70!black, domain=1.1:1.11]{0.55};
%\addlegendentry{fixed tree};
\addplot[lime!70!black, domain=1.1:1.11]{0.55};
%\addlegendentry{random tree};

%rand tree
\addplot+ [color = lime!70!black,solid,boxplot prepared = {box extend=0.300000, draw position = 1, lower whisker = 0.598642, lower quartile = 0.605847, median = 0.610995, upper quartile = 0.618524, upper whisker = 0.625982},]coordinates{}; 
\addplot+ [color = lime!70!black,solid,boxplot prepared = {box extend=0.300000, draw position = 2, lower whisker = 0.637378, lower quartile = 0.649328, median = 0.655052, upper quartile = 0.659940, upper whisker = 0.667246},]coordinates{}; 
\addplot+ [color = lime!70!black,solid,boxplot prepared = {box extend=0.300000, draw position = 3, lower whisker = 0.680788, lower quartile = 0.684353, median = 0.689790, upper quartile = 0.695897, upper whisker = 0.705000},]coordinates{}; 
\addplot+ [color = lime!70!black,solid,boxplot prepared = {box extend=0.300000, draw position = 4, lower whisker = 0.697016, lower quartile = 0.713847, median = 0.720553, upper quartile = 0.723195, upper whisker = 0.735770},]coordinates{}; 
\addplot+ [color = lime!70!black,solid,boxplot prepared = {box extend=0.300000, draw position = 5, lower whisker = 0.728161, lower quartile = 0.737786, median = 0.747042, upper quartile = 0.752733, upper whisker = 0.762135},]coordinates{}; 
\addplot+ [color = lime!70!black,solid,boxplot prepared = {box extend=0.300000, draw position = 6, lower whisker = 0.750520, lower quartile = 0.764808, median = 0.774535, upper quartile = 0.781290, upper whisker = 0.785038},]coordinates{}; 
 
%segr tree
\addplot+ [color = yellow!70!black,solid,boxplot prepared = {box extend=0.300000, draw position = 1, lower whisker = 0.619507, lower quartile = 0.634352, median = 0.637027, upper quartile = 0.638455, upper whisker = 0.640711},]coordinates{}; 
\addplot+ [color = yellow!70!black,solid,boxplot prepared = {box extend=0.300000, draw position = 2, lower whisker = 0.686895, lower quartile = 0.693440, median = 0.695082, upper quartile = 0.696847, upper whisker = 0.697896},]coordinates{}; 
\addplot+ [color = yellow!70!black,solid,boxplot prepared = {box extend=0.300000, draw position = 3, lower whisker = 0.733468, lower quartile = 0.737979, median = 0.739228, upper quartile = 0.740204, upper whisker = 0.742288},]coordinates{}; 
\addplot+ [color = yellow!70!black,solid,boxplot prepared = {box extend=0.300000, draw position = 4, lower whisker = 0.760627, lower quartile = 0.769251, median = 0.774887, upper quartile = 0.777047, upper whisker = 0.778423},]coordinates{}; 
\addplot+ [color = yellow!70!black,solid,boxplot prepared = {box extend=0.300000, draw position = 5, lower whisker = 0.795031, lower quartile = 0.799918, median = 0.801567, upper quartile = 0.803217, upper whisker = 0.804052},]coordinates{}; 
\addplot+ [color = yellow!70!black,solid,boxplot prepared = {box extend=0.300000, draw position = 6, lower whisker = 0.818865, lower quartile = 0.821736, median = 0.823875, upper quartile = 0.826220, upper whisker = 0.826646},]coordinates{}; 

%fixed grid
\addplot+ [color = red!70!black,solid,boxplot prepared = {box extend=0.300000, draw position = 1, lower whisker = 0.685543, lower quartile = 0.689188, median = 0.690133, upper quartile = 0.692806, upper whisker = 0.694358},]coordinates{}; 
\addplot+ [color = red!70!black,solid,boxplot prepared = {box extend=0.300000, draw position = 2, lower whisker = 0.731772, lower quartile = 0.737377, median = 0.739848, upper quartile = 0.741032, upper whisker = 0.742491},]coordinates{}; 
\addplot+ [color = red!70!black,solid,boxplot prepared = {box extend=0.300000, draw position = 3, lower whisker = 0.773682, lower quartile = 0.778501, median = 0.781317, upper quartile = 0.782543, upper whisker = 0.783164},]coordinates{}; 
\addplot+ [color = red!70!black,solid,boxplot prepared = {box extend=0.300000, draw position = 4, lower whisker = 0.805967, lower quartile = 0.810447, median = 0.812542, upper quartile = 0.813686, upper whisker = 0.813953},]coordinates{}; 
\addplot+ [color = red!70!black,solid,boxplot prepared = {box extend=0.300000, draw position = 5, lower whisker = 0.829143, lower quartile = 0.831546, median = 0.835151, upper quartile = 0.836440, upper whisker = 0.837135},]coordinates{}; 
\addplot+ [color = red!70!black,solid,boxplot prepared = {box extend=0.300000, draw position = 6, lower whisker = 0.848888, lower quartile = 0.852472, median = 0.853565, upper quartile = 0.854623, upper whisker = 0.855541},]coordinates{}; 

%rand grid
\addplot+ [color = blue!50!black,solid,boxplot prepared = {box extend=0.300000, draw position = 1, lower whisker = 0.554730, lower quartile = 0.562611, median = 0.565151, upper quartile = 0.568360, upper whisker = 0.569819},]coordinates{}; 
\addplot+ [color = blue!50!black,solid,boxplot prepared = {box extend=0.300000, draw position = 2, lower whisker = 0.605972, lower quartile = 0.613832, median = 0.618300, upper quartile = 0.621822, upper whisker = 0.633006},]coordinates{}; 
\addplot+ [color = blue!50!black,solid,boxplot prepared = {box extend=0.300000, draw position = 3, lower whisker = 0.643008, lower quartile = 0.647145, median = 0.653228, upper quartile = 0.657115, upper whisker = 0.665526},]coordinates{}; 
\addplot+ [color = blue!50!black,solid,boxplot prepared = {box extend=0.300000, draw position = 4, lower whisker = 0.652384, lower quartile = 0.669450, median = 0.672480, upper quartile = 0.678194, upper whisker = 0.682833},]coordinates{}; 
\addplot+ [color = blue!50!black,solid,boxplot prepared = {box extend=0.300000, draw position = 5, lower whisker = 0.670041, lower quartile = 0.681806, median = 0.686957, upper quartile = 0.692654, upper whisker = 0.700480},]coordinates{}; 
\addplot+ [color = blue!50!black,solid,boxplot prepared = {box extend=0.300000, draw position = 6, lower whisker = 0.678233, lower quartile = 0.685877, median = 0.690606, upper quartile = 0.693104, upper whisker = 0.695684},]coordinates{}; 

\end{axis}
\end{tikzpicture}
}
\end{minipage}
\begin{minipage}{0.23\textwidth}
\resizebox {\textwidth} {!} {
\begin{tikzpicture}
\begin{axis}[
xlabel= {$\alpha$},
ylabel= {global segregation},
boxplot/draw direction=y,
baseline,
xtick = {1, 2, 3, ..., 11},
xticklabels  = {5, 30, 55, 80, 105, 130, 155, 180, 205, 230, 255},
ymin=0.5,
ymax=1
]

%rand tree
\addplot+ [color = lime!70!black,solid,boxplot prepared = {box extend=0.400000, draw position = 1, lower whisker = 0.598642, lower quartile = 0.605847, median = 0.610995, upper quartile = 0.618524, upper whisker = 0.625982},]coordinates{}; 
\addplot+ [color = lime!70!black,solid,boxplot prepared = {box extend=0.400000, draw position = 2, lower whisker = 0.750520, lower quartile = 0.764808, median = 0.774535, upper quartile = 0.781290, upper whisker = 0.785038},]coordinates{}; 
\addplot+ [color = lime!70!black,solid,boxplot prepared = {box extend=0.400000, draw position = 3, lower whisker = 0.847980, lower quartile = 0.853699, median = 0.859185, upper quartile = 0.861315, upper whisker = 0.865099},]coordinates{}; 
\addplot+ [color = lime!70!black,solid,boxplot prepared = {box extend=0.400000, draw position = 4, lower whisker = 0.896596, lower quartile = 0.900281, median = 0.903299, upper quartile = 0.906011, upper whisker = 0.910558},]coordinates{}; 
\addplot+ [color = lime!70!black,solid,boxplot prepared = {box extend=0.400000, draw position = 5, lower whisker = 0.930101, lower quartile = 0.933798, median = 0.935349, upper quartile = 0.936779, upper whisker = 0.940709},]coordinates{}; 
\addplot+ [color = lime!70!black,solid,boxplot prepared = {box extend=0.400000, draw position = 6, lower whisker = 0.942177, lower quartile = 0.943682, median = 0.945138, upper quartile = 0.947977, upper whisker = 0.952569},]coordinates{}; 
\addplot+ [color = lime!70!black,solid,boxplot prepared = {box extend=0.400000, draw position = 7, lower whisker = 0.949937, lower quartile = 0.951663, median = 0.953044, upper quartile = 0.955647, upper whisker = 0.957823},]coordinates{}; 
\addplot+ [color = lime!70!black,solid,boxplot prepared = {box extend=0.400000, draw position = 8, lower whisker = 0.955369, lower quartile = 0.957673, median = 0.959377, upper quartile = 0.961269, upper whisker = 0.964485},]coordinates{}; 
\addplot+ [color = lime!70!black,solid,boxplot prepared = {box extend=0.400000, draw position = 9, lower whisker = 0.960403, lower quartile = 0.962871, median = 0.964132, upper quartile = 0.965491, upper whisker = 0.966019},]coordinates{}; 
\addplot+ [color = lime!70!black,solid,boxplot prepared = {box extend=0.400000, draw position = 10, lower whisker = 0.964646, lower quartile = 0.965617, median = 0.967047, upper quartile = 0.968198, upper whisker = 0.971111},]coordinates{}; 
\addplot+ [color = lime!70!black,solid,boxplot prepared = {box extend=0.400000, draw position = 11, lower whisker = 0.966381, lower quartile = 0.968445, median = 0.970324, upper quartile = 0.971471, upper whisker = 0.972384},]coordinates{}; 

%segr tree
\addplot+ [color = yellow!70!black,solid,boxplot prepared = {box extend=0.400000, draw position = 1, lower whisker = 0.619507, lower quartile = 0.634352, median = 0.637027, upper quartile = 0.638455, upper whisker = 0.640711},]coordinates{}; 
\addplot+ [color = yellow!70!black,solid,boxplot prepared = {box extend=0.400000, draw position = 2, lower whisker = 0.818865, lower quartile = 0.821736, median = 0.823875, upper quartile = 0.826220, upper whisker = 0.826646},]coordinates{}; 
\addplot+ [color = yellow!70!black,solid,boxplot prepared = {box extend=0.400000, draw position = 3, lower whisker = 0.885703, lower quartile = 0.887951, median = 0.890556, upper quartile = 0.892070, upper whisker = 0.892499},]coordinates{}; 
\addplot+ [color = yellow!70!black,solid,boxplot prepared = {box extend=0.400000, draw position = 4, lower whisker = 0.915436, lower quartile = 0.917107, median = 0.918577, upper quartile = 0.919428, upper whisker = 0.920283},]coordinates{}; 
\addplot+ [color = yellow!70!black,solid,boxplot prepared = {box extend=0.400000, draw position = 5, lower whisker = 0.929860, lower quartile = 0.932720, median = 0.934273, upper quartile = 0.935248, upper whisker = 0.935817},]coordinates{}; 
\addplot+ [color = yellow!70!black,solid,boxplot prepared = {box extend=0.400000, draw position = 6, lower whisker = 0.941176, lower quartile = 0.942745, median = 0.943958, upper quartile = 0.944542, upper whisker = 0.945074},]coordinates{}; 
\addplot+ [color = yellow!70!black,solid,boxplot prepared = {box extend=0.400000, draw position = 7, lower whisker = 0.950372, lower quartile = 0.951226, median = 0.952006, upper quartile = 0.952753, upper whisker = 0.953144},]coordinates{}; 
\addplot+ [color = yellow!70!black,solid,boxplot prepared = {box extend=0.400000, draw position = 8, lower whisker = 0.955598, lower quartile = 0.957521, median = 0.958578, upper quartile = 0.958894, upper whisker = 0.959264},]coordinates{}; 
\addplot+ [color = yellow!70!black,solid,boxplot prepared = {box extend=0.400000, draw position = 9, lower whisker = 0.960000, lower quartile = 0.961774, median = 0.962584, upper quartile = 0.963112, upper whisker = 0.963710},]coordinates{}; 
\addplot+ [color = yellow!70!black,solid,boxplot prepared = {box extend=0.400000, draw position = 10, lower whisker = 0.962396, lower quartile = 0.964714, median = 0.965958, upper quartile = 0.966857, upper whisker = 0.967468},]coordinates{}; 
\addplot+ [color = yellow!70!black,solid,boxplot prepared = {box extend=0.400000, draw position = 11, lower whisker = 0.965543, lower quartile = 0.967869, median = 0.968587, upper quartile = 0.969272, upper whisker = 0.970117},]coordinates{}; 

%rand grid

\addplot+ [color = blue!50!black,solid,boxplot prepared = {box extend=0.400000, draw position = 1, lower whisker = 0.554730, lower quartile = 0.562611, median = 0.565151, upper quartile = 0.568360, upper whisker = 0.569819},]coordinates{}; 
\addplot+ [color = blue!50!black,solid,boxplot prepared = {box extend=0.400000, draw position = 2, lower whisker = 0.678233, lower quartile = 0.685877, median = 0.690606, upper quartile = 0.693104, upper whisker = 0.695684},]coordinates{}; 
\addplot+ [color = blue!50!black,solid,boxplot prepared = {box extend=0.400000, draw position = 3, lower whisker = 0.762827, lower quartile = 0.779407, median = 0.782863, upper quartile = 0.784582, upper whisker = 0.788092},]coordinates{}; 
\addplot+ [color = blue!50!black,solid,boxplot prepared = {box extend=0.400000, draw position = 4, lower whisker = 0.878512, lower quartile = 0.882069, median = 0.886451, upper quartile = 0.887102, upper whisker = 0.887805},]coordinates{}; 
\addplot+ [color = blue!50!black,solid,boxplot prepared = {box extend=0.400000, draw position = 5, lower whisker = 0.923282, lower quartile = 0.927321, median = 0.928754, upper quartile = 0.930149, upper whisker = 0.930960},]coordinates{}; 
\addplot+ [color = blue!50!black,solid,boxplot prepared = {box extend=0.400000, draw position = 6, lower whisker = 0.941377, lower quartile = 0.943547, median = 0.944292, upper quartile = 0.944840, upper whisker = 0.945334},]coordinates{}; 
\addplot+ [color = blue!50!black,solid,boxplot prepared = {box extend=0.400000, draw position = 7, lower whisker = 0.949666, lower quartile = 0.952703, median = 0.953205, upper quartile = 0.953931, upper whisker = 0.954128},]coordinates{}; 
\addplot+ [color = blue!50!black,solid,boxplot prepared = {box extend=0.400000, draw position = 8, lower whisker = 0.954516, lower quartile = 0.958530, median = 0.959369, upper quartile = 0.960040, upper whisker = 0.960464},]coordinates{}; 
\addplot+ [color = blue!50!black,solid,boxplot prepared = {box extend=0.400000, draw position = 9, lower whisker = 0.961349, lower quartile = 0.962718, median = 0.963147, upper quartile = 0.963485, upper whisker = 0.963804},]coordinates{}; 
\addplot+ [color = blue!50!black,solid,boxplot prepared = {box extend=0.400000, draw position = 10, lower whisker = 0.964260, lower quartile = 0.965877, median = 0.966212, upper quartile = 0.966737, upper whisker = 0.967119},]coordinates{}; 
\addplot+ [color = blue!50!black,solid,boxplot prepared = {box extend=0.400000, draw position = 11, lower whisker = 0.966102, lower quartile = 0.967905, median = 0.969191, upper quartile = 0.970144, upper whisker = 0.970268},]coordinates{}; 

%fixed grid
\addplot+ [color = red!70!black,solid,boxplot prepared = {box extend=0.400000, draw position = 1, lower whisker = 0.685543, lower quartile = 0.689188, median = 0.690133, upper quartile = 0.692806, upper whisker = 0.694358},]coordinates{}; 
\addplot+ [color = red!70!black,solid,boxplot prepared = {box extend=0.400000, draw position = 2, lower whisker = 0.848888, lower quartile = 0.852472, median = 0.853565, upper quartile = 0.854623, upper whisker = 0.855541},]coordinates{}; 
\addplot+ [color = red!70!black,solid,boxplot prepared = {box extend=0.400000, draw position = 3, lower whisker = 0.895161, lower quartile = 0.899094, median = 0.900034, upper quartile = 0.900383, upper whisker = 0.901413},]coordinates{}; 
\addplot+ [color = red!70!black,solid,boxplot prepared = {box extend=0.400000, draw position = 4, lower whisker = 0.915927, lower quartile = 0.918052, median = 0.918585, upper quartile = 0.919535, upper whisker = 0.919745},]coordinates{}; 
\addplot+ [color = red!70!black,solid,boxplot prepared = {box extend=0.400000, draw position = 5, lower whisker = 0.930186, lower quartile = 0.932889, median = 0.934338, upper quartile = 0.934863, upper whisker = 0.935912},]coordinates{}; 
\addplot+ [color = red!70!black,solid,boxplot prepared = {box extend=0.400000, draw position = 6, lower whisker = 0.939594, lower quartile = 0.942547, median = 0.944132, upper quartile = 0.944508, upper whisker = 0.944735},]coordinates{}; 
\addplot+ [color = red!70!black,solid,boxplot prepared = {box extend=0.400000, draw position = 7, lower whisker = 0.948910, lower quartile = 0.950877, median = 0.951820, upper quartile = 0.952505, upper whisker = 0.953312},]coordinates{}; 
\addplot+ [color = red!70!black,solid,boxplot prepared = {box extend=0.400000, draw position = 8, lower whisker = 0.953295, lower quartile = 0.957250, median = 0.958278, upper quartile = 0.958616, upper whisker = 0.958978},]coordinates{}; 
\addplot+ [color = red!70!black,solid,boxplot prepared = {box extend=0.400000, draw position = 9, lower whisker = 0.958305, lower quartile = 0.961434, median = 0.962451, upper quartile = 0.963231, upper whisker = 0.963624},]coordinates{}; 
\addplot+ [color = red!70!black,solid,boxplot prepared = {box extend=0.400000, draw position = 10, lower whisker = 0.963070, lower quartile = 0.965445, median = 0.966521, upper quartile = 0.966839, upper whisker = 0.967081},]coordinates{}; 
\addplot+ [color = red!70!black,solid,boxplot prepared = {box extend=0.400000, draw position = 11, lower whisker = 0.966892, lower quartile = 0.968306, median = 0.968978, upper quartile = 0.969604, upper whisker = 0.969919},]coordinates{}; 

\end{axis}
\end{tikzpicture}
}
\end{minipage}
\ref{named}
\caption{Global segregation of  $1.01$-approximate networks in the \deigame obtained by the best move dynamic for $n=1000$ over 50 runs starting from a random or segregated tree and grid.}
\label{plot:DEINCG_global}
\end{figure}
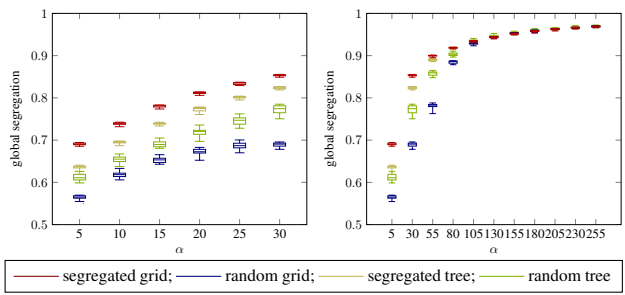

\begin{figure}[!ht]
\centering
\begin{minipage}{0.23\textwidth}
\resizebox {\textwidth} {!} {
\begin{tikzpicture}
\begin{axis}[
		legend columns=-1,
          legend entries={segregated grid;,random grid;,segregated tree;,random tree},
          legend to name=named,
		legend style={nodes={scale=0.65, transform shape}},% legend cell align=center},
		xlabel= {$\alpha$},
		ylabel= {global segregation},
		boxplot/draw direction=y,
		baseline,
		xtick = {1,2,3, 4, 5, 6, 7, 8},
		xticklabels = {5, 10, 15, 20, 25, 30, 35, 40},
		ymin=0.5,
    		ymax=1
	]
\addplot[red!50!black, domain=1.1:1.2]{0.55};
%\addlegendentry{fixed grid};
\addplot[blue!50!black, domain=1.1:1.2]{0.55};
%\addlegendentry{random grid};
\addplot[yellow!70!black, domain=1.1:1.2]{0.55};
%\addlegendentry{fixed tree};
\addplot[lime!70!black, domain=1.1:1.2]{0.55};
%\addlegendentry{random tree};

%rand tree
\addplot+ [color = lime!70!black,solid,boxplot prepared = {box extend=0.300000, draw position = 1, lower whisker = 0.555369, lower quartile = 0.558891, median = 0.560180, upper quartile = 0.561243, upper whisker = 0.563908},]coordinates{}; 
\addplot+ [color = lime!70!black,solid,boxplot prepared = {box extend=0.300000, draw position = 2, lower whisker = 0.618457, lower quartile = 0.620007, median = 0.622220, upper quartile = 0.623433, upper whisker = 0.624802},]coordinates{}; 
\addplot+ [color = lime!70!black,solid,boxplot prepared = {box extend=0.300000, draw position = 3, lower whisker = 0.712118, lower quartile = 0.714743, median = 0.715556, upper quartile = 0.716895, upper whisker = 0.720766},]coordinates{}; 
\addplot+ [color = lime!70!black,solid,boxplot prepared = {box extend=0.300000, draw position = 4, lower whisker = 0.750973, lower quartile = 0.758120, median = 0.761491, upper quartile = 0.762754, upper whisker = 0.765905},]coordinates{}; 
\addplot+ [color = lime!70!black,solid,boxplot prepared = {box extend=0.300000, draw position = 5, lower whisker = 0.776934, lower quartile = 0.778869, median = 0.780274, upper quartile = 0.781932, upper whisker = 0.784993},]coordinates{}; 
\addplot+ [color = lime!70!black,solid,boxplot prepared = {box extend=0.300000, draw position = 6, lower whisker = 0.793283, lower quartile = 0.797319, median = 0.799940, upper quartile = 0.802460, upper whisker = 0.804808},]coordinates{}; 

%fixed tree
\addplot+ [color = yellow!70!black,solid,boxplot prepared = {box extend=0.300000, draw position = 1, lower whisker = 0.563285, lower quartile = 0.566091, median = 0.567483, upper quartile = 0.568487, upper whisker = 0.569917},]coordinates{}; 
\addplot+ [color = yellow!70!black,solid,boxplot prepared = {box extend=0.300000, draw position = 2, lower whisker = 0.620602, lower quartile = 0.625859, median = 0.626939, upper quartile = 0.627776, upper whisker = 0.628408},]coordinates{}; 
\addplot+ [color = yellow!70!black,solid,boxplot prepared = {box extend=0.300000, draw position = 3, lower whisker = 0.725527, lower quartile = 0.729275, median = 0.730169, upper quartile = 0.730643, upper whisker = 0.730961},]coordinates{}; 
\addplot+ [color = yellow!70!black,solid,boxplot prepared = {box extend=0.300000, draw position = 4, lower whisker = 0.768838, lower quartile = 0.771239, median = 0.773151, upper quartile = 0.773508, upper whisker = 0.774078},]coordinates{}; 
\addplot+ [color = yellow!70!black,solid,boxplot prepared = {box extend=0.300000, draw position = 5, lower whisker = 0.792456, lower quartile = 0.795994, median = 0.796720, upper quartile = 0.797156, upper whisker = 0.798107},]coordinates{}; 
\addplot+ [color = yellow!70!black,solid,boxplot prepared = {box extend=0.300000, draw position = 6, lower whisker = 0.816313, lower quartile = 0.821333, median = 0.822564, upper quartile = 0.824935, upper whisker = 0.829101},]coordinates{};

%fixed grid
\addplot+ [color = red!70!black,solid,boxplot prepared = {box extend=0.3, draw position = 1, lower whisker = 0.566208, lower quartile = 0.567318, median = 0.568124, upper quartile = 0.568907, upper whisker = 0.570687},]coordinates{}; 
\addplot+ [color = red!70!black,solid,boxplot prepared = {box extend=0.300000, draw position = 2, lower whisker = 0.618915, lower quartile = 0.621888, median = 0.622884, upper quartile = 0.624826, upper whisker = 0.631517},]coordinates{}; 
\addplot+ [color = red!70!black,solid,boxplot prepared = {box extend=0.300000, draw position = 3, lower whisker = 0.735508, lower quartile = 0.740379, median = 0.741878, upper quartile = 0.744178, upper whisker = 0.746455},]coordinates{}; 
\addplot+ [color = red!70!black,solid,boxplot prepared = {box extend=0.300000, draw position = 4, lower whisker = 0.780539, lower quartile = 0.783212, median = 0.785151, upper quartile = 0.786485, upper whisker = 0.788365},]coordinates{}; 
\addplot+ [color = red!70!black,solid,boxplot prepared = {box extend=0.300000, draw position = 5, lower whisker = 0.801606, lower quartile = 0.803700, median = 0.804456, upper quartile = 0.805737, upper whisker = 0.809019},]coordinates{}; 
\addplot+ [color = red!70!black,solid,boxplot prepared = {box extend=0.3, draw position = 6, lower whisker = 0.820452, lower quartile = 0.826188, median = 0.828021, upper quartile = 0.830012, upper whisker = 0.834392},]coordinates{}; 

%rand grid
\addplot+ [color = blue!50!black,solid,boxplot prepared = {box extend=0.300000, draw position = 1, lower whisker = 0.548169, lower quartile = 0.550433, median = 0.552031, upper quartile = 0.553422, upper whisker = 0.555081},]coordinates{}; 
\addplot+ [color = blue!50!black,solid,boxplot prepared = {box extend=0.300000, draw position = 2, lower whisker = 0.603921, lower quartile = 0.605902, median = 0.607581, upper quartile = 0.608704, upper whisker = 0.610899},]coordinates{}; 
\addplot+ [color = blue!50!black,solid,boxplot prepared = {box extend=0.300000, draw position = 3, lower whisker = 0.697472, lower quartile = 0.699899, median = 0.701080, upper quartile = 0.703075, upper whisker = 0.706282},]coordinates{}; 
\addplot+ [color = blue!50!black,solid,boxplot prepared = {box extend=0.300000, draw position = 4, lower whisker = 0.743377, lower quartile = 0.749344, median = 0.750783, upper quartile = 0.752347, upper whisker = 0.755514},]coordinates{}; 
\addplot+ [color = blue!50!black,solid,boxplot prepared = {box extend=0.300000, draw position = 5, lower whisker = 0.764124, lower quartile = 0.765262, median = 0.765963, upper quartile = 0.766871, upper whisker = 0.771913},]coordinates{}; 
\addplot+ [color = blue!50!black,solid,boxplot prepared = {box extend=0.300000, draw position = 6, lower whisker = 0.773161, lower quartile = 0.774790, median = 0.776357, upper quartile = 0.778015, upper whisker = 0.781899},]coordinates{}; 

\end{axis}
\end{tikzpicture}
}
\end{minipage}
\begin{minipage}{0.23\textwidth}
\resizebox {\textwidth} {!} {
\begin{tikzpicture}
\begin{axis}[
		xlabel= {$\alpha$},
		ylabel= {gobal segregation},
		boxplot/draw direction=y,
		baseline,
		xtick = {1, 2, 3, ..., 11},
		xticklabels  = {5, 30, 55, 80, 105, 130, 155, 180, 205, 230, 255},
		ymin=0.5,
    		ymax=1
	]

%rand tree
\addplot+ [color = lime!70!black,solid,boxplot prepared = {box extend=0.400000, draw position = 1, lower whisker = 0.555369, lower quartile = 0.558891, median = 0.560180, upper quartile = 0.561243, upper whisker = 0.563908},]coordinates{}; 
\addplot+ [color = lime!70!black,solid,boxplot prepared = {box extend=0.400000, draw position = 2, lower whisker = 0.793283, lower quartile = 0.797319, median = 0.799940, upper quartile = 0.802460, upper whisker = 0.804808},]coordinates{}; 
\addplot+ [color = lime!70!black,solid,boxplot prepared = {box extend=0.400000, draw position = 3, lower whisker = 0.843137, lower quartile = 0.845548, median = 0.849969, upper quartile = 0.851932, upper whisker = 0.856298},]coordinates{}; 
\addplot+ [color = lime!70!black,solid,boxplot prepared = {box extend=0.400000, draw position = 4, lower whisker = 0.821528, lower quartile = 0.830390, median = 0.833881, upper quartile = 0.838717, upper whisker = 0.842546},]coordinates{}; 
\addplot+ [color = lime!70!black,solid,boxplot prepared = {box extend=0.400000, draw position = 5, lower whisker = 0.792276, lower quartile = 0.802391, median = 0.809716, upper quartile = 0.814168, upper whisker = 0.821493},]coordinates{}; 
\addplot+ [color = lime!70!black,solid,boxplot prepared = {box extend=0.400000, draw position = 6, lower whisker = 0.777211, lower quartile = 0.779875, median = 0.785021, upper quartile = 0.792376, upper whisker = 0.795044},]coordinates{}; 
\addplot+ [color = lime!70!black,solid,boxplot prepared = {box extend=0.400000, draw position = 7, lower whisker = 0.740272, lower quartile = 0.753139, median = 0.760667, upper quartile = 0.767869, upper whisker = 0.779459},]coordinates{}; 
\addplot+ [color = lime!70!black,solid,boxplot prepared = {box extend=0.400000, draw position = 8, lower whisker = 0.719481, lower quartile = 0.727773, median = 0.733615, upper quartile = 0.738441, upper whisker = 0.745088},]coordinates{}; 
\addplot+ [color = lime!70!black,solid,boxplot prepared = {box extend=0.400000, draw position = 9, lower whisker = 0.700972, lower quartile = 0.708975, median = 0.714120, upper quartile = 0.722729, upper whisker = 0.727011},]coordinates{}; 
\addplot+ [color = lime!70!black,solid,boxplot prepared = {box extend=0.400000, draw position = 10, lower whisker = 0.663342, lower quartile = 0.684921, median = 0.694003, upper quartile = 0.700018, upper whisker = 0.715808},]coordinates{}; 
\addplot+ [color = lime!70!black,solid,boxplot prepared = {box extend=0.400000, draw position = 11, lower whisker = 0.662798, lower quartile = 0.676964, median = 0.683529, upper quartile = 0.689232, upper whisker = 0.696417},]coordinates{}; 

%segr tree
\addplot+ [color = yellow!70!black,solid,boxplot prepared = {box extend=0.400000, draw position = 1, lower whisker = 0.563285, lower quartile = 0.566091, median = 0.567483, upper quartile = 0.568487, upper whisker = 0.569917},]coordinates{}; 
\addplot+ [color = yellow!70!black,solid,boxplot prepared = {box extend=0.400000, draw position = 2, lower whisker = 0.816313, lower quartile = 0.821333, median = 0.822564, upper quartile = 0.824935, upper whisker = 0.829101},]coordinates{}; 
\addplot+ [color = yellow!70!black,solid,boxplot prepared = {box extend=0.400000, draw position = 3, lower whisker = 0.890802, lower quartile = 0.894056, median = 0.895561, upper quartile = 0.898655, upper whisker = 0.900568},]coordinates{}; 
\addplot+ [color = yellow!70!black,solid,boxplot prepared = {box extend=0.400000, draw position = 4, lower whisker = 0.919444, lower quartile = 0.924177, median = 0.925766, upper quartile = 0.926843, upper whisker = 0.930147},]coordinates{}; 
\addplot+ [color = yellow!70!black,solid,boxplot prepared = {box extend=0.400000, draw position = 5, lower whisker = 0.936903, lower quartile = 0.939382, median = 0.941224, upper quartile = 0.942092, upper whisker = 0.944974},]coordinates{}; 
\addplot+ [color = yellow!70!black,solid,boxplot prepared = {box extend=0.400000, draw position = 6, lower whisker = 0.949620, lower quartile = 0.951472, median = 0.952032, upper quartile = 0.953293, upper whisker = 0.954288},]coordinates{}; 
\addplot+ [color = yellow!70!black,solid,boxplot prepared = {box extend=0.400000, draw position = 7, lower whisker = 0.953961, lower quartile = 0.957904, median = 0.958940, upper quartile = 0.960256, upper whisker = 0.961936},]coordinates{}; 
\addplot+ [color = yellow!70!black,solid,boxplot prepared = {box extend=0.400000, draw position = 8, lower whisker = 0.958107, lower quartile = 0.963681, median = 0.965348, upper quartile = 0.966459, upper whisker = 0.968750},]coordinates{}; 
\addplot+ [color = yellow!70!black,solid,boxplot prepared = {box extend=0.400000, draw position = 9, lower whisker = 0.965373, lower quartile = 0.966558, median = 0.967900, upper quartile = 0.969290, upper whisker = 0.971649},]coordinates{}; 
\addplot+ [color = yellow!70!black,solid,boxplot prepared = {box extend=0.400000, draw position = 10, lower whisker = 0.969180, lower quartile = 0.970798, median = 0.972015, upper quartile = 0.972525, upper whisker = 0.974757},]coordinates{}; 
\addplot+ [color = yellow!70!black,solid,boxplot prepared = {box extend=0.400000, draw position = 11, lower whisker = 0.970490, lower quartile = 0.972720, median = 0.973694, upper quartile = 0.974898, upper whisker = 0.977369},]coordinates{};

%rand grid
\addplot+ [color = blue!50!black,solid,boxplot prepared = {box extend=0.400000, draw position = 1, lower whisker = 0.548169, lower quartile = 0.550433, median = 0.552031, upper quartile = 0.553422, upper whisker = 0.555081},]coordinates{}; 
\addplot+ [color = blue!50!black,solid,boxplot prepared = {box extend=0.400000, draw position = 2, lower whisker = 0.773161, lower quartile = 0.774790, median = 0.776357, upper quartile = 0.778015, upper whisker = 0.781899},]coordinates{}; 
\addplot+ [color = blue!50!black,solid,boxplot prepared = {box extend=0.400000, draw position = 3, lower whisker = 0.746476, lower quartile = 0.760968, median = 0.763386, upper quartile = 0.765521, upper whisker = 0.775528},]coordinates{}; 
\addplot+ [color = blue!50!black,solid,boxplot prepared = {box extend=0.400000, draw position = 4, lower whisker = 0.718659, lower quartile = 0.727406, median = 0.730791, upper quartile = 0.734523, upper whisker = 0.740898},]coordinates{}; 
\addplot+ [color = blue!50!black,solid,boxplot prepared = {box extend=0.400000, draw position = 5, lower whisker = 0.676443, lower quartile = 0.690478, median = 0.693887, upper quartile = 0.699680, upper whisker = 0.714724},]coordinates{}; 
\addplot+ [color = blue!50!black,solid,boxplot prepared = {box extend=0.400000, draw position = 6, lower whisker = 0.656767, lower quartile = 0.660630, median = 0.666103, upper quartile = 0.674679, upper whisker = 0.677540},]coordinates{}; 
\addplot+ [color = blue!50!black,solid,boxplot prepared = {box extend=0.400000, draw position = 7, lower whisker = 0.626040, lower quartile = 0.645532, median = 0.648734, upper quartile = 0.654664, upper whisker = 0.664150},]coordinates{}; 
\addplot+ [color = blue!50!black,solid,boxplot prepared = {box extend=0.400000, draw position = 8, lower whisker = 0.605042, lower quartile = 0.618763, median = 0.624476, upper quartile = 0.631431, upper whisker = 0.640987},]coordinates{}; 
\addplot+ [color = blue!50!black,solid,boxplot prepared = {box extend=0.400000, draw position = 9, lower whisker = 0.598729, lower quartile = 0.604790, median = 0.611134, upper quartile = 0.622651, upper whisker = 0.629351},]coordinates{}; 
\addplot+ [color = blue!50!black,solid,boxplot prepared = {box extend=0.400000, draw position = 10, lower whisker = 0.585084, lower quartile = 0.594523, median = 0.599549, upper quartile = 0.608909, upper whisker = 0.614368},]coordinates{}; 
\addplot+ [color = blue!50!black,solid,boxplot prepared = {box extend=0.400000, draw position = 11, lower whisker = 0.571188, lower quartile = 0.589383, median = 0.591983, upper quartile = 0.595283, upper whisker = 0.602430},]coordinates{};

%fixed grid
\addplot+ [color = red!70!black,solid,boxplot prepared = {box extend=0.400000, draw position = 1, lower whisker = 0.566208, lower quartile = 0.567318, median = 0.568124, upper quartile = 0.568907, upper whisker = 0.570687},]coordinates{}; 
\addplot+ [color = red!70!black,solid,boxplot prepared = {box extend=0.400000, draw position = 2, lower whisker = 0.820452, lower quartile = 0.826188, median = 0.828021, upper quartile = 0.830012, upper whisker = 0.834392},]coordinates{}; 
\addplot+ [color = red!70!black,solid,boxplot prepared = {box extend=0.400000, draw position = 3, lower whisker = 0.900633, lower quartile = 0.903005, median = 0.904333, upper quartile = 0.905323, upper whisker = 0.907643},]coordinates{}; 
\addplot+ [color = red!70!black,solid,boxplot prepared = {box extend=0.400000, draw position = 4, lower whisker = 0.931459, lower quartile = 0.932406, median = 0.933284, upper quartile = 0.933858, upper whisker = 0.934330},]coordinates{}; 
\addplot+ [color = red!70!black,solid,boxplot prepared = {box extend=0.400000, draw position = 5, lower whisker = 0.943384, lower quartile = 0.947013, median = 0.948001, upper quartile = 0.948829, upper whisker = 0.951580},]coordinates{}; 
\addplot+ [color = red!70!black,solid,boxplot prepared = {box extend=0.400000, draw position = 6, lower whisker = 0.954936, lower quartile = 0.957321, median = 0.957805, upper quartile = 0.958484, upper whisker = 0.959683},]coordinates{}; 
\addplot+ [color = red!70!black,solid,boxplot prepared = {box extend=0.400000, draw position = 7, lower whisker = 0.961117, lower quartile = 0.962850, median = 0.963688, upper quartile = 0.965038, upper whisker = 0.967379},]coordinates{}; 
\addplot+ [color = red!70!black,solid,boxplot prepared = {box extend=0.400000, draw position = 8, lower whisker = 0.967460, lower quartile = 0.967774, median = 0.968109, upper quartile = 0.969484, upper whisker = 0.970459},]coordinates{}; 
\addplot+ [color = red!70!black,solid,boxplot prepared = {box extend=0.400000, draw position = 9, lower whisker = 0.969784, lower quartile = 0.971280, median = 0.971753, upper quartile = 0.972308, upper whisker = 0.973196},]coordinates{}; 
\addplot+ [color = red!70!black,solid,boxplot prepared = {box extend=0.400000, draw position = 10, lower whisker = 0.971754, lower quartile = 0.973529, median = 0.974316, upper quartile = 0.975267, upper whisker = 0.976007},]coordinates{}; 
\addplot+ [color = red!70!black,solid,boxplot prepared = {box extend=0.400000, draw position = 11, lower whisker = 0.974459, lower quartile = 0.975689, median = 0.976123, upper quartile = 0.976568, upper whisker = 0.978742},]coordinates{}; 

\end{axis}
\end{tikzpicture}
}
\end{minipage}
\\

\ref{named}
\caption{Global segregation of pairwise stable networks in the add-only \deigame obtained by the best move dynamic for $n=1000$ over 50 runs starting from a random or segregated tree and grid.}
\label{plot:DEINCG_AddOnly_global}
\end{figure}

\end{document}